\documentclass[nolineno, a4paper,UKenglish,cleveref, autoref, thm-restate]{socg-lipics-v2021}
\title{T-REX: Fast and Dynamic Journey Planning for Continental-Scale Public Transit Networks}

\titlerunning{T-REX: Journey Planning for Continental-Scale Public Transit Networks}

\author{Jonas Sauer}{Karlsruhe Institute of Technology, Germany}{jonas.sauer@kit.edu}{https://orcid.org/0000-0002-7196-7468}{}
\author{Patrick Steil}{Karlsruhe Institute of Technology, Germany}{patrick@steil.dev}{https://orcid.org/0000-0003-3282-4533}{}
\author{Sascha Witt}{Karlsruhe Institute of Technology, Germany}{sascha.witt@kit.edu}{https://orcid.org/0000-0002-7867-3200}{}

\authorrunning{J.~Sauer, P.~Steil and S.~Witt} %

\Copyright{Jonas Sauer, Patrick Steil and Sascha Witt} %

\ccsdesc[500]{Theory of computation~Shortest paths}
\ccsdesc[300]{Mathematics of computing~Graph algorithms}
\ccsdesc[100]{Applied computing~Transportation}

\keywords{Public transit routing, graph algorithms, algorithm engineering}

\supplement{Code: \url{https://github.com/PatrickSteil/TREX}; Datasets:~\cite{sauer_2026_19697732}}

\acknowledgements{We thank Christian Schulz for providing us with the original implementation of KaHIP, Lars Gottesbüren and Nikolai Maas for advice on configuring Mt-KaHyPar, Ben Strasser for providing us with the original implementations of CSA and ACSA, and Patrick Brosi and Transitous~\cite{transitousTransitous} for supplying the Germany and Europe datasets used in our experiments. This work is partially based on the master's thesis by Steil~\cite{Ste25}.
}

\hideLIPIcs
\usepackage[all]{nowidow}

\usepackage{pgfplots}
\usetikzlibrary{pgfplots.groupplots}
\usepgfplotslibrary{statistics}
\usetikzlibrary{patterns}
\pgfplotsset{compat=1.18}

\usepackage{amsthm}
\usepackage{amssymb}
\usepackage{xspace}
\usepackage{cancel}

\newcommand{\ie}{i.e.,\xspace~}
\newcommand{\eg}{e.g.,\xspace~}
\newcommand{\cf}{cf.\xspace~}
\newcommand{\etal}{et~al.\xspace~}

\newcommand{\printTime}[3]{%
    \num[minimum-integer-digits=2]{#1}:\num[minimum-integer-digits=2]{#2}:\num[minimum-integer-digits=2]{#3}%
}

\newcommand{\printMinuteSeconds}[2]{%
    \num[minimum-integer-digits=2]{#1}:\num[minimum-integer-digits=2]{#2}%
}

\usepackage{numprint}
\npdecimalsign{.} %
\usepackage{makeidx}
\usepackage{float}
\usepackage{placeins}
\usepackage{color}
\usepackage{hyperref}
\usepackage{xspace}
\usepackage{booktabs}
\usepackage{mathtools}
\usepackage[nice]{nicefrac}
\usepackage{listings}
\usepackage{caption}
\usepackage{subcaption}
\usepackage{makecell} %
\usepackage{upgreek} %
\usepackage{siunitx}
\usepackage{multirow}
\usepackage{graphicx}
\usepackage{niceAlgorithm}

\usepackage{colortbl}
\usepackage{xcolor}
\newcommand{\highlightCell}{\cellcolor{gray!15}}

\usepackage[scaled]{inconsolata}

\newcommand{\cpp}{C\raisebox{0.15ex}{\small++}\xspace}

\newcommand{\absoluteVal}[1]{\left\vert #1 \right\vert}

\newcommand{\atime}{\ensuremath{\tau}\xspace}
\newcommand{\departureTime}{\atime_\mathrm{dep}}
\newcommand{\arrivalTime}{\atime_\mathrm{arr}}
\newcommand{\depTime}[1]{\atime_\mathrm{dep} \left( #1 \right)}
\newcommand{\arrTime}[1]{\atime_\mathrm{arr} \left( #1 \right)}

\newcommand{\graph}{G}
\newcommand{\vertices}{V}
\newcommand{\edges}{E}

\newcommand{\vertexA}{u}
\newcommand{\vertexB}{v}

\newcommand{\stops}{\mathcal{S}}
\newcommand{\aStop}{p}
\newcommand{\stopA}{p}
\newcommand{\stopB}{q}

\newcommand{\sourceStop}{\aStop_s}
\newcommand{\targetStop}{\aStop_t}
\newcommand{\stopEvents}{\mathcal{E}}
\newcommand{\stopEvent}{\varepsilon}
\newcommand{\sourceStopEvent}{\stopEvent_{\mathrm{s}}}
\newcommand{\targetStopEvent}{\stopEvent_{\mathrm{t}}}

\newcommand{\lines}{\mathcal{L}}
\newcommand{\aLine}{L}
\newcommand{\trips}{\mathcal{T}}
\newcommand{\aTrip}{T}
\newcommand{\tripA}{T_a}
\newcommand{\tripB}{T_b}
\newcommand{\tripC}{T_c}

\newcommand{\tripSegment}[3]{#1[#2,#3]}
\newcommand{\footpaths}{\mathcal{F}}

\newcommand{\footpath}{f}
\newcommand{\transfertime}[2]{\atime_{\mathrm{f}}( #1, #2 )}
\newcommand{\query}{Q}

\newcommand{\crit}{c}

\newcommand{\paretoRep}{\mathfrak{J}}

\newcommand{\journeys}{\mathcal{J}}
\newcommand{\aJourney}{J}
\newcommand{\anEventSubJourney}{\aJourney_{\stopEvents}}
\newcommand{\eventSubjourney}[2]{\aJourney[#1,#2]}

\newcommand{\subjourney}[2]{\aJourney[#1,#2]}
\newcommand{\subjourneyAlt}[2]{\aJourney'[#1,#2]}
\newcommand{\transfers}{\mathfrak{T}}
\newcommand{\aTransfer}{\mathfrak{t}}
\newcommand{\transfer}[2]{(#1,#2)}

\newcommand{\layoutGraph}{\graph_L}
\newcommand{\layoutEdges}{\edges_L}
\newcommand{\layoutVertices}{\vertices_L}

\newcommand{\labels}{\ensuremath{\mathcal{L}}\xspace}
\newcommand{\aQueue}{\ensuremath{Q}\xspace}
\newcommand{\reachedIndex}{R}

\newcommand{\minTime}{\ensuremath{\atime_{\textsf{min}}}\xspace}
\SetKwFunction{Enqueue}{Enqueue}
\SetKwFunction{Scan}{Scan}
\SetKwFunction{Split}{Split}
\SetKwFunction{InSameCell}{InSameCell}
\SetKwFunction{Smaller}{operator$<$}
\SetKwFunction{Customize}{Customize}
\SetKwFunction{ProfileQuery}{ProfileQuery}
\SetKwFunction{Reduce}{Reduce}

\SetKwFunction{PrunedBFS}{PrunedBFS}
\SetKwFunction{Preprocess}{Preprocess}
\SetKwFunction{TopoSweep}{TopoSweep}
\SetKwFunction{prune}{prune}
\SetKwFunction{continue}{continue}
\SetKwFunction{seen}{seen}

\newcommand{\tiebreakingSequence}{X}
\newcommand{\tiebreakingSequencePartial}{\tilde{X}}

\newcommand{\lineID}{\ensuremath{\text{ID}_\lines}\xspace}

\newcommand{\transferID}{\ensuremath{\text{ID}_\transfers}\xspace}
\newcommand{\footpathID}{\ensuremath{\text{ID}_\footpaths}\xspace}

\newcommand{\allCells}{\mathcal{C}}
\newcommand{\bipartition}{\mathcal{C}}
\newcommand{\cell}{C}
\newcommand{\numLevels}{K}
\newcommand{\aLevel}{\ell}
\newcommand{\cellId}[1]{c_{\mathrm{ID}}\left( #1 \right)}
\newcommand{\cellIdL}[2]{c_{\mathrm{ID}}\left( #1, #2 \right)}
\newcommand{\rank}[1]{r\left( #1 \right)}
\newcommand{\lcl}[2]{\mathrm{LCL}\left( #1, #2 \right)}
\newcommand{\successor}[2]{\mathrm{succ}(#1,#2)}

\SetKwFunction{Query}{Query}
\SetKwFunction{FirstReachConn}{FirstReachConn}

\newcommand{\shift}{\gg}

\usepackage{xcolor}
\definecolor{KITgreen}          {rgb}{0,    0.588,0.509}
\definecolor{KITgreen70}        {rgb}{0.3,  0.711,0.656}
\definecolor{KITgreen50}        {rgb}{0.5,  0.794,0.754}
\definecolor{KITgreen30}        {rgb}{0.7,  0.876,0.852}
\definecolor{KITgreen15}        {rgb}{0.85, 0.938,0.926}
\definecolor{KITblue}           {rgb}{0.274,0.392,0.666}
\definecolor{KITblue70}         {rgb}{0.492,0.574,0.766}
\definecolor{KITblue50}         {rgb}{0.637,0.696,0.833}
\definecolor{KITblue30}         {rgb}{0.782,0.817,0.9}
\definecolor{KITblue15}         {rgb}{0.891,0.908,0.95}
\definecolor{KITpalegreen}      {rgb}{0.509,0.745,0.235}
\definecolor{KITpalegreen70}    {rgb}{0.656,0.821,0.464}
\definecolor{KITpalegreen50}    {rgb}{0.754,0.872,0.617}
\definecolor{KITpalegreen30}    {rgb}{0.852,0.923,0.77}
\definecolor{KITpalegreen15}    {rgb}{0.926,0.961,0.885}
\definecolor{KITyellow}         {rgb}{0.98, 0.901,0.078}
\definecolor{KITyellow70}       {rgb}{0.986,0.931,0.354}
\definecolor{KITyellow50}       {rgb}{0.99, 0.95, 0.539}
\definecolor{KITyellow30}       {rgb}{0.994,0.97, 0.723}
\definecolor{KITyellow15}       {rgb}{0.997,0.985,0.861}
\definecolor{KITorange}         {rgb}{0.862,0.627,0.117}
\definecolor{KITorange70}       {rgb}{0.903,0.739,0.382}
\definecolor{KITorange50}       {rgb}{0.931,0.813,0.558}
\definecolor{KITorange30}       {rgb}{0.958,0.888,0.735}
\definecolor{KITorange15}       {rgb}{0.979,0.944,0.867}
\definecolor{KITbrown}          {rgb}{0.627,0.509,0.196}
\definecolor{KITbrown70}        {rgb}{0.739,0.656,0.437}
\definecolor{KITbrown50}        {rgb}{0.813,0.754,0.598}
\definecolor{KITbrown30}        {rgb}{0.888,0.852,0.758}
\definecolor{KITbrown15}        {rgb}{0.944,0.926,0.879}
\definecolor{KITred}            {rgb}{0.627,0.117,0.156}
\definecolor{KITred70}          {rgb}{0.739,0.382,0.409}
\definecolor{KITred50}          {rgb}{0.813,0.558,0.578}
\definecolor{KITred30}          {rgb}{0.888,0.735,0.747}
\definecolor{KITred15}          {rgb}{0.944,0.867,0.873}
\definecolor{KITlilac}          {rgb}{0.627,0,    0.47}
\definecolor{KITlilac70}        {rgb}{0.739,0.3,  0.629}
\definecolor{KITlilac50}        {rgb}{0.813,0.5,  0.735}
\definecolor{KITlilac30}        {rgb}{0.888,0.7,  0.841}
\definecolor{KITlilac15}        {rgb}{0.944,0.85, 0.92}
\definecolor{KITcyanblue}       {rgb}{0.313,0.666,0.901}
\definecolor{KITcyanblue70}     {rgb}{0.519,0.766,0.931}
\definecolor{KITcyanblue50}     {rgb}{0.656,0.833,0.95}
\definecolor{KITcyanblue30}     {rgb}{0.794,0.9,  0.97}
\definecolor{KITcyanblue15}     {rgb}{0.897,0.95, 0.985}
\definecolor{KITseablue}        {rgb}{0.196,0.313,0.549}
\definecolor{KITseablue70}      {rgb}{0.437,0.519,0.684}
\definecolor{KITseablue50}      {rgb}{0.598,0.656,0.774}
\definecolor{KITseablue30}      {rgb}{0.758,0.794,0.864}
\definecolor{KITseablue15}      {rgb}{0.879,0.897,0.932}
\definecolor{KITblack}          {rgb}{0,    0,    0}
\definecolor{KITblack90}        {rgb}{0.1,  0.1,  0.1}
\definecolor{KITblack80}        {rgb}{0.2,  0.2,  0.3}
\definecolor{KITblack75}        {rgb}{0.25, 0.25, 0.25}
\definecolor{KITblack70}        {rgb}{0.3,  0.3,  0.3}
\definecolor{KITblack60}        {rgb}{0.4,  0.4,  0.4}
\definecolor{KITblack50}        {rgb}{0.5,  0.5,  0.5}
\definecolor{KITblack40}        {rgb}{0.6,  0.6,  0.6}
\definecolor{KITblack30}        {rgb}{0.7,  0.7,  0.7}
\definecolor{KITblack25}        {rgb}{0.75, 0.75, 0.75}
\definecolor{KITblack20}        {rgb}{0.8,  0.8,  0.8}
\definecolor{KITblack10}        {rgb}{0.9,  0.9,  0.9}
\definecolor{KITwhite}          {rgb}{1,    1,    1}

\usepackage{tikz}
\usepackage{pgfplots}
\usepackage{colortbl}
\usepgfplotslibrary{groupplots}
\usetikzlibrary{positioning,shapes,arrows,fit,backgrounds}
\pgfplotsset{compat=newest}
\pgfkeys{/pgf/number format/.cd, set thousands separator={\,}}

\colorlet{nodeColor}{black!80}
\colorlet{edgeColor}{black!50}
\colorlet{axisColor}{black!80}
\colorlet{legendColor}{black!80}

\newcommand{\gs}{\hphantom{\tiny$\cdot$}}

\tikzstyle{vertex}=[circle,line width=.5pt,minimum size=0.1pt]
\tikzstyle{arrow}=[->, >=stealth]
\tikzstyle{route}=[arrow, line width=2.5pt, rounded corners = 20]
\tikzstyle{edge}=[edgeColor, line width=1pt, rounded corners = 20]
\tikzstyle{directedEdge}=[edge, arrow]
\tikzstyle{stop}=[rectangle, line width=.5pt, inner sep=5pt, draw=nodeColor!100,fill=nodeColor!15,rounded corners=0.1cm]

\pgfplotsset{
    axis line style={axisColor,line width = 0.4pt},
    major tick style={axisColor,line width = 0.4pt},
    minor tick style={axisColor,line width = 0.2pt},
    major tick length=3.5pt,
    minor tick length=2.0pt,
    ytick align=outside,
    xtick align=outside,
    xtick pos=left,
    ytick pos=left,
    ticklabel style = {font=\small},
    label style = {font=\small},
}

\EventEditors{John Q. Open and Joan R. Access}
\EventNoEds{2}
\EventLongTitle{42nd Conference on Very Important Topics (CVIT 2016)}
\EventShortTitle{CVIT 2016}
\EventAcronym{CVIT}
\EventYear{2016}
\EventDate{December 24--27, 2016}
\EventLocation{Little Whinging, United Kingdom}
\EventLogo{}
\SeriesVolume{42}
\ArticleNo{23}

\begin{document}

\maketitle

\begin{abstract}
We present T-REX (Transfer-Ranked EXploration), a new algorithm for journey planning in public transit networks on the country and continental scale.
Our algorithm applies the principles of multi-level overlays to Trip-Based Public Transit Routing~(TB).
Using a multi-level partition of the network, T-REX identifies transfers between trips that are relevant for long-distance travel in a short precomputation phase.
This information is then used to prune irrelevant local transfers during a query.
Like other state-of-the-art algorithms, T-REX Pareto-optimizes arrival time and the number of used trips.
T-REX dramatically outperforms previous overlay-based algorithms for three key reasons: (1) a better partition, (2) reducing the search space by focusing on transfers rather than trips, and (3) a redesigned query algorithm with improved memory efficiency and throughput.
As a result, T-REX answers queries in less than~$\numprint{10}\,\si{\milli\second}$ on a network of Europe, including local and long-distance transit.
This constitutes a speedup of~$\numprint{20}$ compared to TB and~$\numprint{80}$ compared to algorithms without preprocessing.
The memory footprint is moderate and the precomputation takes only two minutes, while real-time schedule updates can be incorporated in a few seconds.
These properties make T-REX the first public transit journey planning algorithm that fulfills the requirements of interactive real-time applications on the continental scale.
\end{abstract}

\section{Introduction}
	\label{ch:intro}

    We encounter journey planning in public transport in our daily lives, whether on the morning commute to work or on vacation across countries.
    Interactive server-based applications, such as Google Maps, receive thousands of requests per second.
    Handling them in a timely fashion requires algorithms that produce accurate results within milliseconds.
    Whereas on road networks it is often sufficient to minimize travel time as a single objective, travelers using public transport are typically also interested in comfort criteria, most importantly the number of different vehicles (trips) used.
    Therefore, it is common practice to compute Pareto-optimal journeys in terms of arrival time and number of trips, minimizing both criteria~\cite{Bas16b}.

    Because Dijkstra's algorithm~\cite{Dij59} is too slow for interactive applications, \emph{speedup techniques} precompute auxiliary information that is used to accelerate queries.
    Their performance is evaluated according to several metrics, including query time, memory consumption, precomputation time and the ability to handle real-time information (\eg traffic or delays).
    For road networks, several techniques perform well in all four metrics, even on the continental scale~\cite{Del17,Dib16,Bla25}.
    On public transit networks, the situation is less satisfactory.
    Here, classic speedup techniques appear to be much less effective~\cite{Bau07,Ber09,Bas09}.
    On the other hand, progress has been made by designing new query algorithms that are tailored to the special structure of public transit networks and modern CPU architectures~\cite{Dib18,Del15b,Wit15}.
    On metropolitan networks, these algorithms are fast enough for interactive applications and support real-time updates, but their query speed is not sufficient for the continental scale.
    Moreover, integrating them with speedup techniques is challenging.
    Previous attempts have either caused the already low speedups to degrade further~\cite{Del17b,Dib18,Aga22} or led to a blowup in the preprocessing effort, such that processing real-time updates becomes infeasible~\cite{Bas10,Bas16,Wit16,Gro25}.
    
	\subparagraph*{Contribution.}
    We present T-REX (Transfer-Ranked EXploration), a public transit journey planning algorithm that achieves interactive query times even on continent-sized networks, with moderate preprocessing time and space.
    T-REX extends Trip-Based Public Transit Routing~(TB)~\cite{Wit15}, a cache-efficient modern query algorithm, by adapting the \emph{multi-level overlays}~(MLO) speedup technique~\cite{Hol08,Del17}.
    It uses a hierarchical nested partition of the network to determine which transfers between trips are required for long-distance travel. 
    During a query, unimportant transfers are ignored.
    Three key improvements allow T-REX to achieve much faster query times than previous partitioning-based algorithms~\cite{Sch02,Del17b,Dib18,Aga22}: (1) It obtains a better multi-level partition from a state-of-the-art graph partitioner. (2) The auxiliary information is computed for transfers rather than trips, which is more effective. (3) The query algorithm is carefully redesigned to exploit the pruning information in a cache-efficient manner.\looseness=-1

    We benchmark T-REX against state-of-the-art algorithms on networks ranging from metropolitan areas to a network covering most of Europe.
    With over $\numprint{1,3}$ million stops and over $\numprint{100}$ million events, our Europe instance is, to our knowledge, the largest network used in a scientific evaluation of journey planning algorithms so far.
    On this instance, T-REX answers queries in less than $\numprint{10}\,\si{\milli\second}$, requiring two minutes of precomputation time and $\numprint{28}\,\si{\giga\byte}$ of data (including $\numprint{1.8}\,\si{\giga\byte}$ for the original timetable).
    This is a speedup of~$\numprint{20}$ over TB and~$\numprint{80}$ over RAPTOR~\cite{Del15b}, currently the fastest Pareto-optimal algorithm without substantial preprocessing.
    Moreover, T-REX can incorporate real-time updates quickly without re-running the entire preprocessing.
    Hence, it is the first public transit journey planning algorithm that satisfies all the criteria for an interactive application on continental-scale networks.\looseness=-1
    
	\subparagraph*{Outline.}
    \Cref{ch:relatedwork} reviews related work in journey planning.
    In~\Cref{ch:fundamentals}, we introduce notation and outline the TB algorithm.
    We then present T-REX in~\Cref{ch:trex} and conduct an experimental evaluation in \Cref{ch:experiments}.
    Finally, Section~\ref{ch:conclusion} summarizes our findings.

	\section{Related Work}
	\label{ch:relatedwork}	
	We mainly consider algorithms that Pareto-optimize arrival time and the number of used trips.
    For a survey of routing techniques published prior to 2016, we refer the reader to~\cite{Bas16}.

    \subparagraph*{Basic Algorithms.}
    Traditional approaches model the timetable as a graph and apply variants of Dijkstra's algorithm~\cite{Dij59,Bro04,Mue07b,Pyr08}.
    More recent algorithms achieve faster query times via cache-friendly scan operations.
    Connection Scan Algorithm~(CSA)~\cite{Dib18} optimizes the arrival time with a single sweep over all elementary connections, but does not support Pareto optimization.
	RAPTOR~\cite{Del15b} is a modified breadth-first search that scans each discovered line in a single sweep.
    Trip-Based Public Transit Routing~(TB)~\cite{Wit15} speeds up this approach by precalculating a set of relevant transfers between trips.
    This allows the search to bypass the step of finding the earliest reachable trip of a line.

    \subparagraph*{Speedup Techniques.}
    Classic speedup techniques for road networks include goal-directed approaches, such as Arc-Flags~\cite{Hil09,Lau09,Moe06} and $\mathrm{A}^{\ast}$ search~\cite{Har68,Gol05}, and hierarchical techniques.
    Among the latter, \emph{multi-level overlays}~(MLO)~\cite{Hol08} recursively select a set of ``important'' vertices and add distance-preserving \emph{shortcuts} between them to create a hierarchy of \emph{overlays}.
    Intuitively, when the search is ``far away'' from the source and the target, it moves upwards in the hierarchy and bypasses ``unimportant'' vertices.
    MLOs can be built via vertex contraction~\cite{Hol08} or a multi-level partition~\cite{Jun02,Del17}.
    In the latter case, the ``important'' vertices are the boundary vertices of the cells in the partition.
    An evolution of MLO are Contraction Hierarchies~(CH)~\cite{Gei12}, which rely on a complete ranking of the vertices by ``importance''.
    
    Because road networks have small and balanced separators~\cite{Sch15, Got19, Ham18}, multilevel partitions can be found without considering the edge weights.
    This enables \emph{customizable} techniques, in which the preprocessing is mostly metric-independent and the edge weights are incorporated in a second, much faster step.
    Thus, metric updates, such as real-time traffic, can be processed within seconds.
    Customizable Route Planning~(CRP)~\cite{Del17} and Customizable Contraction Hierarchies~(CCH)~\cite{Dib16,Bla25} extend MLO and CH, respectively, to the customizable setting and speed up Dijkstra's algorithm by up to four orders of magnitude.

    \subparagraph*{Applications.}
    Traditionally, speedup techniques have seen limited success in public transit~\cite{Bau07,Ber09,Bas09}, with speedups often in the low single digits~\cite{Dis08,Ber09,Del09c}.
    A recent breakthrough is FLASH-TB~\cite{Gro25}, which adapts Arc-Flags to TB.
    The network is partitioned into~$k$ cells, and each transfer is labeled with a $k$-bit vector, where bit~$i$ indicates whether the transfer is required to reach cell~$i$. 
    The query only relaxes transfers that are flagged for the target cell.
    Previous adaptations of Arc-Flags~\cite{Ber09,Del09c} suffered from the presence of many equivalent journeys and the inability to break ties between them in a consistent manner.
    To resolve this issue, FLASH-TB is forced to use a preprocessing step with a quadratic runtime in the network size, which is infeasible for continental-sized networks.
    However, it achieves speedups of up to~$\numprint{500}$ over TB, demonstrating that classic speedup techniques can be effective.

    A challenge in applying hierarchical techniques to public transit networks is that they exhibit a weaker hierarchy than road networks~\cite{Bas09,Dib18}, particularly at the local transit level, which makes it harder to find small separators.
    Node contraction is also less successful due to the presence of high-degree nodes (\eg large train stations)~\cite{Ber09}.
    An early application of MLO to graph-based models that optimized only arrival time achieved a speedup of~$10$~\cite{Sch02}.
    Connection Scan Accelerated~(ACSA)\cite{Dib18}, which applies MLO to CSA, also only optimizes the arrival time.
    For each cell in a multi-level partition, ACSA identifies connections that are required to traverse it.
    During a query, it collects, sorts and scans only the relevant connections.
    On country-scale networks, this achieves a speedup of~$7$ over CSA.
    HypRAPTOR~\cite{Del17b} and HypTB~\cite{Aga22} apply overlay-based ideas to RAPTOR and TB, respectively, using a hypergraph representation of the network with lines as vertices and stops as hyperedges.
    Sadly, these algorithms achieve a speedup of at most two over their respective baselines.
    
	\subparagraph*{Expensive Preprocessing.}
    A family of algorithms with extremely fast query times answers every possible query in advance and compresses the obtained information.
    This requires quadratic precomputation effort, which is infeasible for continental-sized networks.
    The first such algorithm is Transfer Patterns (TP)~\cite{Bas10}, which builds search graphs called transfer patterns.
    These require enormous space, so Scalable Transfer Patterns~\cite{Bas16} split them into ``local'' and ``global'' patterns.
    The query algorithm must re-assemble these, making it much slower than TP and only slightly faster than TB.
    Heuristic variants of (Scalable) TP reduce the precomputation time at the expense of provable optimality.
    Newer representatives include TB using Condensed Search Trees~\cite{Wit16} and FLASH-TB.
    Unlike the other algorithms, the memory consumption of FLASH-TB is configurable, and query times remain extremely fast even with moderate space.
    A different technique with sub-millisecond query times is Public Transit Labeling~(PTL)~\cite{Del15}, which adapts Hub Labeling~\cite{Coh03}.
    However, for Pareto-optimal queries, the required space is already in the tens of gigabytes on metropolitan networks.
    
    \subparagraph*{Dynamic Settings.}
    Updates such as delays and trip cancellations are easy to incorporate into algorithms without heavy preprocessing, including graph-based models~\cite{Mue09,Cio17,Gia19}, RAPTOR and CSA.
    For TB, the precomputed transfer set can be updated quickly without recomputing the entire set~\cite{Wit21}.
    Algorithms with heavy preprocessing generally do not support real-time updates.
	Experimentally, TP has been shown to be fairly robust with respect to delays~\cite{Bas13b}, but optimality cannot be guaranteed.
    Moreover, this robustness may not extend to blocked tracks or closed stations, which are likely to require substantial changes to the stored patterns.
    For PTL, a single delayed connection can require minutes to process~\cite{Emi20}.
    
	\section{Fundamentals}
	\label{ch:fundamentals}

    This section introduces notation and definitions and gives an overview of the TB algorithm.
    
    \subparagraph*{Network.}
    A \emph{public transit network} is a 5-tuple $(\stops,\footpaths,\stopEvents,\trips,\lines)$ consisting of a set of stops $\stops$, footpaths $\footpaths \subseteq \stops \times \stops$, events $\stopEvents$, trips $\trips$ and lines $\lines$.	
	A \emph{stop} $\aStop \in \stops$ is a location at which passengers can enter and exit vehicles.
    A \emph{trip} is a sequence~$\aTrip = \left< \stopEvent_{1}, \dots, \stopEvent_{k} \right>$ of \emph{stop events} performed by a single vehicle, with~$\absoluteVal{\aTrip}=k$.
    The~$i$-th stop event of~$\aTrip$ is denoted by~$\aTrip[i]$, its visited stop by~$\aStop(\aTrip[i])$, its arrival time by~$\arrivalTime(\aTrip[i])$, and its departure time by~$\departureTime(\aTrip[i])$, measured in a resolution of seconds.
    For~$1 \leq i < j \leq \absoluteVal{\aTrip}$, we denote by~$\aTrip[i,j]$ the trip segment from~$\aTrip[i]$ to~$\aTrip[j]$.
    If~$j=i+1$, we call the segment a \emph{connection}.
    Trips are grouped into \emph{lines} $\lines$ such that all trips of the same line visit the same stop sequence and admit a total ordering~$\prec$, where~$\aTrip \prec \aTrip'$ if $\arrivalTime(\aTrip[i]) < \arrivalTime(\aTrip'[i])$ and $\departureTime(\aTrip[i]) < \departureTime(\aTrip'[i])$ for every $1 \leq i \leq \absoluteVal{\aTrip}$.
    We write~$\aTrip \preceq \aTrip'$ as shorthand for~$\aTrip \prec \aTrip' \lor \aTrip = \aTrip'$.
	A \emph{footpath} $(\stopA,\stopB)\in\footpaths$ connects two stops $\stopA, \stopB \in \stops$ and can be traversed in time~$\transfertime{\stopA}{\stopB}$.
    We assume that the footpath~$(\stopA,\stopA)$ exists for every stop~$\stopA$ with~$\transfertime{\stopA}{\stopA}=0$, and that~$\transfertime{\stopA}{\stopB}=\infty$ if~$(\stopA,\stopB)\notin\footpaths$.
    Furthermore, we require that~$\footpaths$ is transitively closed and fulfills the triangle inequality.

    For simplicity of exposition, we do not explicitly model a minimum change time between trips at the same stop.
    Following~\cite{Sau24}, a \emph{buffer time} can instead be incorporated implicitly by subtracting it from the departure time of the respective stop events.

    \subparagraph*{Journeys.}
    A \emph{transfer} is a tuple $\aTransfer = \transfer{\tripA[j]}{\tripB[i]} \in \stopEvents \times \stopEvents$ such that~$\footpaths$ contains the footpath $\left(\aStop\left(\tripA[j]\right), \aStop\left(\tripB[i]\right)\right)$ and the traversal time is short enough to reach $\tripB[i]$ from $\tripA[j]$, \ie $\arrTime{\tripA[j]} + \transfertime{\aStop\left(\tripA[j]\right)}{\aStop\left(\tripB[i]\right)} \leq \depTime{\tripB[i]}$.
    Given a source event~$\sourceStopEvent\in\stopEvents$ and a target event~$\targetStopEvent\in\stopEvents$, an $\sourceStopEvent$-$\targetStopEvent$-\emph{journey} is a sequence~$\aJourney = \left<\tripSegment{\aTrip_1}{i_1}{j_1}, \dots,\tripSegment{\aTrip_k}{i_k}{j_k} \right>$ of trip segments that describes how a passenger travels from~$\sourceStopEvent=\aTrip_1[i_1]$ to~$\targetStopEvent=\aTrip_k[j_k]$, such that the transfer~$\transfer{\aTrip_n[j_n]}{\aTrip_{n+1}[i_{n+1}]}$ exists for~$1 \leq n < k$.
    Given a source stop~$\sourceStop$ and a target stop~$\targetStop$, a~$\sourceStop$-$\targetStop$-journey is a sequence~$\aJourney = \left< \footpath_{0}, \tripSegment{\aTrip_1}{i_1}{j_1}, \dots, \tripSegment{\aTrip_k}{i_k}{j_k}, \footpath_{k+1} \right>$ with initial footpath~$\footpath_{0} = \left(\sourceStop, \aStop\left(\aTrip_1[i_1]\right)\right) \in \footpaths$, final footpath~$\footpath_{k+1} = \left(\aStop\left(\aTrip_k[j_k]\right), \targetStop\right) \in \footpaths$ and transfers as above.
    We omit a footpath from the notation if it has the form~$(\aStop,\aStop)$, \ie connects a stop to itself.
	We denote the number of trip segments in~$\aJourney$ by~$\absoluteVal{\aJourney}$.
    The departure time of~$\aJourney$ is~$\departureTime(\aJourney)\coloneqq\departureTime(\aTrip_1[i_1])-\transfertime{\sourceStop}{\aStop(\aTrip_1[i_1])}$ and the arrival time is~$\arrivalTime(\aJourney)\coloneqq\arrivalTime(\aTrip_k[j_k])+\transfertime{\aStop(\aTrip_k[j_k])}{\targetStop}$.
    
    Given a journey $\aJourney=\left<\tripSegment{\aTrip_1}{i_1}{j_1}, \dots,\tripSegment{\aTrip_k}{i_k}{j_k} \right>$, a \emph{proper subjourney} is a subsequence of~$\aJourney$, \ie it includes trip segments in their entirety, whereas an \emph{event subjourney} may start and end in the middle of a trip segment.
    For indices~$1 \leq m < n \leq k$, we denote the proper subjourney from the~$m$-th to the~$n$-th trip segment by~$\subjourney{m}{n}=\left<\tripSegment{\aTrip_m}{i_m}{j_m}, \dots,\tripSegment{\aTrip_n}{i_n}{j_n} \right>$.
    For indices~$i_m \leq \ell_m < j_m$ and~$i_n < \ell_n \leq j_n$, we denote the event subjourney from~$\aTrip_m[\ell_m]$ to~$\aTrip_n[\ell_n]$ by~$\eventSubjourney{\aTrip_m[\ell_m]}{\aTrip_n[\ell_n]}=\left<\tripSegment{\aTrip_m}{\ell_m}{j_m}, \dots,\tripSegment{\aTrip_n}{i_n}{\ell_n} \right>$.
    A (proper or event) subjourney of~$\aJourney$ is a \emph{prefix} if it starts with~$\aTrip_1[i_1]$ and a \emph{suffix} if it ends with~$\aTrip_k[j_k]$.

    \subparagraph*{Queries.}
    A \emph{query} $\query=(\sourceStop,\targetStop,\departureTime)$ consists of source and target stops~$\sourceStop,\targetStop$ and a departure time~$\departureTime$.
    A~$\sourceStop$-$\targetStop$-journey $\aJourney$ is \emph{feasible} for $\query$ if $\departureTime(\aJourney) \geq \departureTime$.
    A journey~$\aJourney$ \emph{dominates} another journey~$\aJourney'$ if its \emph{cost vector}~$\crit(\aJourney) = \left(\arrivalTime(\aJourney),\absoluteVal{\aJourney}\right)$ is strictly smaller than~$\crit(\aJourney')$ in at least one criterion and not larger in the other.
    A feasible journey~$\aJourney$ and its cost vector are \emph{Pareto-optimal} if no other feasible journey dominates $\aJourney$.
    The \emph{Pareto front} is the set of all Pareto-optimal cost vectors.
    A \emph{representative set} consists of one journey for each cost vector in the Pareto front.
    The objective of~$\query$ is to report the Pareto front and a representative set.
			
	\subparagraph*{Partitioning.}
    For~$k \in \mathbb{N}^{+}$, a \emph{$k$-partition} of a set~$O$ is a family $\allCells(O) = \left\{\cell_{0}, \cell_{1}, \dots, \cell_{k-1}\right\}$ of sets such that $\bigcup_{0 \leq i < k}\cell_{i} = O$ and $\cell_{i} \cap \cell_{j} = \emptyset$ holds for all $0 \leq i,j < k$. 
    Each set $\cell \in \allCells(O)$ is called a \textit{cell}.
	If~$k = 2$, we refer to~$\allCells(O)$ as a \textit{bipartition} of $O$.

    \begin{definition}\label{def:nestedbipartition} 
        A \textbf{nested bipartition} $\bipartition_{\numLevels}(\graph)$ of a $\graph = \left(\vertices, \edges\right)$ into $\numLevels \in \mathbb{N}^+$ levels is a family $\left\{\allCells^\aLevel(V)\right\}_{\aLevel=0}^{\numLevels}$,
        where each $\allCells^\aLevel(V)$ is a $2^{\numLevels - \aLevel}$-partition of $V$, \ie
        $\allCells^\aLevel(V) = \{ \cell_0^\aLevel, \cell_1^\aLevel, \dots, \cell_{2^{\numLevels - \aLevel}-1}^\aLevel \}$,
        and for each $1 \leq \aLevel \leq \numLevels$ and $0 \leq i < 2^{\numLevels - \aLevel}$, the set $\left\{ \cell_{2i}^{\aLevel-1}, \cell_{2i+1}^{\aLevel-1} \right\}$ is a bipartition of~$\cell_i^\aLevel$.
	\end{definition}

	For $\vertexA, \vertexB \in \vertices$, the \textit{lowest common level} $\lcl{\vertexA}{\vertexB}$ is the lowest level~$\aLevel$ on which~$\vertexA$ and~$\vertexB$ are in the same cell.
    Note that all vertices are in the same cell on level $\numLevels$.
    For a vertex $\vertexA$ contained in cell $\cell^\aLevel_i$ on level $\aLevel$, we denote the \emph{cell ID} of~$\vertexA$ on level~$\aLevel$ by~$\cellIdL{\vertexA}{\aLevel} = i$.

    \subparagraph*{TB.}
    We give a brief description of the TB algorithm.
    We refer to Appendix~\ref{app:tb} and~\cite{Wit15} for further details.
    A preprocessing step computes a set $\transfers\subseteq\stopEvents\times\stopEvents$ of transfers between pairs of stop events, which is sufficient to answer all queries optimally.
    First, it generates a set of potentially relevant transfers.
    Then, two pruning rules are applied to remove irrelevant transfers.
    All steps are trivial to parallelize, as trips are processed independently.
			
    \begin{algorithm}[t]
        \caption{
            TB $\texttt{Scan}$ method for round~$n$ and target stop~$\targetStop$.
        }\label{alg:tbts}
            \ForEach{$\tripSegment{\aTrip}{j}{k}\in\aQueue_{n}$}{
                \For{$i$ from $j$ to $k$\label{alg:tb:ts:loop1begin}}{
                    \lIfComment{target pruning}{$\arrivalTime(\aTrip[i])\geq\minTime$}{\Break}\label{alg:tb:targetpruning}
                    \IfComment{create cost vector}{$\arrivalTime(\aTrip[i])+\transfertime{\aStop(\aTrip[i])}{\targetStop}<\minTime$\label{alg:tb:improv}}{
                        $\minTime\leftarrow\arrivalTime(\aTrip[i])+\transfertime{\aStop(\aTrip[i])}{\targetStop}$\;
                        $\labels\leftarrow\labels\cup\left\{\left(\minTime,n\right)\right\}$, removing dominated cost vectors\;\label{alg:tb:domination}
                    }
                }
            }
            \ForEach{$\tripSegment{\aTrip}{j}{k}\in\aQueue_{n}$}{
                \For{$i$ from $j$ to $k$}{
                    \lIfComment{target pruning}{$\arrivalTime\left(\aTrip[i]\right)\geq\minTime$}{$k \gets i-1$}\label{alg:tb:targetpruning2}
                }
            }
            \ForEachComment{relax transfers}{$\tripSegment{\aTrip}{j}{k}\in\aQueue_{n}$}{
                \lForEach{$\aTransfer=\transfer{\aTrip[i]}{\aTrip'[i']}\in\transfers$ with $j \leq i \leq k$\label{alg:tb:ts:loop2begin}}{$\Enqueue\left(\aTransfer,\aQueue_{n+1}\right)$\label{alg:tb:ts:enqueue}
                }
            }
    \end{algorithm}

    The query algorithm operates in rounds, where round~$n$ finds journeys with exactly~$n$ trips by scanning trip segments collected in a FIFO queue~$\aQueue_n$ during the previous round.
    For each trip~$\aTrip$, the \emph{reached index}~$\reachedIndex\left(\aTrip\right)$ is the smallest~$i$ such that for each stop event~$\aTrip[j]$ with~$i \leq j \leq \absoluteVal{\aTrip}$, an event~$\aTrip'[j]$ with~$\aTrip' \preceq \aTrip$ has already been scanned.
    The algorithm also maintains the tentative Pareto front~$\labels$ and the earliest known arrival time~$\minTime$ at~$\targetStop$.

    The main $\Scan$ operation (see~\Cref{alg:tbts}) iterates over all trip segments $\tripSegment{\aTrip}{j}{k}$ in the queue $\aQueue_n$ three times.    
	For each event $\aTrip[i]$ with $j \leq i \leq k$, the first loop checks whether it is possible to reach~$\targetStop$ by exiting the trip at~$\aTrip[i]$ and taking a final footpath.
    If this improves~$\minTime$ (line~\ref{alg:tb:improv}), a corresponding cost vector is added to~$\labels$.
    The found journeys are represented implicitly using parent pointers, which are unpacked at the end of the query.
    \emph{Target pruning} is applied during the loop:
	If~$\arrivalTime(\aTrip[i]) \geq \minTime$ (line~\ref{alg:tb:targetpruning}), then the rest of the trip segment is skipped because all subsequent events have even later arrival times.
    The second loop applies target pruning again because the first loop may have reduced~$\minTime$ further.
    If the arrival time of a stop event~$\aTrip[i]$ exceeds~$\minTime$, then the subsegment~$\tripSegment{\aTrip}{i}{k}$ is cut off.
    
	The third loop relaxes the outgoing transfers of all events~$\aTrip[i]$ with $j \leq i \leq k$.
    These transfers are stored contiguously in memory, which allows them to be scanned in a single for-loop.
	For each relaxed transfer $\transfer{\tripA[i]}{\tripB[j]}$, the \Enqueue method is called.
    The first index at which~$\tripB$ can be exited is~$j+1$.
    If the reached index~$\reachedIndex(\tripB)$ is greater than~$j+1$, then the newly reached trip segment~$\tripSegment{\tripB}{j+1}{\reachedIndex(\tripB)}$ is enqueued.
    Afterwards, the reached index of~$\tripB$ and all later trips of the same line is updated.
    
	To fill the first queue $\aQueue_1$, the query algorithm collects all lines that depart from~$\sourceStop$ or any stop reachable from~$\sourceStop$ via a footpath.
	For each of these lines, the algorithm finds the earliest reachable trip and adds the corresponding trip segment to~$\aQueue_1$ via the \Enqueue operation.
    
	\section{T-REX (Transfer-Ranked EXploration)}
	\label{ch:trex}
    We now describe T-REX in detail.
    The first step is to compute a nested bipartition~$\bipartition_{\numLevels}$ of the timetable and the set~$\transfers$ of transfers, using the TB preprocessing.
    Then, the algorithm computes the \emph{rank}~$\rank{\aTransfer}$ of each transfer~$\aTransfer$, which is the lowest level~$\aLevel$ such that~$\aTransfer$ is not required to traverse the cell containing~$\aTransfer$ on level~$\aLevel$.
    Following the terminology of three-phase customizable techniques for road networks (\eg CRP, CCH), we call this phase the \emph{customization}.
    The ranks are used for pruning during a query:
    Let~$\aStop$ be the current stop and~$\aLevel$ the highest level such that the cell of~$\aStop$ does not contain the source or target stop.
    Then the cell must be traversed, so the search can ignore transfers with rank~$\aLevel$ and below.

    T-REX offers two conceptual improvements over previous partition-based approaches.
    Firstly, it computes pruning information for transfers rather than stop events or connections because this leads to strictly stronger pruning~\cite{Gro25}:
    if a transfer~$\aTransfer=\transfer{\tripA[j]}{\tripB[i]}$ is relevant, then both~$\tripA[j]$ and~$\tripB[i]$ are relevant, but the converse is not necessarily true.
    Secondly, we carefully redesign the query algorithm to ensure that the search space reduction translates into faster query times.
    This is challenging in scan-based algorithms (such as TB) because they rely on branch prediction and cache locality, which are weakened by extra pruning steps.
    This was one of the issues faced by ACSA~\cite{Dib18} and HypRAPTOR~\cite{Del17b}.
		
	\subparagraph*{Partitioning.}
    As in ACSA~\cite{Dib18}, we represent the network as a \emph{compact layout graph} $\layoutGraph = \left(\layoutVertices, \layoutEdges\right)$, in which vertices correspond to stops, and an edge~$(\vertexA,\vertexB)$ exists if there is at least one connection from an event at~$\vertexA$ to an event at~$\vertexB$.
    Stops that are connected by a footpath are contracted into a single vertex to ensure that transfers do not span multiple cells.
    Each vertex is weighted with the number of represented stops, and each edge with the number of represented connections.
    The nested bipartition~$\bipartition_{\numLevels}$ is then computed heuristically on~$\layoutGraph$, aiming to minimize the total weight of the cut edges, \ie edges that cross cell boundaries, while enforcing a maximum imbalance between the total vertex weights of the cells.
    
	\subparagraph*{Query.}
    In its most basic form, the T-REX query algorithm is identical to the TB query algorithm, but with one additional pruning step:
    Let $\sourceStop$ be the source and $\targetStop$ the target stop.
    A transfer~$\aTransfer$ starting from a stop~$\aStop$ is only relaxed if passes the \emph{LCL test}, which is
	\begin{equation}
		\label{eq:lcl}\rank{\aTransfer}\geq \min\left\{\lcl{\aStop}{\sourceStop}, \lcl
		{\aStop}{\targetStop}\right\}.
	\end{equation}
    This test is performed at the start of the $\Enqueue$ method.
    See Appendix~\ref{app:optimizations} for optimizations.

    This basic approach spends most of its time performing unsuccessful LCL tests for irrelevant transfers.
    We therefore redesign the query algorithm to avoid as many LCL tests as possible.
    Consider the scan of a stop event~$\aTrip[i]$ at the stop~$\aStop=\aStop(\aTrip[i])$.
    The relevant level for the LCL test of all outgoing transfers of~$\aTrip[i]$ is~$\aLevel = \min\{ \lcl{\aStop}{\sourceStop}, \lcl{\aStop}{\targetStop} \}$.
    Obviously, recalculating $\aLevel$ for every transfer is wasteful, but we can go one step further.
    On level~$\aLevel-1$, the stop~$\aStop$ is in a different cell than both~$\sourceStop$ and~$\targetStop$.
    If the next stop event~$\aTrip[i+1]$ is in the same cell as~$\aStop$ on level~$\aLevel-1$, then the relevant level for the LCL test remains the same.
    To exploit this, we add a new~$\Split$ procedure at the start of each round, which splits the enqueued trip segments into subsegments along cell boundaries.
    Then the relevant level for the LCL test remains the same within each subsegment.
    We do this in advance rather than on-the-fly to reduce the cache load during the~$\Scan$ operation.

    For every stop event~$\aTrip[i]$ and every level~$\aLevel$, we precalculate a value~$\successor{\aTrip[i]}{\aLevel}$, which is the smallest index~$j$ with~$i \leq j \leq \absoluteVal{\aTrip}$ such that~$\cellIdL{\aStop(\aTrip[i])}{\aLevel} \neq \cellIdL{\aStop(\aTrip[j])}{\aLevel}$, or~$\absoluteVal{\aTrip} + 1$ if there is no such index.
    For each round~$n$, we maintain an additional FIFO queue~$\tilde{\aQueue}_n$, which stores the enqueued trip segments before they are split.
    Each queue element is now a tuple~$(\tripSegment{\aTrip}{j}{k},\aLevel)$, where~$\aLevel$ is the relevant level for the LCL tests of~$\aTrip[j]$.
    To split this segment, the algorithm looks up~$i = \successor{\aTrip[j]}{\max\{\aLevel-1,0\}}$.
    Then the first subsegment is~$\tripSegment{\aTrip}{j}{k'}$ with~$k' = \min\{i-1,k\}$, which is added to~$\aQueue_n$ with level~$\aLevel$.
    If~$i \leq k$, then the algorithm performs the LCL test for~$\aTrip[i]$ and recurses on the remaining trip segment~$\tripSegment{\aTrip}{i}{k}$.

    To skip irrelevant transfers during the~$\Scan$ operation (see~\Cref{alg:trex-scan}), we precalculate for each level~$\aLevel$ a \emph{transfer overlay}~$\transfers_\aLevel$, which contains all transfers with rank at least~$\aLevel$.
    For a queue element~$(\tripSegment{\aTrip}{j}{k},\aLevel)$, the third loop of the~$\Scan$ operation now relaxes the outgoing transfers in~$\transfers_\aLevel$.
    No LCL test is performed because all relaxed transfers would pass it.
    Instead, the relevant level~$\aLevel$ is passed to~$\Enqueue$ and added to trip segments that are inserted into~$\tilde{\aQueue}_{n+1}$.
    The first loop of the~$\Scan$ operation, which relaxes outgoing footpaths to~$\targetStop$, only needs to be performed for trip segments that are in the same cell as~$\targetStop$ on level 0.
    To exploit this, we maintain a third queue~$\aQueue^t_n$ and only run the loop for the elements in this queue.
    When a subsegment~$(\tripSegment{\aTrip}{j'}{k'},\aLevel)$ is created during the~$\Split$ operation, we check whether the stop of~$\aTrip[j']$ is in the same level-$0$ cell as~$\targetStop$.
    If so, then this implies~$\aLevel = 0$ and therefore the entire subsegment is in the same level-$0$ cell.
    Hence, the subsegment is added to~$\aQueue^t_n$.
    
    Outside of the level-$0$ target cell, the only part of the $\Scan$ operation that scans the individual stop events of a trip segment~$\tripSegment{\aTrip}{j}{k}$ is now the second loop, which performs target pruning.
    Especially on higher levels, many trip segments have fewer outgoing transfers than stop events, so this step costs more time than it saves.
    We therefore make the target pruning check lazier: if the arrival time of the first stop event~$\aTrip[j]$ exceeds~$\minTime$, then the entire trip segment is skipped.
    As a result, the $\Scan$ operation now essentially treats each trip segment as a single node, whose outgoing transfers are relaxed in a single sweep.

    Unlike Dijkstra-based MLO algorithms~\cite{Jun02,Del17}, we do not add shortcuts other than the precomputed TB transfers.
    Although shortcuts that bypass trip segments might reduce the search space, the potential for savings is low because most journeys use only a few trips and the average number of outgoing transfers per trip is high ($56$ on our Europe instance).
    Moreover, the $\Enqueue$ method would become less efficient.
    When relaxing a shortcut that skips~$x$ trip segments, the discovered trip segment must be inserted into the queue~$\aQueue_{n+x-1}$.
    Hence, the correct queue must be fetched for each shortcut, which reduces cache locality.

    \begin{algorithm}[t]
        \caption{
            T-REX $\texttt{Scan}$ method for round~$n$ and target stop~$\targetStop$. Changes in blue.
        }\label{alg:trex-scan}
            \ForEach{$\tripSegment{\aTrip}{j}{k}\in\textcolor{blue}{\aQueue^t_n}$}{
                \For{$i$ from $j$ to $k$}{
                    \lIfComment{target pruning}{$\arrivalTime(\aTrip[i])\geq\minTime$}{\Break}
                    \IfComment{create cost vector}{$\arrivalTime(\aTrip[i])+\transfertime{\aStop(\aTrip[i])}{\targetStop}<\minTime$}{
                        $\minTime\leftarrow\arrivalTime(\aTrip[i])+\transfertime{\aStop(\aTrip[i])}{\targetStop}$\;
                        $\labels\leftarrow\labels\cup\left\{\left(\minTime,n\right)\right\}$, removing dominated cost vectors\;
                    }
                }
            }
            \ForEach{$\textcolor{blue}{(\tripSegment{\aTrip}{j}{k},l)}\in\aQueue_{n}$}{
                \textcolor{blue}{
                \lIfComment{lazy target pruning}{$\arrivalTime\left(\aTrip[j]\right)\geq\minTime$}{$k \gets j-1$}
                }
            }
            \ForEachComment{relax transfers}{$\textcolor{blue}{(\tripSegment{\aTrip}{j}{k},l)}\in\aQueue_{n}$}{
                \lForEach{$\aTransfer=\transfer{\aTrip[i]}{\aTrip'[i']}\in\textcolor{blue}{\transfers_l}$ with~$j \leq i \leq k$}{$\Enqueue\left(\aTransfer,\textcolor{blue}{\tilde{\aQueue}_{n+1}},\textcolor{blue}{l}\right)$
                }
            }
    \end{algorithm}
	
	\subparagraph*{Customization.}
    For each journey~$\aJourney$ found by TB and each transfer~$\aTransfer$ in~$\aJourney$, the customization must ensure that Equation~(\ref{eq:lcl}) is fulfilled.
    We observe the following:
    At every level~$\aLevel$ such that~$\sourceStop$ and~$\targetStop$ are not in the same cell in the nested bipartition~$\bipartition_{\numLevels}$, we can decompose~$\aJourney$ into event subjourneys along cell boundaries.
    For each subjourney that traverses a cell, the ranks of its transfers must be at least~$\aLevel+1$.
    For a cell~$\cell$, a stop event~$\aTrip[i]$ is an \emph{incoming border event} (IBE) if $\aTrip[i]$ does not lie in~$\cell$, but~$\aTrip[i+1]$ does.
    Analogously, $\aTrip[i]$ is an \emph{outgoing border event} (OBE) if~$\aTrip[i]$ belongs to~$\cell$, but~$\aTrip[i+1]$ does not.
    We find journeys that traverse~$\cell$ by performing a search from each IBE~$\aTrip[i]$ of cell~$\cell$ to all OBEs of~$\cell$ with a special variant of TB, which we call \emph{Event-TB}.
    Because all journeys must start with~$\tripA[i]$, the initial queue~$\aQueue_1$ is initialized with only the trip segment $\tripSegment{\tripA}{i}{\absoluteVal{\tripA}}$.
    Target pruning is omitted, and instead of tracking a tentative Pareto front, the first loop over the queue $\aQueue_n$ collects all reached OBEs in a set $\mathcal{O}$.
    The search is restricted to~$\cell$ by cutting off each trip segment after the first encountered OBE.
	Using this search, the customization operates in a bottom-up fashion:
    For every level $\aLevel$ from~$0$ to~$\numLevels-1$, the IBEs of all cells are processed in parallel.
    For each IBE~$\aTrip[i]$ of a cell~$\cell \in \allCells^\aLevel$, an Event-TB search is performed.
    During the $\Enqueue$ method, transfers with a rank smaller than~$\aLevel$ are discarded because they were already found to be irrelevant for traversing the subcells of~$\cell$.
    For each reached OBE~$\aTrip'[k] \in \mathcal{O}$, the found~$\aTrip[i]$-$\aTrip'[k]$-journey $\aJourney$ is unpacked and~$\rank{\aTransfer} = \aLevel+1$ is set for every transfer~$\aTransfer$ in~$\aJourney$.

    \subparagraph*{Correctness.}
    The correctness of T-REX relies on the \emph{subjourney closure property} of TB:
    \begin{restatable}{theorem}{subjourneyClosure}
        \label{th:event-subjourney-closure}
        Let~$\aJourney$ be a journey returned by TB for a query~$\query$ and let~$\eventSubjourney{\sourceStopEvent}{\targetStopEvent}$ be an event subjourney of~$\aJourney$.
        An Event-TB search from~$\sourceStopEvent$ finds the journey~$\eventSubjourney{\sourceStopEvent}{\targetStopEvent}$.
    \end{restatable}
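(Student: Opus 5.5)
We would prove the statement by induction over the trip segments of~$\eventSubjourney{\sourceStopEvent}{\targetStopEvent}$, comparing the Event-TB search from~$\sourceStopEvent$ with the TB search that produced~$\aJourney$. Write~$\aJourney = \left<\tripSegment{\aTrip_1}{i_1}{j_1}, \dots, \tripSegment{\aTrip_k}{i_k}{j_k}\right>$ and let~$\sourceStopEvent = \aTrip_m[\ell_m]$ and~$\targetStopEvent=\aTrip_n[\ell_n]$. The journey is represented by parent pointers, so~$\aJourney$ being returned means the following. For each~$p \ge 2$, the segment of~$\aTrip_p$ was enqueued in round~$p$ of TB via the transfer~$\transfer{\aTrip_{p-1}[j_{p-1}]}{\aTrip_p[i_p]}$, and the exit event~$\aTrip_p[j_p]$ lies in the segment so enqueued. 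Equivalently, at the moment this transfer was relaxed, the reached index satisfied~$\reachedIndex(\aTrip_p) > i_p+1$ and no trip~$\aTrip'\prec\aTrip_p$ of the same line had reached~$j_p$ before. We want to show the same for Event-TB in its round~$p-m+1$. The base case is immediate: Event-TB puts exactly~$\tripSegment{\aTrip_m}{\ell_m}{\absoluteVal{\aTrip_m}}$ into~$\aQueue_1$, and this segment contains~$j_m$.

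The key step is a domination lemma relating the two searches. Every trip segment enqueued by Event-TB in round~$r$ is reachable in TB from~$\sourceStop$ with at most~$m-1+r$ trips, by prepending the prefix~$\eventSubjourney{\aTrip_1[i_1]}{\sourceStopEvent}$. Consequently, for every trip~$\aTrip$, the reached index of Event-TB after round~$r$ is at least the reached index of TB after round~$m-1+r$.

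Suppose, for contradiction, that Event-TB fails to reach~$\aTrip_p[j_p]$ via the prescribed transfer. Then some trip~$\aTrip'\preceq\aTrip_p$ of the same line reached index~$j_p$ in Event-TB earlier: either in an earlier round, or earlier in the same round. An earlier round yields a TB journey to~$\targetStop$ that is no later and has strictly fewer trips. This contradicts either the Pareto-optimality of~$\aJourney$ or the fact that TB itself would then have reached~$\aTrip_p[j_p]$ first (by the domination lemma). We must also argue that the omission of target pruning in Event-TB only enlarges its search space, so it never loses the prescribed transfer. Likewise, the relevant transfers are the ones in~$\transfers$ (with the TB pruning rules already applied), and these are the same for both searches.

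We expect the main obstacle to be the same-round case, where ties are broken by FIFO order. Here we need to show that the relative order in which Event-TB relaxes transfers within a round is consistent with TB's order. Our plan is to strengthen the inductive invariant to an order-preservation statement: the Event-TB queue~$\aQueue_r$, restricted to segments that also occur in TB's~$\aQueue_{m-1+r}$, appears as a subsequence in the same order, and each such segment is at least as long. This is proved by induction on~$r$, using that both algorithms scan segments in queue order and the outgoing transfers of each event in the fixed stored order. With this invariant, the first transfer reaching~$\aTrip_p[j_p]$ in Event-TB is either TB's transfer or one that TB would also have relaxed earlier. The latter contradicts the choice of parent pointer in~$\aJourney$. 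We anticipate that a consistent tie-breaking convention for TB (implicit in its deterministic implementation) may need to be made explicit for this step.
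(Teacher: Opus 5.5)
Your earlier-round case is fine, since it contradicts the Pareto-optimality of $J$ directly. The same-round case, however, has a genuine gap, and it comes from your domination lemma, whose ``consequently'' does not follow. That a trip is reachable by \emph{some} journey with $m-1+r$ trips using transfers in $\mathfrak{T}$ does not mean that TB reaches it, or an earlier trip of its line, by round $m-1+r$. TB only explores journeys that survive its own reached-index pruning. That pruning is triggered by branches of the TB search that do not pass through $\varepsilon_s$, and TB additionally target-prunes while Event-TB does not.

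Concretely, Event-TB scans all of $T_m[\ell_m,|T_m|]$. TB's segment of $T_m$, by contrast, may be cut off just after $j_m$, because an earlier trip $T'_m\prec T_m$ of the same line was entered at a later index via another branch. TB then relaxes the transfers of $T'_m[y]$ instead of $T_m[y]$, and those may have been deleted by the latest-exit rule; this is exactly the situation of Figure~\ref{fig:conflict}. So Event-TB can enqueue trips that TB never reaches. Its reached indices are not bounded by TB's, and your invariant, restricted to segments common to both queues, says nothing about this part of the search.

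This breaks your dichotomy ``either TB's transfer or one that TB would also have relaxed earlier.'' The first Event-TB path reaching some $T'[j_p]$ with $T'\preceq T_p$ may lie precisely in the Event-TB-only part of the search, and then there is no parent-pointer contradiction. Combining $J$'s prefix, this competitor and $J$'s suffix gives a journey with the same cost vector as $J$ that precedes it in exploration order, yet TB never explores it. You must rule this out, and that is the real content of the statement: TB returns the order-minimal representative among \emph{all} journeys in $\mathcal{J}(Q,\mathfrak{T})$, including those it prunes.

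The paper proves this in three steps:
\begin{itemize}
\item It makes the order explicit via tiebreaking sequences; this is your anticipated ``consistent tie-breaking convention''.
\item It shows that every pruned journey is stop-dominated by a non-pruned journey that is still earlier in the order (Lemma~\ref{lem:line-pruning}). This uses the transfer-generation invariant of Lemma~\ref{lem:transfer-generation} in a form that keeps the prefix intact.
\item It concludes Theorem~\ref{th:tb-correctness}.
\end{itemize}
With this, and with stability of the order under prepending a prefix (Lemma~\ref{lem:stability-prefix}), no round-by-round simulation is needed. The suffix of $J$ from $\varepsilon_s$ is the order-minimal representative of its cost vector for the event-to-stop query, so Event-TB returns it and therefore finds its prefix $J[\varepsilon_s,\varepsilon_t]$.

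Separately, your remark that omitting target pruning ``only enlarges the search space, so it never loses the prescribed transfer'' is not a valid monotonicity argument under reached-index pruning. A larger search space can prune more, so this step also needs its own justification.
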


    We give a detailed proof of this claim in Appendix~\ref{app:proof}.
    We stress that although it seems intuitively clear that TB has this property, the proof is far from trivial and requires careful analysis of the interaction between different pruning rules in the transfer generation and query phases.
    In fact, Baum \etal\cite{Bau23} showed that an analogous property does not hold for RAPTOR.
    Using this property, the correctness of T-REX is fairly straightforward.
    
    \begin{theorem}
		\label{th:trexcustocorrect}
        Let $\aJourney$ be a journey returned by TB for a query $\query = \left(\sourceStop, \targetStop, \departureTime\right)$.
		After the T-REX customization, every transfer $\aTransfer$ in $\aJourney$ fulfills Equation~(\ref{eq:lcl}).
	\end{theorem}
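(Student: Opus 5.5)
Fix a transfer $\aTransfer=\transfer{\aTrip_n[j_n]}{\aTrip_{n+1}[i_{n+1}]}$ of $\aJourney$ and let $\aStop=\aStop(\aTrip_n[j_n])$ be its departure stop. Set $\aLevel^\ast=\min\{\lcl{\aStop}{\sourceStop},\lcl{\aStop}{\targetStop}\}$. We must show $\rank{\aTransfer}\geq\aLevel^\ast$. If $\aLevel^\ast=0$ there is nothing to prove, so assume $\aLevel^\ast\geq 1$ and put $\aLevel=\aLevel^\ast-1$. By definition of the lowest common level, on level $\aLevel$ the cell $\cell\in\allCells^\aLevel$ containing $\aStop$ contains neither $\sourceStop$ nor $\targetStop$. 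Because stops joined by footpaths are contracted into one layout vertex, both endpoints of $\aTransfer$ lie in $\cell$, and so do all stops reachable by the initial or final footpath of $\aJourney$ from stops in $\cell$. In particular, the first event $\aTrip_1[i_1]$ and the last event $\aTrip_k[j_k]$ lie outside $\cell$. (Here a stop event "lies in" a cell if its stop does.)

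The plan is to locate the maximal stretch of $\aJourney$ inside $\cell$ that contains $\aTransfer$, and to show it is an event subjourney from an IBE to an OBE of $\cell$. View $\aJourney$ as a sequence of stop events along its trip segments, with each transfer staying within one cell. Walking backwards from $\aTrip_n[j_n]$, we must leave $\cell$ before reaching $\aTrip_1[i_1]$. Since transfers do not cross cells, the last crossing before $\aTransfer$ is a connection $\aTrip_m[\ell_m]\to\aTrip_m[\ell_m+1]$ with $\aTrip_m[\ell_m]\notin\cell$ and $\aTrip_m[\ell_m+1]\in\cell$, so $\aTrip_m[\ell_m]$ is an IBE of $\cell$. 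Symmetrically, walking forward from $\aTrip_{n+1}[i_{n+1}]$ gives the first connection leaving $\cell$, whose tail $\aTrip_{m'}[\ell_{m'}]$ is an OBE. A little care is needed for degenerate cases, for instance when the transfer starts at an event that is itself the boundary, or when the crossing lies inside trip $\aTrip_n$ or $\aTrip_{n+1}$. In all such cases the event subjourney $\aJourney'=\eventSubjourney{\aTrip_m[\ell_m]}{\aTrip_{m'}[\ell_{m'}]}$ still contains $\aTransfer$, and all of its events other than the first lie in $\cell$.

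Next, apply \Cref{th:event-subjourney-closure}: an Event-TB search from the IBE $\aTrip_m[\ell_m]$ finds $\aJourney'$. We must check that the customization's restricted search for cell $\cell$ on level $\aLevel$ still finds it. Cutting trip segments after the first OBE does not affect $\aJourney'$, since $\aJourney'$ stays in $\cell$ until its final OBE. The second restriction is that transfers of rank below $\aLevel$ are discarded. To handle it, argue by induction on $\aLevel$ that every transfer of $\aJourney'$ already has rank at least $\aLevel$. For $\aLevel=0$ this is trivial. For $\aLevel>0$, each transfer of $\aJourney'$ departs from a stop lying in a subcell of $\cell$ on level $\aLevel-1$ that contains neither $\sourceStop$ nor $\targetStop$. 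The inductive hypothesis, applied to the same $\aJourney$ and transfer at level $\aLevel-1$, therefore gives rank at least $\aLevel$. So we formulate the claim for all transfers and all levels simultaneously and induct on $\aLevel$.

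Finally, the Event-TB search reaches the OBE $\aTrip_{m'}[\ell_{m'}]$ via $\aJourney'$, or via a journey that TB would treat as equivalent. The customization then unpacks the found journey and sets the rank of each of its transfers to $\aLevel+1=\aLevel^\ast$. Ranks are only ever raised as levels increase, so $\rank{\aTransfer}\geq\aLevel^\ast$.

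The main obstacle is this final step. The search might record, for this OBE, a different journey than $\aJourney'$, and the unpacked journey might then not contain $\aTransfer$. The intended resolution is that \Cref{th:event-subjourney-closure} asserts the search finds exactly $\aJourney'$, through the parent pointers of its deterministic tie-breaking. I would therefore interpret the theorem as saying that the unpacking for $\aTrip_{m'}[\ell_{m'}]$, with the rank restriction in place, yields $\aJourney'$. Justifying this requires that the discarded low-rank transfers do not change the parent pointers along $\aJourney'$. That follows because the restricted search is a sub-search of the unrestricted one that still contains every event and transfer of $\aJourney'$.
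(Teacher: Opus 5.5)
Your proposal is essentially the paper's proof. You take the maximal event subjourney of $\aJourney$ through $\aTransfer$ that stays inside the level-$\aLevel$ cell of its departure stop, argue that it runs from an IBE to an OBE, apply \Cref{th:event-subjourney-closure}, and handle the rank-based pruning by induction over the levels. You state that induction, more carefully than the paper does, for all transfers of $\aJourney$ at once.

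The one step you add is the ``sub-search'' justification in your last paragraph, and it is not valid as stated. Discarding transfers can leave reached indices higher, which can admit competing journeys that the unrestricted search had pruned. The paper takes this step for granted too, and it can be made rigorous by the argument of \Cref{lem:line-pruning}. A journey that event-dominates a prefix of $\aJourney'$ would, via \Cref{lem:transfer-generation}, yield a journey contradicting the optimality or $\prec$-minimality of the suffix of $\aJourney$ that starts at the IBE. Hence no restriction of the transfer set or of the search area can prune $\aJourney'$.
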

	
	\begin{proof}[Proof]	
		Let $\aTransfer = \left(\aTrip_m[j_m], \aTrip_n[i_n]\right)$,
		$\aStop=\aStop(\aTrip_m[j_m])$ and~$r = \min\left\{\lcl{\aStop}{\sourceStop}, \lcl
		{\aStop}{\targetStop}\right\}$.
        We show by induction that for each level $\aLevel$ with~$0 \leq \aLevel < r$, the customization extracts~$\aTransfer$ and sets~$\rank{\aTransfer} \gets \aLevel+1$.
        This ensures that~$\rank{\aTransfer} \geq r$ holds after the customization, \ie Equation~(\ref{eq:lcl}) holds.
        Let $\anEventSubJourney = \eventSubjourney{\tripA[i_a]}{\tripB[j_b]}$ be the maximal subjourney of $\aJourney$ such that $\anEventSubJourney$ uses $\aTransfer$ and all stop events of $\anEventSubJourney$ except (possibly) for~$\tripA[i_a]$ lie inside the cell $\cellIdL{\aStop}{\aLevel}$.
        Because~$\aLevel<r$, neither~$\sourceStop$ nor~$\targetStop$ is located in this cell, so it follows that~$\tripA[i_a]$ is an IBE of the cell and~$\tripB[j_b]$ is an OBE.
        We show that the Event-TB search from $\tripA[i_a]$ during the customization of cell $\cellIdL{\aStop}{l}$ finds~$\anEventSubJourney$.
        Without the rank-based pruning rule in the modified \Enqueue method, this follows from~\Cref{th:event-subjourney-closure}.
        This rule prunes the search at any transfer~$\aTransfer^*$ with~$\rank{\aTransfer^*} < \aLevel$.
        In the base case~$\aLevel=0$, this is trivially false for all transfers in~$\anEventSubJourney$, and otherwise it follows from the induction hypothesis.
        Hence, the search finds~$\anEventSubJourney$ and sets the rank of~$\aTransfer\in\anEventSubJourney$ to~$\aLevel+1$.
	\end{proof}

	\subparagraph*{Updates.}
    Real-time updates can affect all three components of the T-REX preprocessing: the partition, the transfer set, and the ranks.
    Changes to the compact layout graph (\eg rerouting a line) are fairly rare and can often be incorporated without recomputing the partition.
    Updating the transfer set is discussed in~\cite{Wit21}.
    The ranks can be updated in a bottom-up fashion by re-running the customization in the affected cells.
    Moreover, a cell~$\cell$ on level~$\aLevel$ does not need to be processed if the customization on level~$\aLevel-1$ did not change the ranks of any transfers.
    In Appendix~\ref{app:optimizations}, we present more sophisticated update algorithms that exploit the fact that updates usually only affect a short time window in practice.

	\section{Experiments}
	\label{ch:experiments}
	All algorithms are implemented in \cpp and compiled using GCC $14.2.0$ with \texttt{-march=native -O3}.
	The code is publicly available~\cite{githubGitHubPatrickSteilTREX, githubGitHubTransitRoutingFLASHTB}, except for CSA and ACSA, which were provided to us by the authors~\cite{Dib18}.
	All precomputations besides partitioning were run on a machine with two $\numprint{64}$-core AMD Epyc 7742 CPUs clocked at $\numprint{2.25}\,\si{\giga\hertz}$, with a boost frequency of $\numprint{3.4}\,\si{\giga\hertz}$, $\numprint{1024}\,\si{\giga\byte}$ of DDR4-3200 RAM, and $\numprint{256}\,\si{\mega\byte}$ of L3 cache.
	Partitioning and queries were performed on a machine with two $\numprint{8}$-core Intel Xeon Gold-6144 CPUs clocked at $\numprint{3.5}\,\si{\giga\hertz}$, with a boost frequency of $\numprint{4.2}\,\si{\giga\hertz}$, $\numprint{192}\,\si{\giga\byte}$ of DDR4-2666 RAM, and $\numprint{24.75}\,\si{\mega\byte}$ of L3 cache.
	
	\subparagraph*{Datasets.}
    Our datasets~\cite{sauer_2026_19697732} include a metropolitan region, two country-wide networks and a Europe-wide network; \cf Table~\ref{tab:datasets} (additional datasets are evaluated in Appendix~\ref{app:experiments:gb-berlin}).
    The datasets of Switzerland and Paris were extracted from official General Transit Feed Specification (GTFS) feeds~\cite{gtfsHomeGeneral}.
    The Germany dataset was extracted from gtfs.de~\cite{gtfsGTFSDEGTFS}.
    We created the Europe dataset by extracting the datasets from all European countries from Transitous~\cite{transitousTransitous} and merging with using the tool gtfstidy~\cite{githubGitHubPatrickbrgtfstidy}.
    The footpaths are taken from the GTFS data except for Europe, where we connected all pairs of stops closer than $\numprint{100}\,\si{\metre}$.
    For all networks, we then computed the transitive closure.
    As a result, the Europe dataset contains many footpaths longer than $\numprint{100}\,\si{\metre}$.
    To compute the traversal times, we assumed a walking speed of $\numprint{1.43}$\,m/s.
    For each dataset, we selected a service period of two consecutive weekdays.
    We grouped the trips into lines with the simple greedy algorithm described in~\cite{Bau23}.

	\begin{table}[ht]
        \caption{ 
            An overview of the networks used in our experiments.
            Also shown are the sizes of the TB transfer set~$\transfers$ and the compact layout graph~$\layoutGraph$.
        }
		\label{tab:datasets}
		\centering
		\begin{tabular*}{\textwidth}{@{\,}l@{\extracolsep{\fill}}r@{\extracolsep{\fill}}r@{\extracolsep{\fill}}r@{\extracolsep{\fill}}r@{\,}}
			\toprule                      & Europe                     & Germany                    & Switzerland               & Paris                      \\
			\midrule Stops                & $\numprint{1346013}$         & $\numprint{435550}$         & $\numprint{29045}$          & $\numprint{41757}$          \\
			Stop events                   & $\numprint{107252199}$       & $\numprint{30680868}$      & $\numprint{5032795}$        & $\numprint{4636238}$        \\
			Lines                         & $\numprint{384075}$          & $\numprint{202238}$          & $\numprint{15967}$          & $\numprint{9558}$          \\
			Trips                         & $\numprint{4848876}$         & $\numprint{1547126}$         & $\numprint{319159}$         & $\numprint{215526}$         \\
			Footpaths                     & $\numprint{1752418}$         & $\numprint{1116976}$         & $\numprint{22186}$          & $\numprint{445912}$          \\
            \noalign{\vskip 5pt}
			TB $\absoluteVal{\transfers}$ & $\numprint{269766664}$       & $\numprint{59181359}$       & $\numprint{8075627}$        & $\numprint{23284123}$        \\
			$\absoluteVal{\layoutVertices}$      & $\numprint{778419}$          & $\numprint{308004}$         & $\numprint{24257}$          & $\numprint{13765}$          \\
			$\absoluteVal{\layoutEdges}$  & $\numprint{2298358}$         & $\numprint{1038032}$         & $\numprint{60154}$          & $\numprint{45882}$          \\
			\bottomrule
		\end{tabular*}
	\end{table}

    \subparagraph*{Setup.}
    We compare T-REX to TB, RAPTOR, CSA, ACSA and FLASH-TB.
    In order to use compact data structures, we limit all round-based algorithms to $\numprint{16}$ rounds.
    Optimal journeys with more than $\numprint{16}$ trips are extremely rare and very unlikely to be considered desirable by users.
    Further implementation details are discussed in Appendix~\ref{app:experiments:implementation}.
    We report average query times over $\numprint{10000}$ queries, with the source and target stops chosen uniformly at random, and the departure times chosen uniformly at random within the first day of the service period.
    The time for journey unpacking, which takes a few microseconds, is excluded.
    For ACSA, we evaluated the same partitions as for T-REX and report the configuration with the lowest query times.
    For FLASH-TB, we computed the partitions using KaHIP, as in~\cite{Gro25}.

	\subparagraph*{Partitioning.}
    To compute the multilevel partition of the compact layout graph, we use a custom version of the open-source partitioner Mt-KaHyPar~\cite{Got22,Got23,Got24,githubGitHubKahyparmtkahypar}.
    Although Mt-KaHyPar was designed primarily for hypergraph partitioning, it also supports ordinary graphs.
    Normally, Mt-KaHyPar adaptively reduces the allowed imbalance on the higher levels to ensure that  the final $2^\numLevels$-partition still respects the global imbalance bound~$\varepsilon$~\cite{Sch16}.
    Our custom version~\cite{githubGitHubPatrickSteilmtkahypar} instead uses the given imbalance~$\varepsilon$ on each level.
    Allowing extreme imbalances at the lower levels can negatively affect both the customization time, as the larger cells are more expensive to process, and the query time, as more source-target pairs have a low LCL.
    However, we found that these effects were outweighed by the smaller cut size.
    We parallelized Mt-KaHyPar with~$6$ cores, as we observed that adding further cores degraded the partition quality.
    In all subsequent experiments, we evaluated four imbalance values ($\numprint{25}\,\%$, $\numprint{50}\,\%$, $\numprint{75}\,\%$ and $\numprint{100}\,\%$) for each network and each choice for the number of levels~$\numLevels$.
    We report the configuration with the lowest query times (cf.~\Cref{tab:varinglevel} and Appendix~\ref{app:experiments:metrics}).

    Details on the choice of the partitioner are given in Appendix~\ref{app:experiments:partition}.
    We observed that our version of Mt-KaHyPar produces much better results on our networks than KaHIP~\cite{San13,githubGitHubKaHIPKaHIP}.
    In particular, the cuts on the highest few levels are drastically smaller, with a reduction in the number of IBEs by up to two orders of magnitude.
    The partitioning time is reduced by a factor of~$\numprint{14.4}$, the customization time by a factor of~$\numprint{2.0}$ and the query time by a factor of~$\numprint{2.5}$.
	
    \subparagraph*{Preprocessing.}
    On the Europe instance, the customization requires $46$ minutes when run sequentially.
    The efficiency of the parallelization, \ie the speedup over sequential execution divided by the number of used threads, remains above $\numprint{0.6}$ for up to $\numprint{64}$ threads.
    For $\numprint{128}$ threads, the efficiency decreases to $\numprint{0.4}$, for a total customization time of~$\numprint{54}$ seconds.
    The decrease in the efficiency is likely to the limited memory bandwidth of the AMD Eypc machine, which has been observed before~\cite{Gro25}.
    \Cref{tab:prepro} reports the times for the different preprocessing phases of T-REX and ACSA.
    The T-REX customization takes roughly twice as long as the transfer generation.
    The total T-REX preprocessing time is similar to that of ACSA on the larger networks and faster on the smaller ones, despite having an additional phase.
    On the smaller networks, the transfer generation and customization are fast enough that they can be re-run every few seconds to incorporate updates.
    This is not feasible on Europe, where processing the entire two-day timetable takes $78$ seconds.
    However, updates typically only affect a small time window, which can be processed much faster.
    For example, processing the entire rush-hour time window between $7$ and $9$\,AM on the first day requires~$8$ seconds for the transfer generation and~$5$ seconds for the customization.

    \begin{table}[t]
		\caption{
    		Preprocessing times for T-REX and ACSA by phase, measured in seconds.
		}
		\label{tab:prepro}
		\centering
		\begin{tabular*}{\textwidth}{@{\,}l@{\extracolsep{\fill}}r@{\extracolsep{\fill}}r@{\extracolsep{\fill}}r@{\extracolsep{\fill}}r@{\extracolsep{\fill}}r@{\extracolsep{\fill}}r@{\extracolsep{\fill}}@{\,}}
			\toprule
            \multirow{2}{*}{Network} & \multirow{2}{*}{Partitioning} & \multicolumn{3}{c}{T-REX} & \multicolumn{2}{c}{ACSA} \\
            \cmidrule(){3-5} \cmidrule(){6-7}
            & & Transfers & Customization & Total & Customization & Total \\
			\midrule
            Europe & 53.52 & 24.02 & 54.28 & 131.82 & 94.19 & 147.71 \\
            Germany & 22.67 & 5.40 & 10.97 & 39.04 & 12.89 & 35.56 \\
            Switzerland & 1.49 & 0.42 & 0.85 & 2.76 & 2.72 & 4.21\\
            Paris & 1.04 & 2.09 & 4.82 & 7.95 & 17.19 & 18.22 \\
			\bottomrule
		\end{tabular*}
	\end{table}

    \subparagraph*{Memory Consumption.}
    T-REX without overlays has a memory overhead~$5$--$9$\,\% over TB, required for the cell IDs and transfer ranks.
    The overlay variant requires~$2$--$4$ times as much space as TB and~$12$--$18$ times as much as the timetable itself.
    This is mainly due to the $\successor{\aTrip[i]}{\aLevel}$ data structure and the transfer overlays, which exist per level.
    For Europe, the total memory consumption is $\numprint{8.1}\,\si{\giga\byte}$ for TB, $\numprint{8.5}\,\si{\giga\byte}$ for basic T-REX and~$\numprint{27.8}\,\si{\giga\byte}$ for the overlay variant.
    Of these, $\numprint{5.2}\,\si{\giga\byte}$ are required for the timetable and the transfers.
    We note that both TB and T-REX are tuned to prioritize query speed over a low memory footprint.
    We discuss further details and ways to reduce the memory consumption in Appendix~\ref{app:experiments:memory}.

    \subparagraph*{Query.}
    \Cref{tab:varinglevel} reports the performance of T-REX on Europe depending on the number of levels.
    The query time correlates most closely with the number of relaxed transfers, as most time is spent in the~$\Enqueue$ method.
    Adding more levels consistently improves the performance until a plateau is reached at nine levels.
    With a mean query time of $\numprint{8.7}\,\si{\milli\second}$, T-REX is fast enough for interactive applications.

    \begin{table}[ht]
		\caption{
            Performance of algorithms, averaged over $\numprint{10000}$ random queries.
        }
		\label{tab:uniformrandomquery:country}
		\centering
		\begin{tabular*}{\textwidth}{@{\,}l@{\extracolsep{\fill}}l@{\extracolsep{\fill}}c@{\extracolsep{\fill}}r@{\extracolsep{\fill}}r@{\extracolsep{\fill}}r@{\,}}
			\toprule
            \multirow{2}{*}{Network} & \multirow{2}{*}{Algorithm} & \multirow{2}{*}{Pareto} & \multirow{2}{*}{Query $\left[\si{\mu\second}\right]$} & Preprocessing & \multirow{2}{*}{$k$}\\
            & & & & $\left[ \mathrm{hh}{:}\mathrm{mm}{:}\mathrm{ss}\right]$ & \\
			\midrule 
			\multirow{5}{*}{Europe} & CSA       & $\circ$   & $\numprint{392486}$          & --                                                                        & --               \\
			& ACSA      & $\circ$   & $\numprint{36110}$           & \printTime{0}{2}{28}                                                     & $\numprint{16384}$ \\
			& RAPTOR    & $\bullet$ & $\numprint{715512}$          & --                                                                        & --               \\
			& TB        & $\bullet$ & $\numprint{196536}$          & \printTime{0}{0}{24}                                                     & --               \\
			& T-REX     & $\bullet$ & $\numprint{8725}$           & \printTime{0}{2}{12}                                                    & $\numprint{4096}$ \\
            \noalign{\vskip 8pt}
			\multirow{7}{*}{Germany} & CSA       & $\circ$   & $\numprint{83101}$           & --                                                                        & --               \\
			& ACSA      & $\circ$   & $\numprint{19206}$           & \printTime{0}{0}{36}                                                     & $\numprint{1024}$ \\
			& RAPTOR    & $\bullet$ & $\numprint{212255}$          & --                                                                        & --               \\
			& TB        & $\bullet$ & $\numprint{65060}$           & \printTime{0}{0}{5}                                                     & --               \\
			& T-REX     & $\bullet$ & $\numprint{4442}$            & \printTime{0}{0}{39}                                                     & $\numprint{16384}$ \\
			& FLASH-TB (moderate) & $\bullet$ & $\numprint{7871}$             & \printTime{29}{39}{30}                                                   & $\numprint{8}$\\
			& FLASH-TB (fastest) & $\bullet$ & $\numprint{82}$             & \printTime{31}{08}{21}                                                   & $\numprint{8192}$ \\
            \noalign{\vskip 8pt}
			\multirow{7}{*}{Switzerland} & CSA       & $\circ$   & $\numprint{3931}$               & --                                                                        & --               \\
			& ACSA      & $\circ$   & $\numprint{1901}$               & \printTime{0}{0}{4}                                                      & $\numprint{256}$ \\
			& RAPTOR    & $\bullet$ & $\numprint{10719}$           & --                                                                        & --               \\
			& TB        & $\bullet$ & $\numprint{4569}$            & < \printTime{0}{0}{1}                                                      & --               \\
			& T-REX     & $\bullet$ & $\numprint{710}$            & \printTime{0}{0}{3}                                                      & $\numprint{512}$ \\
			& FLASH-TB (moderate) & $\bullet$ & $\numprint{454}$              & \printTime{0}{23}{53}                                                     & $\numprint{8}$ \\
			& FLASH-TB (fastest) & $\bullet$ & $\numprint{15}$              & \printTime{0}{16}{43}                                                     & $\numprint{8192}$ \\
            \noalign{\vskip 8pt}
			\multirow{7}{*}{Paris} & CSA       & $\circ$   & $\numprint{3280}$            & --                                                                        & --                \\
			& ACSA      & $\circ$   & $\numprint{6312}$            & \printTime{0}{0}{18}                                                     & $\numprint{64}$    \\
			& RAPTOR    & $\bullet$ & $\numprint{8562}$            & --                                                                        & --                \\
			& TB        & $\bullet$ & $\numprint{3472}$            & \printTime{0}{0}{2}                                                     & --                \\
			& T-REX     & $\bullet$ & $\numprint{1641}$            & \printTime{0}{0}{8}                                                     & $\numprint{256}$   \\
			& FLASH-TB (moderate) & $\bullet$ & $\numprint{899}$              & \printTime{0}{29}{25}                                                    & $\numprint{8}$ \\
			& FLASH-TB (fastest)  & $\bullet$ & $\numprint{22}$              & \printTime{0}{29}{46}                                                    & $\numprint{16384}$ \\
			\bottomrule
		\end{tabular*}
	\end{table}
    
    \Cref{tab:uniformrandomquery:country} compares the performance of T-REX to other algorithms.
    The speedup over TB is $\numprint{22.2}$ on Europe, $\numprint{14.6}$ on Germany, $\numprint{6.4}$ on Switzerland and $\numprint{2.1}$ on Paris.
    Compared to RAPTOR, which is the fastest Pareto-optimal algorithm without significant preprocessing, T-REX achieves a speedup between $\numprint{5.2}$ and~$\numprint{82.0}$.
    The speedup over TB on each network is strongly correlated with the distribution of the ranks among the transfer set, depicted in \Cref{fig:rank}.
    On Europe and Germany, more than~$50$\,\% of all transfers have a rank of~$0$, which indicates that they are only relevant for local journeys.
    On the smaller networks, this share is smaller but still a sizable plurality.
    The smaller speedup on Paris is explained by the flatter rank distribution.
    This is in line with similar findings for ACSA~\cite{Dib18} and FLASH-TB~\cite{Gro25}, which suggest that metropolitan networks exhibit only a weak hierarchy.
    However, we stress that RAPTOR is already fast enough for interactive applications on these networks.

    \begin{figure}[t]
        \centering
        \begin{tikzpicture}

\begin{groupplot}[
    group style={
        group size=2 by 2, 
        horizontal sep=2cm, 
        vertical sep=2cm},
    width=6cm,
    height=5cm,
    ybar=0pt,
    /pgf/bar width=7pt,
    ymin=0,
    ymax=100,
    xlabel={Rank},
    ylabel={Transfers (\%)},
    ymajorgrids,
    grid style={dashed,gray!60},
    every axis/.append style={
        tick label style={font=\footnotesize},
        label style={font=\small},
    }
]

\nextgroupplot[title={Europe}]
\addplot+[draw=black, fill=blue!40]
table[x=rank,y=percent,col sep=comma] {plot_scripts/data/europe.csv};

\nextgroupplot[title={Germany}]
\addplot+[draw=black, fill=blue!40]
table[x=rank,y=percent,col sep=comma] {plot_scripts/data/germany.csv};

\nextgroupplot[title={Switzerland}]
\addplot+[draw=black, fill=blue!40]
table[x=rank,y=percent,col sep=comma] {plot_scripts/data/switzerland.csv};

\nextgroupplot[title={Paris}]
\addplot+[draw=black, fill=blue!40]
table[x=rank,y=percent,col sep=comma] {plot_scripts/data/paris.csv};

\end{groupplot}

\end{tikzpicture}
        \caption{
            Distribution of the ranks among the transfer set. 
		}
		\label{fig:rank}
    \end{figure}

    The preprocessing of FLASH-TB does not scale to Europe.
    On the other networks, the fastest configuration is faster than T-REX by up to two orders of magnitude.
    A moderate configuration with~$8$ cells (with a memory overhead similar to T-REX without overlays) has query times similar to T-REX (more details in Appendix~\ref{app:experiments:flash-tb}).
    Compared to ACSA, which only minimizes the arrival time, T-REX is faster by a factor of~$3$--$4$ and consistently achieves a larger speedup over TB than ACSA over CSA.
    On Paris, ACSA even causes a slowdown because the overhead for merging the relevant connections dominates the scanning time.
    By contrast, T-REX achieves a speedup due to a much lower overhead during the query.

    In Appendix~\ref{app:experiments:georank}, we additionally evaluate how the query time of T-REX scales with the distance between source and target.
    In Appendix~\ref{app:experiments:profile}, we evaluate profile queries, in which the objective is to find Pareto-optimal journeys within a range of possible departure times.

    \subparagraph*{Comparison with Scalable TP.}
    As we do not have access to the code for Scalable TP, we compare to the performance reported in~\cite{Bas16} for a Germany instance with a similar number of stop events as ours, but half as many stops.
    Reported are a sequential precomputation time of $\numprint{16.5}\,\si{\hour}$ and a memory overhead of $\numprint{1160}\,\si{\mega\byte}$.
    By contrast, T-REX without overlays requires $10\,\si{\minute}$ and a memory overhead of $\numprint{1763}\,\si{\mega\byte}$.
	The reported query times for Scalable TP are $\numprint{0.1}\,\si{\milli\second}$ for local queries in convex clusters, $\numprint{5}\,\si{\milli\second}$ in non-convex clusters, and $\numprint{32}\,\si{\milli\second}$ for global queries.
    T-REX without overlays requires $\numprint{7.8}\,\si{\milli\second}$ for random queries and is much faster for local queries (\cf Appendix~\ref{app:experiments:georank}).
    Accounting for differences in machines and instances, we observe that T-REX significantly outperforms Scalable TP in terms of preprocessing and query time and has a comparable memory consumption if overlays are not used.
	
	\begin{table}[t]
		\caption{
            Performance of T-REX (with overlays) with different numbers of levels on the Europe network, averaged over $\numprint{10000}$ random queries.
    		For~$0$ levels, we report the metrics for TB.
        }
		\label{tab:varinglevel}
		\centering
		\begin{tabular*}{\textwidth}{@{\,}r@{\extracolsep{\fill}}r@{\extracolsep{\fill}}r@{\extracolsep{\fill}}r@{\extracolsep{\fill}}r@{\extracolsep{\fill}}r@{\,}}
			\toprule
			\multirow{2}{*}{Levels} & Imbalance & \multirow{2}{*}{Scanned trips} & \multirow{2}{*}{Relaxed transfers} & Query & Customization \\
            &  $\left[\%\right]$ & & & $\left[\si{\mu\second}\right]$ & $\left[ \mathrm{mm}{:}\mathrm{ss}\right]$\\
			\midrule
            $\numprint{0}$ & -- & $\numprint{750177}$ & $\numprint{14648126}$ & $\numprint{196536}$ & -- \\
            $\numprint{1}$ & $\numprint{100}$ & $\numprint{660736}$ & $\numprint{12912414}$ & $\numprint{185961}$ & \highlightCell \printMinuteSeconds{0}{4} \\
            $\numprint{2}$& $\numprint{25}$ & $\numprint{466844}$ & $\numprint{9171587}$ & $\numprint{125162}$ & \printMinuteSeconds{0}{20} \\
            $\numprint{3}$ & $\numprint{25}$ & $\numprint{275178}$ & $\numprint{5262951}$ & $\numprint{70455}$ & \printMinuteSeconds{0}{22} \\
            $\numprint{4}$ & $\numprint{50}$ & $\numprint{204557}$ & $\numprint{3751287}$ & $\numprint{50840}$ & \printMinuteSeconds{0}{26} \\
            $\numprint{5}$ & $\numprint{25}$ & $\numprint{105291}$ & $\numprint{1767891}$ & $\numprint{25034}$ & \printMinuteSeconds{0}{18} \\
            $\numprint{6}$ & $\numprint{25}$ & $\numprint{77219}$ & $\numprint{1151529}$ & $\numprint{17650}$ & \printMinuteSeconds{0}{21} \\
			$\numprint{7}$ & $\numprint{25}$ & $\numprint{55840}$ & $\numprint{719835}$ & $\numprint{12409}$ & \printMinuteSeconds{0}{18} \\
			$\numprint{8}$ & $\numprint{50}$ & $\numprint{52291}$ & $\numprint{652466}$ & $\numprint{11262}$ & \printMinuteSeconds{0}{19} \\
			$\numprint{9}$ & $\numprint{25}$ & $\numprint{44365}$ & $\numprint{445781}$ & $\numprint{9224}$ & \printMinuteSeconds{0}{26} \\
			$\numprint{10}$ & $\numprint{25}$ & $\numprint{43574}$ & $\numprint{409629}$ & $\numprint{9031}$ & \printMinuteSeconds{0}{36} \\
			$\numprint{11}$ & $\numprint{50}$ & $\numprint{44366}$ & $\numprint{415962}$ & $\numprint{9603}$ & \printMinuteSeconds{0}{32} \\
			$\numprint{12}$ & $\numprint{25}$ & $\numprint{42951}$ & $\numprint{371848}$ & $\numprint{9596}$ & \printMinuteSeconds{0}{47} \\
            $\numprint{13}$ & $\numprint{25}$ & $\numprint{43530}$ & $\numprint{377816}$ & $\numprint{9104}$ & \printMinuteSeconds{0}{58} \\
            $\numprint{14}$ & $\numprint{50}$ & \highlightCell $\numprint{41966}$ & $\numprint{366894}$ & \highlightCell $\numprint{8725}$ & \printMinuteSeconds{0}{54} \\
            $\numprint{15}$ & $\numprint{50}$ & $\numprint{42212}$ & \highlightCell $\numprint{358524}$ & $\numprint{9006}$ & \printMinuteSeconds{1}{3} \\
            $\numprint{16}$ & $\numprint{75}$ & $\numprint{43870}$ & $\numprint{404335}$ & $\numprint{9469}$ & \printMinuteSeconds{0}{53} \\
			\bottomrule
		\end{tabular*}
	\end{table}

    \subparagraph*{Impact of Ranked Element and Overlays.}
    \Cref{tab:rankedelement} reports the performance of T-REX without overlays if the ranks are assigned to stop events instead of transfers, \ie the $\Enqueue$ operation for a transfer~$\transfer{\tripA[j]}{\tripB[i]}$ performs the LCL test using~$\rank{\tripB[i]}$.
    This variant achieves a speedup of $\numprint{3.3}$ over TB.
    Ranking transfers instead yields a further speedup of $\numprint{2.4}$, which supports the findings by Großmann \etal\cite{Gro25} that transfers offer more fine-grained information than stop events.
    With ranked transfers but without overlays, the number of relaxed transfers is reduced by a factor of $\numprint{7.6}$ over TB, yielding a speedup of $\numprint{7.9}$.
    Overlays lead to further improvements despite the overhead of the $\Split$ method: they reduce the number of relaxed transfers by a factor of $\numprint{4.8}$ and provide a further speedup of $\numprint{2.5}$.

    \begin{table}[t]
		\caption{
    		Query performance on Europe (14 levels, $50\,\%$ imbalance), depending on which network element is ranked. In the overlay variant, scanned trips refer to the trip segments after splitting.
		}
		\label{tab:rankedelement}
		\centering
		\begin{tabular*}{\textwidth}{@{\,}l@{\extracolsep{\fill}}c@{\extracolsep{\fill}}r@{\extracolsep{\fill}}r@{\extracolsep{\fill}}r@{\extracolsep{\fill}}@{\,}}
			\toprule
            Ranks for & Overlays & Scanned trips & Relaxed transfers & Query $\left[\si{\mu\second}\right]$ \\
			\midrule
            None & $\circ$ & \numprint{750177} & \numprint{14648126} & \numprint{196536} \\
            Stop events & $\circ$ & \numprint{79774} & \numprint{3272304}  & \numprint{46706} \\
            Transfers & $\circ$ & \numprint{22696} & \numprint{1180213}  & \numprint{15949} \\
            Transfers & $\bullet$ & \numprint{41966} & \numprint{366894}   & \numprint{8725} \\
			\bottomrule
		\end{tabular*}
	\end{table}

	\section{Conclusion}
	\label{ch:conclusion}

	We presented T-REX, a new public transit journey planning algorithm that integrates hierarchical pruning information into Trip-Based Public Transit Routing~\cite{Wit15} (TB).
    T-REX achieves a speedup of up to~20 over TB, offering query times of less than $\numprint{10}\,\si{\milli\second}$ on continent-sized networks with moderate memory overhead and only a few minutes of precomputation time.
    Together with TB and FLASH-TB, T-REX demonstrates that precomputed transfers are a powerful tool for accelerating public transit journey planning and that they harmonize much better with additional pruning information than other network elements.
    	
	An open problem is that the performance of the TB transfer generation step is dependent on the size of the footpath set.
    Although ULTRA~\cite{Bau23} can be used to extend TB to unlimited footpaths, its preprocessing step does not scale well to large networks.
    For T-REX, we introduced further restrictions by requiring the footpath set to be transitively closed, and by contracting footpaths during partitioning.
    This was done mainly for ease of exposition, and we believe that these restrictions can be lifted.
    For example, initial footpaths that leave the source cell can be handled by treating their incident stop events as border events.
    However, this introduces new scalability issues, and thus extending T-REX to unlimited footpaths will likely require new ideas.
    Other interesting avenues for future research include combining hierarchical and goal-directed pruning in a TB-based algorithm, and extending T-REX to handle timetables that span longer time periods, which has already been done for TB~\cite{Wit21}.

    \bibliography{bibliography}
	\clearpage
    \appendix
    \section{Details on Trip-Based Public Transit Routing}
    \label{app:tb}
    We provide details on the TB algorithm that were omitted from~\Cref{ch:fundamentals}, specifically the transfer precomputation and the journey unpacking step.

    \subparagraph*{Transfer Precomputation.}
    This step computes a set $\transfers\subseteq\stopEvents\times\stopEvents$ of transfers between pairs of stop events that is sufficient for answering all queries optimally.
    This is done in three steps:
    First, a large set of potentially relevant transfers is generated.
    Then, two pruning rules are applied to remove irrelevant transfers.
    All three steps are trivial to parallelize, as trips are processed independently.

    For the initial generation step, Witt~\cite{Wit15} originally proposed a simple scheme.
    It generates all transfers~$\transfer{\tripA[j]}{\tripB[i]}$ such that~$1 < j \leq \absoluteVal{\tripA}$, $1 \leq u < \absoluteVal{\tripB}$, and~$\tripB$ is the earliest trip of its line that can be entered at~$\aStop(\tripB[i])$, \ie the earliest trip that fulfills
	\begin{equation*}
		\arrTime{\tripA[j]}+\transfertime{\aStop(\tripA[j])}{\aStop(\tripB[i])}\leq \depTime {\tripB[i]}.
	\end{equation*}
    The only situation if which the transfer is directly discarded is if both trips belong to the same line and staying seated in $\tripA$ is preferable, \ie if $\tripA \preceq \tripB$ and $j \leq i$.
    In our implementation, we use the alternative transfer generation scheme proposed by Lehoux and Loiodice~\cite{Leh20}, which improves the preprocessing time by discarding some superfluous transfers before they are generated.
    If a transfer~$\transfer{\tripA[i]}{\tripB[j]}$ exists, then all other transfers of the form~$\transfer{\tripA[k]}{\tripC[\ell]}$ with~$k \leq i$, $\ell \geq j$ and~$\tripC\succeq\tripB$ are not generated.

    After the initial transfer set has been generated, the \emph{U-turn rule} prunes transfers $\aTransfer = \left(\tripA[i], \tripB[i]\right)$ with $\aStop(\tripA[j-1]) = \aStop(\tripB[i+1])$ and $ \left(\tripA[j-1], \tripB[i+1]\right) \in \transfers$.
    Note that the original TB publication~\cite{Wit15} used a weaker variant of the second condition: it was merely required that~$\arrivalTime(\tripA[j-1]) \leq \arrivalTime(\tripB[i+1])$, i.e., the transfer $ \left(\tripA[j-1], \tripB[i+1]\right)$ is valid but not necessarily included in~$\transfers$.
    This is correct as long as the U-turn rule is applied first, but it causes issues if other pruning rules are applied before it.
    
	The last rule, which is referred to as \textit{latest-exit} rule~\cite{Gro25}, removes transfers that are ``dominated'' by others starting from the same trip. 
    For a transfer $\aTransfer = \left(\tripA[j], \tripB[i]\right)$, consider the journey $\aJourney = \left< \tripSegment{\tripA}{k}{j}, \tripSegment{\tripB}{i}{l}\right>$. The transfer can be pruned if there exists another journey $\aJourney' = \left< \tripSegment{\tripA}{k}{j'}, \tripSegment{\tripC}{i'}{l'}, \footpath\right>$, with $j < j'$ and $\footpath = \left(\aStop\left(\tripC[l']\right), \aStop\left(\tripB[l]\right)\right)$ which has a better (or equal) arrival time at $\aStop\left(\tripB[l]\right)$.

    \subparagraph*{Enqueue Operation.}
    Pseudocode for the $\Enqueue$ operation of a transfer~$\transfer{\tripA[i]}{\tripB[j]}$ in round~$n$ is shown in~\Cref{alg:tb:enq}.
    
    \begin{algorithm}[t]
        \caption{
            TB enqueuing operation.
        }\label{alg:tb:enq}
        \myproc{\Enqueue{$\transfer{\tripA[i]}{\tripB[j]},\aQueue_{n+1}$\label{alg:tb:enq:begin}}}{
            \lIfComment{trip segment already reached}{$\reachedIndex\left(\tripB\right)\leq{}j+1$}{\Return}
            $\aQueue_{n+1}\leftarrow\aQueue_{n+1}\cup\left\{\tripSegment{\tripB}{j+1}{\reachedIndex(\tripB)-1}\right\}$\;
            \ForEachComment{update reached index}{$\tripB'\succeq\tripB$}{
                \lIf{$\reachedIndex\left(\tripB'\right)\leq{}j+1$}{\Break}
                $\reachedIndex\left(\tripB'\right)\leftarrow{}j+1$\label{alg:tb:enq:end}\;
            }
        }
    \end{algorithm}

    \subparagraph*{Journey Unpacking.}
    The found journeys are represented implicitly using parent pointers.
    Let~$\tripSegment{\aTrip_{n-1}}{j_{n-1}}{k_{n-1}} \in \aQueue_{n-1}$ be a trip segment that is scanned in round~$n-1$, and let~$\transfer{\aTrip_{n-1}[i_{n-1}]}{\aTrip_n[j_n]}$ be an outgoing transfer that is relaxed, and let~$\tripSegment{\aTrip_n}{j_n}{k_n}$ be the trip segment that is added to~$\aQueue_n$ as a result.
    Then~$\tripSegment{\aTrip_n}{j_n}{k_n}$ stores a pointer to~$\tripSegment{\aTrip_{n-1}}{j_{n-1}}{k_{n-1}} \in \aQueue_{n-1}$.
    The index~$i_{n-1}$ at which~$\aTrip_{n-1}$ is not stored explicitly but reconstructed during journey unpacking by rescanning the outgoing transfers of~$\tripSegment{\aTrip_{n-1}}{j_{n-1}}{k_{n-1}}$ and choosing the first one that leads to~$\aTrip_n[j_n]$.
    This is faster than tracking this value explicitly during the query.
    Along with each cost vector in~$\labels$, the algorithm also stores a pointer to the trip segment from which the final footpath was relaxed.
    Using this information, the journey can be reconstructed by iteratively following the chain of parent pointers to the first trip segment in~$\aQueue_1$.

    \subparagraph*{Profile Queries.}
    TB can be extended to answer \emph{profile queries}.
    A profile query $\query = (\sourceStop, \targetStop, [\atime_{\mathrm{start}}, \atime_{\mathrm{end}}])$ consists of a source stop~$\sourceStop$, a target stop~$\targetStop$ and an interval~$[\atime_{\mathrm{start}}, \atime_{\mathrm{end}}]$ of departure times.
    It asks for sets~$C$ and~$\paretoRep$ such that $C$ is the union of the Pareto fronts for all  queries~$\query'=(\sourceStop,\targetStop,\departureTime)$ and~$\paretoRep$ is the union of representative sets for all such queries.
    To differentiate between the two query types, we refer to queries with a single given departure time as \emph{fixed departure time queries}.
    
    To answer a profile query~$\query = (\sourceStop, \targetStop, [\atime_{\mathrm{start}}, \atime_{\mathrm{end}}])$, the Profile-TB query algorithm exploits the \emph{self-pruning}~\cite{Del12} principle: solutions computed by a fixed departure time query for the departure time~$\departureTime$ are still feasible for an earlier departure time~$\departureTime'$.
    Furthermore, the set of relevant departure times is discrete: it is the set of all~$\departureTime(\aTrip[i])$ such that there is an initial footpath~$(\sourceStop,\aStop(\aTrip[i]))$ with~$\departureTime(\aTrip[i])-\transfertime{\sourceStop}{\aStop(\aTrip[i])} \in [\atime_{\mathrm{start}}, \atime_{\mathrm{end}}]$.
    Note that the initial footpath may be empty, \ie $\sourceStop = \aStop(\aTrip[i])$.
    These departure times are collected, sorted in descending order and the regular TB query algorithm is performed for each of them.
    The crucial optimization is that the reached indices and cost vectors are not reset between each run.
    This ensures that journeys found for a later departure time can prune dominated journeys for an earlier one.

    \section{T-REX Optimizations}
    \label{app:optimizations}
    In this section, we discuss performance optimizations for the T-REX query and customization that were omitted from~\Cref{ch:trex}.

    \subparagraph*{Query.}
    In the T-REX query, recall that a transfer~$\aTransfer$ starting from a stop~$\aStop$ only needs to be relaxed if it passes the LCL test, which is
    \begin{equation}\label{eq:lcl-app}
		\rank{\aTransfer}\geq \min\left\{\lcl{\aStop}{\sourceStop}, \lcl{\aStop}{\targetStop}\right\}.
	\end{equation}
    This test can be performed efficiently via bit manipulation.
    For simplicity, we write $\cellId{\vertexA}=\cellIdL{\vertexA}{0}$.
    Note that in binary representation, we have $\cellIdL{\vertexA}{\aLevel} = \cellId{\vertexA} \shift \aLevel$, where $\shift$ denotes the right-shift operator.
    In \cpp, the LCL of two stops $\stopA, \stopB$ can be computed as
  	\begin{equation*}
	    \lcl{\stopA}{\stopB} = \texttt{std::bit\_width}\left(\cellId{\stopA} \oplus \cellId{\stopB}\right),
	\end{equation*}
    where $\oplus$ denotes bitwise XOR.
	The XOR value has a 1 for each level in which both vertices are in different cells.
	We obtain the highest such level with the \texttt{C++20} function \texttt{std::bit\_width}, which returns the number of bits needed to represent the number passed as an argument.
    For example, let~$\cellId{\stopA}=001$ and
	$\cellId{\stopB}=011$ in binary representation. 
    Bitwise XOR yields $\cellId{\stopA}\oplus
	\cellId{\stopB}=010$.
    Hence, $\lcl{\stopA}{\stopB} = \texttt{std::bit\_width}(010) = 2$.
    In the basic query variant, where the LCL test is performed explicitly in the~$\Enqueue$ method, we can make it more efficient without needing to calculate the two LCL values explicitly.
    Instead, we shift the XOR result exactly as many bits to the right as the rank of the transfer.
    Formally, we implement the test as
    \begin{equation}
        \label{eq:shifteq}
        \left[\left(\cellId{\aStop} \oplus \cellId{\sourceStop}\right) \shift \rank{\aTransfer}\right] \neq 0
        \;\mathbin{\&\&}\;
        \left[\left(\cellId{\aStop} \oplus \cellId{\targetStop}\right) \shift \rank{\aTransfer}\right] \neq 0,
	\end{equation}
    where $\mathbin{\&\&}$ denotes the logical AND operator.
    In the query variant with overlays, no explicit LCL test is performed, so this optimization is irrelevant.

    \begin{observation}
        Equation~(\ref{eq:shifteq}) is equivalent to the negation of Equation~(\ref{eq:lcl-app}).
    \end{observation}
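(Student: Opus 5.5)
The argument reduces both sides to statements about bit positions of the XOR values. For stops $\stopA,\stopB$, write $x = \cellId{\stopA}\oplus\cellId{\stopB}$. Since $\cellIdL{\vertexA}{\aLevel} = \cellId{\vertexA}\shift\aLevel$, the two stops are in the same cell on level $\aLevel$ exactly when $(\cellId{\stopA}\shift\aLevel) = (\cellId{\stopB}\shift\aLevel)$. Right shift distributes over XOR, so this holds exactly when $x\shift\aLevel = 0$.

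The first step is a single-pair lemma: for any integer $r\ge 0$,
\[
\bigl[x \shift r\bigr]\neq 0 \iff \lcl{\stopA}{\stopB} > r .
\]
By definition, $\lcl{\stopA}{\stopB}$ is the lowest level on which the two stops share a cell. Being in the same cell is monotone upward in a nested bipartition: if two stops share a cell on level $\aLevel$, they share one on every higher level. This also follows directly from the shift formulation, because $x\shift\aLevel=0$ implies $x\shift(\aLevel+1)=0$. Hence the two stops share a cell on level $r$ iff $r\ge\lcl{\stopA}{\stopB}$. Equivalently, $x\shift r\neq 0$ iff $r<\lcl{\stopA}{\stopB}$. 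As a cross-check, this matches the formula $\lcl{\stopA}{\stopB}=\texttt{std::bit\_width}(x)$: $x\shift r\neq0$ iff $x$ has a set bit at a position $\ge r$, iff $\texttt{bit\_width}(x)>r$.

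The second step applies the lemma with $r=\rank{\aTransfer}$ to both conjuncts of Equation~(\ref{eq:shifteq}), once with $\stopB=\sourceStop$ and once with $\stopB=\targetStop$. Equation~(\ref{eq:shifteq}) then becomes
\[
\lcl{\aStop}{\sourceStop}>\rank{\aTransfer}\ \wedge\ \lcl{\aStop}{\targetStop}>\rank{\aTransfer},
\]
which is exactly $\min\{\lcl{\aStop}{\sourceStop},\lcl{\aStop}{\targetStop}\}>\rank{\aTransfer}$. That is the negation of Equation~(\ref{eq:lcl-app}).

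The only delicate point is the boundary convention. We need that levels run from $0$ to $\numLevels$, with all stops sharing a cell on level $\numLevels$, so that the LCL is always well defined. We also need that cell IDs on level $0$ have $\numLevels$ bits, so that shifting by any rank up to $\numLevels$ behaves as claimed; in particular, shifting by $\numLevels$ yields $0$, consistent with $\lcl{\cdot}{\cdot}\le\numLevels$. This is the step I would check most carefully. Everything else is a direct unfolding of the definitions.
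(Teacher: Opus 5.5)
Your proposal is correct and follows essentially the same route as the paper. For each $\aStop' \in \{\sourceStop,\targetStop\}$, you show that the shifted XOR is nonzero iff $\lcl{\aStop}{\aStop'} > \rank{\aTransfer}$, and then combine the two conjuncts into $\rank{\aTransfer} < \min\{\lcl{\aStop}{\sourceStop},\lcl{\aStop}{\targetStop}\}$. The paper gets the per-pair step from the identity $\lcl{\stopA}{\stopB} = \texttt{std::bit\_width}(\cellId{\stopA}\oplus\cellId{\stopB})$. You instead derive it from the definition of the LCL, the shift identity for cell IDs, and the fact that two stops sharing a cell on one level share a cell on every higher level, which is slightly more self-contained.
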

    \begin{proof}
        For $\aStop' \in \{\sourceStop,\targetStop\}$, we have
        \begin{align*}
            & \left[\left(\cellId{\aStop} \oplus \cellId{\aStop'}\right) \shift \rank{\aTransfer} \right] = 0\; \\
            \iff\;& \texttt{std::bit\_width}\left(\cellId{\aStop} \oplus \cellId{\aStop'}\right) \leq \rank{\aTransfer} \\
            \iff\;& \lcl{\aStop}{\aStop'} \leq \rank{\aTransfer}.
        \end{align*}
        Thus, Equation~(\ref{eq:shifteq}) is equivalent to
        \begin{equation*}
            \lcl{\aStop}{\sourceStop} > \rank{\aTransfer} \land \lcl{\aStop}{\targetStop} > \rank{\aTransfer}
        \end{equation*}
        and thus
        \begin{equation*}
            \rank{\aTransfer} < \min\{\lcl{\aStop}{\sourceStop},\lcl{\aStop}{\targetStop}\},
        \end{equation*}
        which is the negation of Equation~(\ref{eq:lcl-app}).
    \end{proof}

    \subparagraph*{Customization.}
    Recall that the customization processes the levels in a bottom-up fashion.
    For each IBE~$\aTrip[i]$ on a level~$\aLevel$, it runs a one-to-all Event-TB search from~$\aTrip[i]$ restricted to the cell of~$\aTrip[i]$ on level~$\aLevel$.
    For each reached OBE, the found journey is unpacked and the ranks of the contained transfers are set to~$\aLevel+1$.
    We implement the following optimizations:
    \begin{itemize}
        \item Because an IBE on level~$\aLevel$ is also an IBE on all higher levels, the algorithm first collects all IBEs on level~$0$.
        When moving to a higher level, all events that are no longer IBEs on this level are discarded.
        \item The search space of each Event-TB query is quite small, so it would be comparatively expensive to reset all reached indices after every query.
        To avoid this, we use the \emph{timestamped reached index} introduced for FLASH-TB~\cite{Gro25}. 
        For each trip~$\aTrip$, we store a 16-bit \emph{timestamp}~$\eta(\aTrip)$, which indicates the last query in which~$\reachedIndex(\aTrip)$ was modified.
        When~$\reachedIndex(\aTrip)$ is accessed during the $i$-th query and~$\eta\left(\aTrip\right) \neq i$, $\reachedIndex(\aTrip)$ is reset to~$\absoluteVal{\aTrip}$ and~$\eta(\aTrip)$ is set to $i$.
        Every~$2^{16}$ queries, the timestamp overflows, so all reached indices are reset.
        \item Consider the Event-TB search for an IBE~$\aTrip[i]$.
        Because the arrival time of a stop event~$\stopEvent$ is fixed, the Pareto front for the~$\aTrip[i]$-$\stopEvent$-query contains at most one cost vector, which is the one with the minimal number of used trips.
        Therefore, the parent pointers computed by Event-TB form a shortest path tree in which transfers do not appear multiple times.
        We exploit this by marking transfers once they are encountered during the journey unpacking.
        If a marked transfer~$\aTransfer$ is encountered again while unpacking another journey, then we know that both journeys share the same prefix up to~$\aTransfer$ and that all transfers in this prefix are already marked.
        Hence, the rest of the journeys does not need to be unpacked because these transfers already have the correct rank.
        As with the reached index, we use timestamps to mark transfers without having to reset the markings.
    \end{itemize}

    A notable optimization that we do not include in the customization is profile search.
    For each pair of line~$\aLine$ and index~$i$, one could group all IBEs of the form~$\aTrip[i]$ with~$\aLine(\aTrip)=\aLine$ together and process them in descending order of departure time.
    Then, one could attempt to benefit from self-pruning by not resetting the reached indices between the Event-TB searches.
    The motivation behind this is as follows:
    Let~$\tripA[i]$ and~$\tripA'[i]$ be IBEs with~$\tripA \prec \tripA'$, and let~$\tripB[j]$ be an OBE.
    If there is a~$\tripA'[i]$-$\tripB[j]$-journey~$\aJourney'_a$ with~$k$ trips, then there is also a~$\tripA[i]$-$\tripB[j]$-journey~$\aJourney_a$ with~$k$ trips.
    Furthermore, for any Pareto-optimal journey~$\aJourney$ that contains~$\aJourney_a$ as a subjourney, we can construct an equivalent journey~$\aJourney'$ that contains~$\aJourney'_a$ instead.
    Therefore, it might be tempting to assume that the customization does not need find~$\aJourney_a$ and set the ranks of its transfers accordingly.

    This approach has two issues.
    Firstly, it is not clear that the transfers in~$\aJourney'$ are contained in the generated transfer set~$\transfers$.
    Secondly, the stated assumption is not compatible with reached index pruning during the query, as noted in~\cite{Bau23,Gro25}.
    If the query algorithm reaches~$\tripA$, it will not scan~$\tripA'$ due to reached index pruning, and therefore it will not find~$\aJourney'_a$.
    However, if self-pruning is applied during the customization, it will also fail to find~$\aJourney_a$.
    It is possible to fix this issue by weakening the self-pruning rule, but this causes it to lose most of its runtime benefit~\cite{Bau23,Gro25}.
    Moreover, it would introduce additional complications in the proof of correctness for T-REX, which is already fairly technical.
    As the customization is already very fast compared to other preprocessing techniques, we opted to omit self-pruning for the sake of simplicity.
    
    \subparagraph*{Updates.}
    In a scenario with real-time updates to the timetable, the transfer ranks can be updated without re-running the entire customization.
    Clearly, only cells that are affected in some way by the update need to be processed.
    A simple approach would be to consider a stop event affected if its departure or arrival time has changed, and to consider a cell affected if it contains at least one affected stop event.
    This is wasteful because Event-TB does not inspect departure or arrival times, but rather determines reachability solely based on the transfer set and reached index pruning.
    Instead, we call a stop event~$\aTrip[i]$ \emph{directly affected} on level~$\aLevel$ if the set of incoming and outgoing transfers with rank at least~$\aLevel$ has changed.
    On level~$0$, this is the full set~$\transfers$, which may be changed by the update algorithm for the transfer generation (cf.~\cite{Wit21}).
    On higher levels, the update procedure for the cell containing~$\aTrip[i]$ on level~$\aLevel-1$ may have changed the rank of an incoming or outgoing transfer.
    We call~$\aTrip[i]$ \emph{indirectly affected} on level~$\aLevel$ if the preceding trip of~$\aTrip$ according to the total ordering~$\prec$ has changed or is directly affected.

    A cell~$\cell$ is processed if it contains at least one (directly or indirectly) affected stop event.
    A thorough approach to processing~$\cell$ is to set the rank of each transfer~$\aTransfer$ in~$\cell$ to~$\max\{\rank{\aTransfer},\aLevel\}$ and rerun the Event-TB search from all IBEs of~$\cell$.
    However, this does not take into account that updates usually only affect a short time window in practice.
    We therefore suggest an alternative approach, which does not reset the ranks in advance:
    Let~$\atime_\text{min}$ and~$\atime_\text{max}$ be the earliest and latest arrival time of an affected stop event in~$\cell$, respectively.
    For each combination of line~$\aLine$ and index~$i$, we process all IBEs~$\aTrip[i]$ with~$\aLine(\aTrip)=\aLine$ in descending order of departure time, ignoring those that depart after~$\atime_\text{max}$.
    After each Event-TB search, we check whether all found journeys arrive before~$\atime_\text{min}$.
    If so, then the earlier IBEs for~$\aLine$ and~$i$ are skipped.
    Because the ranks are not reset, this variant cannot decrement the rank of a transfer if it becomes irrelevant.
    Therefore, the search space of the query algorithm may gradually increase with successive updates.
    To limit this effect, one can switch to the thorough approach or even a full re-customization occasionally.

    \section{Proof of Subjourney Closure}
    \label{app:proof}
    We give a detailed proof of~\Cref{th:event-subjourney-closure}, which was omitted from~\Cref{ch:trex}:
    \subjourneyClosure*

    To prove this property, we formalize the search space of TB and the order in which it is explored.
    As a byproduct, we give (to our knowledge) the first complete proof of correctness for TB.
    Previously, Großmann \etal\cite{Gro25} gave a proof for the query phase but only in conjunction with a different transfer generation algorithm.

    We say that a stop event~$\aTrip[\ell]$ is \emph{visited} by TB in round~$n$ if there is a trip segment $\tripSegment{\aTrip}{i}{j} \in \aQueue_n$ with~$i < \ell \leq j$. In that case, journey unpacking by recursively following the parent pointer of~$\tripSegment{\aTrip}{i}{j}$ yields a~$\sourceStop$-$\aTrip[\ell]$-journey. We say that this is the journey \emph{found} by TB for~$\aTrip[\ell]$ with~$n$ trips.
    The following is immediate from the design of TB and its journey unpacking procedure:
    \begin{observation}
        \label{obs:prefix-found}
        Let~$\aJourney=\langle\tripSegment{\aTrip_1}{i_1}{j_1},\dots,\tripSegment{\aTrip_k}{i_k}{j_k}\rangle$ be a journey that is found by (Event-)TB.
        Then for each~$1 \leq n \leq k$ and~$i_n < \ell \leq k_n$, the stop event~$\aTrip_n[\ell]$ is visited and the corresponding found journey is $\langle\tripSegment{\aTrip_1}{i_1}{j_1},\dots,\tripSegment{\aTrip_n}{i_n}{\ell}\rangle$.
    \end{observation}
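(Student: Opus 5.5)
The plan is to argue directly from the structure of the parent pointers, by reading off the chain of queue elements that the unpacking of~$\aJourney$ traverses. (I read the bound~$k_n$ in the statement as~$j_n$, the exit index of the~$n$-th segment.) By definition,~$\aJourney$ is found for some stop event~$\aTrip_k[\ell^*]$ with~$n=k$. So there is a trip segment~$S_k\in\aQueue_k$ of~$\aTrip_k$ that covers~$\aTrip_k[\ell^*]$, and recursively following its parent pointers yields queue elements~$S_{k-1}\in\aQueue_{k-1},\dots,S_1\in\aQueue_1$. The~$n$-th trip segment~$\tripSegment{\aTrip_n}{i_n}{j_n}$ of~$\aJourney$ is obtained from~$S_n$ as follows. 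The boarding index~$i_n$ is fixed by~$S_n$: it is the index at which the transfer that created~$S_n$ enters~$\aTrip_n$, or the initial boarding for~$n=1$. For~$n<k$, the exit index~$j_n$ is the index that unpacking reconstructs as the first outgoing transfer of~$S_n$ leading to the boarding event of~$S_{n+1}$.

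The first key step is to show that~$S_n$ covers every stop event~$\aTrip_n[\ell]$ with~$i_n<\ell\leq j_n$, so that each of these events is visited in round~$n$. For~$n=k$ this holds because~$S_k$ starts right after the boarding index~$i_k$ and, as a contiguous range, contains~$\aTrip_k[\ell^*]=\aTrip_k[j_k]$. For~$n<k$, the transfer out of~$\aTrip_n[j_n]$ was relaxed in the third loop of~$\Scan$, and that loop only relaxes transfers from events within the current range of~$S_n$. Target pruning only shrinks this range. Hence the range of~$S_n$ as stored in~$\aQueue_n$ already contains index~$j_n$, and by contiguity it contains all indices between its first index (right after~$i_n$) and~$j_n$. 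For Event-TB this argument is simpler still, since there is no target pruning and segments are only cut at the first OBE, which happens before enqueueing.

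The second step identifies the found journey for such an event~$\aTrip_n[\ell]$. Its unpacking starts at~$S_n$, which is the very same queue element, and follows the same parent pointers~$S_{n-1},\dots,S_1$. The reconstructed exit indices~$j_1,\dots,j_{n-1}$ depend only on these queue elements and the transfer set, not on~$\ell$. The reconstruction therefore gives the same first~$n-1$ segments, followed by~$\tripSegment{\aTrip_n}{i_n}{\ell}$, which is exactly the claimed prefix.

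The main obstacle is a well-definedness subtlety. Several queue elements of the same trip may exist, since the same trip can be enqueued in different rounds or with disjoint ranges, and the definition of ``found journey'' must refer to a specific one. I would resolve this by fixing the segment of~$\aTrip_n$ in round~$n$ covering~$\aTrip_n[\ell]$. I expect to argue that it is unique, because the reached-index update makes enqueued ranges of a trip and its successors disjoint from already reached events. I would then check that this unique element is~$S_n$. The remaining parts are routine bookkeeping.
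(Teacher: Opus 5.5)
Your proposal is correct and matches the paper, which states this observation without proof as immediate from the design of TB and its parent-pointer unpacking. Your argument supplies the implicit details: the reconstructed exit index $j_n$ lies in the $n$-th queue element because target pruning and OBE cut-offs only remove suffixes, unpacking from that same element reproduces the first $n-1$ segments independently of $\ell$, and the covering queue element is unique because reached-index updates make enqueued ranges of a trip disjoint. Reading $k_n$ as $j_n$ is also the right fix for the typo.
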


    \subsection{A Total Ordering of Feasible Journeys}
    For the following, we require the notion of a \emph{one-to-all query}~$\query=(\sourceStop,\departureTime)$ with a source stop~$\sourceStop$ and departure time~$\departureTime$, but no specified target stop.
    Here, the objective is to find the Pareto front and a representative set for every possible target stop in~$\stops$.
    Such a query can be answered by TB if target pruning and the maintenance of the tentative Pareto front~$\labels$ are disabled.
    Instead, to reconstruct the solution for a target stop~$\targetStop$, all visited stop events at~$\targetStop$ are examined, the corresponding found journeys are unpacked, and redundant representatives of the same cost vector are discarded.
    
    Given a one-to-all query~$\query=(\sourceStop,\departureTime)$, let~$\journeys(\query,\transfers)$ denote the set of all feasible journeys for~$\query$ that only use transfers in~$\transfers$.
    For simplicity, we also use the notation~$\journeys(\query,\transfers)$ if~$\query$ is a stop-to-stop query with a specified target stop~$\targetStop$; however, in this case, $\journeys(\query,\transfers)$ still contains journeys that end at all rechable stops, not just~$\targetStop$.
    We formalize the order in which TB finds journeys by defining a total ordering~$\prec_\query$.
    We can then show that the feasible journeys that are not found by TB are exactly those for which a prefix was discarded due to reached index pruning or target pruning.
    Then it follows that the representative found for a cost vector~$(\arrivalTime,n)$ is the $\prec_\query$-minimal one among the non-pruned representatives.

    The TB query algorithm includes three steps that explore parts of the network without specifying any particular order: 
    The outgoing footpaths of the source stop~$\sourceStop$ and the outgoing transfers of a scanned stop event are relaxed in an arbitrary order.
    Furthermore, after relaxing an initial footpath~$(\sourceStop,\aStop)$, including the case~$\sourceStop=\aStop$, the lines visiting~$\aStop$ are examined in an arbitrary order to find and enqueue the earliest reachable trip.
    To allow for formal reasoning about the algorithm, we need to decide on a fixed order for these steps.
    In a practical implementation, this is achieved by assigning integer IDs to all lines, footpaths and transfers.
    The elements are then examined in ascending order of their ID.
    To reflect this, we define the functions~$\lineID\colon\lines\to\{0,\dots,\absoluteVal{\lines}-1\}$, $\footpathID\colon\footpaths\to\{0,\dots,\absoluteVal{\footpaths}-1\}$ and~$\transferID\colon\transfers\to\{0,\dots,\absoluteVal{\transfers}-1\}$.
    We require that for every stop~$\stopA$, the outgoing footpaths~$(\stopA,\stopB)\in\footpaths$ are assigned consecutive IDs.
    Similarly, for every stop event~$\tripA[j]$, the outgoing transfers~$\transfer{\tripA[j]}{\tripB[i]}\in\transfers$ are assigned consecutive IDs.

    To describe the order in which TB finds journeys, we define a unique \emph{tiebreaking sequence} for each journey in $\journeys(\query,\transfers)$.
    Let~$\aJourney = \langle \tripSegment{\aTrip_1}{i_1}{j_1},\dots,\tripSegment{\aTrip_k}{i_k}{j_k}\rangle \in \journeys(\query,\transfers)$ be a journey.
    Then the tiebreaking sequence of~$\aJourney$ is given by $\tiebreakingSequence(\aJourney) = \langle k \rangle \circ \tiebreakingSequencePartial(\aJourney)$ with
    \begin{equation*}
        \tiebreakingSequencePartial(\aJourney) = \begin{cases}
            \langle \footpathID((\sourceStop,\aStop(\aTrip_1[i_1]))),\, \lineID(\aLine(\aTrip_1)),\, i_1,\, \departureTime(\aTrip_1[1]),\, j_1\rangle & \text{if } k=1,\\
            \langle \tiebreakingSequencePartial(\subjourney{1}{k-1}) \circ \langle \transferID(\transfer{\aTrip_{k-1}[j_{k-1}]}{\aTrip_k[i_k]}),\, j_k \rangle & \text{otherwise.}
        \end{cases}
    \end{equation*}

    For two journeys~$\aJourney,\aJourney'\in\journeys(\query,\transfers)$, we write~$\aJourney \prec_\query \aJourney'$ if~$\tiebreakingSequence(\aJourney)$ is smaller than~$\tiebreakingSequence(\aJourney')$ lexicographically.
    Note that because the tiebreaking sequences are unique for each journey, $\prec_\query$ is a total ordering of~$\journeys(\query,\transfers)$.

    In the following, we also consider event-based queries.
    Here, instead of a source stop~$\sourceStop$ and a departure time~$\departureTime$, we are given an initial stop event~$\sourceStopEvent$ and journeys are feasible if they start with~$\sourceStopEvent$.
    The tiebreaking sequence changes accordingly: we have~$\tiebreakingSequencePartial(\aJourney) = \langle j_1 \rangle$ in the case~$k=1$, but the other components remain the same.

    \begin{lemma}
        \label{lem:tb-order}
        For a query~$\query$, (Event-)TB finds journeys in increasing order of~$\prec_\query$.
    \end{lemma}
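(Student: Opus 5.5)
I plan to prove the statement by induction on the round number~$n$. The claim for round~$n$ is this: the trip segments in~$\aQueue_n$ are enqueued in increasing $\prec_\query$-order of the journeys they represent. Here the journey represented by a queue element~$\tripSegment{\aTrip}{i}{j}$ is the one obtained by unpacking its parent pointers, with the last segment ending at~$j$. Since~$\tiebreakingSequence(\aJourney)$ begins with~$\absoluteVal{\aJourney}=k$, and TB processes rounds in increasing order, journeys with fewer trips are automatically found before journeys with more trips. So it suffices to order the journeys found within a single round. Within a round I take ``found'' to mean ``enqueued''. Since the journeys for the different exit indices~$\ell$ of one segment differ only in the last component~$j_k$, they are consecutive in~$\prec_\query$ and appear in increasing order of~$\ell$. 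Hence the ordering of segments in~$\aQueue_n$ implies the ordering of all found journeys.

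For the base case~$n=1$, I will use the fixed order of initialization. The initial footpaths are relaxed in ascending~$\footpathID$, and for each footpath the lines at its head stop are examined in ascending~$\lineID$. Each line contributes a single segment, namely the earliest reachable trip, enqueued from index~$i_1$. This matches the lexicographic order on the components~$\langle \footpathID, \lineID, i_1, \departureTime(\aTrip_1[1]),\dots\rangle$. The departure-time component matters only to distinguish trips of the same line. Since only one trip per line and index is enqueued, its role is simply to make the sequences unique. For Event-TB the base case is trivial, as~$\aQueue_1$ holds a single segment.

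For the inductive step, I assume that~$\aQueue_n$ is in $\prec_\query$-order. Round~$n$ scans~$\aQueue_n$ FIFO, so the third loop handles the segments in this order. Within a segment it handles the events~$\aTrip[i]$ in increasing~$i$ (that is, increasing~$j_{k-1}$), and the outgoing transfers of each event in ascending~$\transferID$, since these IDs are consecutive per event. A new segment~$\tripSegment{\tripB}{j+1}{\reachedIndex(\tripB)-1}$ in~$\aQueue_{n+1}$ therefore inherits the sequence~$\tiebreakingSequencePartial(\text{parent prefix})\circ\langle\transferID,\dots\rangle$. Enqueue order thus equals lexicographic order first on the parent's sequence (by induction) and then on~$\transferID$.

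The main obstacle is the interaction with~$j_{k-1}$. Two journeys through the same parent segment that exit at different events share a prefix~$\tiebreakingSequencePartial$ up to~$j_{k-1}$. I must check that the exit index~$j_{k-1}$ is compared before the transferID, and that it increases in the same direction as the scan. It is slightly subtle that the prefix journey's sequence ends in~$j_{k-1}$, the exit index, so that comparing~$\tiebreakingSequencePartial(\subjourney{1}{k-1})$ already compares exit positions correctly. I would also verify that the reached-index and target pruning only remove segments and never reorder them. Under pruning, a later-enqueued segment covers only a suffix not previously covered, and its journey keeps the tiebreaking sequence it would have had without pruning. Hence the found journeys are a subsequence of a $\prec_\query$-sorted list, which gives the lemma.
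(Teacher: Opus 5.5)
Your proposal is correct and takes essentially the same route as the paper. Both order rounds by the leading entry~$k$ and treat round~$1$ via the footpath, line, index and departure-time components. Both then argue rounds $n>1$ inductively: FIFO scanning, increasing exit index (the final entry of $\tiebreakingSequencePartial(\subjourney{1}{n-1})$, which is compared before $\transferID$), and ascending $\transferID$ per event, which is what the paper's shorter two-journey comparison amounts to.
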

    \begin{proof}
        This follows directly by comparing the definition of the tiebreaking sequence with the algorithm description.
        For completeness, we argue this for each step of the algorithm explicitly.
        Journeys are found in increasing order with respect to the number of trips, which is encoded in the first entry of the tiebreaking sequence.
        Within each round~$n$, journeys are found in the order in which the corresponding final trip segments were added to~$\aQueue_n$.
        For each trip segment, the stop events are scanned in increasing order, which is encoded in the final entry of the tiebreaking sequence.
        For the order in which the trip segments are enqueued, we distinguish between~$n=1$ and $n > 1$.

        For $n=1$, we further distinguish between event-based and stop-based queries.
        For event-based queries, the only feasible trip segment that can be enqueued is~$\tripSegment{\aTrip_1}{i_1}{\absoluteVal{\aTrip}}$.
        For stop-based queries, the outgoing transfers of the source stop~$\sourceStop$ are explored in ascending order of~$\footpathID$, which is encoded in the first entry of~$\tiebreakingSequencePartial(\aJourney)$.
        At each stop~$\aStop$ reached via a footpath~$(\sourceStop,\aStop)$, the lines that visit~$\aStop$ are explored in ascending order of~$\lineID$, which is encoded in the second entry.
        The third entry is the index~$i_1$ at which the line is entered.
        This is required because a line~$\aLine$ may visit~$\aStop$ multiple times (\eg in ring lines).
        For example, let $i < j$ be two indices such that~$\aStop(\aTrip[i]) = \aStop(\aTrip[j]) = \aStop$ for every trip~$\aTrip$ of~$\aLine$.
        Then the index~$i$ is examined before the index~$j$.
        For each index~$i$, the algorithm identifies the earliest trip~$\aTrip$ of~$\aLine$ such that~$\aTrip[i]$ is reachable. This is encoded by the entry~$\departureTime(\aTrip[1])$. Due to the requirement that trips of the same line are totally ordered in terms of departure time at every stop, this ensures that the earliest trip is preferred such that the resulting journey is feasible.
        
        For~$n > 1$, consider two journeys~$\aJourney=\langle \tripSegment{\aTrip_1}{i_1}{j_1},\dots,\tripSegment{\aTrip_n}{i_n}{j_n}\rangle$ and $\aJourney'=\langle \tripSegment{\aTrip'_1}{i'_1}{j'_1},\dots,\allowbreak\tripSegment{\aTrip'_n}{i'_n}{j'_n}\rangle$ with~$\aTrip_n[i_n] \neq \aTrip'_n[i'_n]$.
        The order in which these two journeys are found is determined by the order in which the two trip segments~$\tripSegment{\aTrip_n}{i_n}{j_n}$ and~$\tripSegment{\aTrip'_n}{i'_n}{j'_n}$ are enqueued.
        If~$\aTrip_{n-1}[j_{n-1}] = \aTrip'_{n-1}[j'_{n-1}]$, \ie both segments originate from the same parent, then the order depends on the order in which the transfers~$\transfer{\aTrip_{n-1}[j_{n-1}]}{\aTrip_{n}[i_n]}$ and~$\transfer{\aTrip_{n-1}[j_{n-1}]}{\aTrip'_{n}[i'_n]}$ are relaxed, which is determined by~$\transferID$.
        Otherwise, the order depends on the order in which~$\aTrip_{n-1}[j_{n-1}]$ and~$\aTrip'_{n-1}[j'_{n-1}]$ were scanned in round~$n-1$.
        By induction, this is equivalent to the relative order of the prefixes~$\subjourney{1}{n-1}$ and~$\subjourneyAlt{1}{n-1}$.
    \end{proof}

    \subsection{Stability of the Total Ordering}
    To prove subjourney closure, we need to examine how the relative order of two journeys is affected when we add or remove prefixes or suffixes.
    The following is immediately apparent from the recursive construction of the tiebreaking sequence:
    
    \begin{observation}
        \label{obs:prefix-order}
        For two $\sourceStop$-$\targetStop$-journeys~$\aJourney \neq \aJourney' \in \journeys(\query,\transfers)$ with~$k$ trips, let~$n$ be an index with~$1 \leq n \leq k$ such that~$\subjourney{1}{n} \neq \subjourneyAlt{1}{n}$.
        Then~$\aJourney \prec_\query \aJourney'$ iff~$\subjourney{1}{n} \prec_\query \subjourneyAlt{1}{n}$.
    \end{observation}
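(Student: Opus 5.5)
The statement to prove is Observation~\ref{obs:prefix-order}: for two distinct journeys with $k$ trips and an index $n$ with $\subjourney{1}{n}\neq\subjourneyAlt{1}{n}$, we have $\aJourney\prec_\query\aJourney'$ iff $\subjourney{1}{n}\prec_\query\subjourneyAlt{1}{n}$. I would argue directly from the recursive definition of the tiebreaking sequence. Unrolling the recursion shows that $\tiebreakingSequencePartial(\aJourney)$ is a concatenation. First comes a block $B_1$ describing the first trip segment: the footpath ID, line ID, $i_1$, $\departureTime(\aTrip_1[1])$ and $j_1$, or just $\langle j_1\rangle$ in the event-based case. Then, for each $m=2,\dots,k$, comes a block $B_m=\langle\transferID(\transfer{\aTrip_{m-1}[j_{m-1}]}{\aTrip_m[i_m]}),\,j_m\rangle$. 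For every $n\le k$, the sequence $\tiebreakingSequencePartial(\subjourney{1}{n})$ is therefore literally the prefix $B_1\circ\dots\circ B_n$ of $\tiebreakingSequencePartial(\aJourney)$. This holds by a trivial induction on $k$.

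The key step is to check that this prefix determines the subjourney. The map from $\subjourney{1}{n}$ to $B_1\circ\dots\circ B_n$ is injective, because each block recovers the corresponding trip segment given the previous ones. Block $B_1$ recovers the entry stop, the line, the entry index and the trip (departure time within a line is unique by $\prec$), together with $j_1$. A transfer ID determines the transfer, hence $\aTrip_m[i_m]$, and $j_m$ fixes the exit. Consequently, $\subjourney{1}{n}\neq\subjourneyAlt{1}{n}$ implies $B_1\circ\dots\circ B_n\neq B'_1\circ\dots\circ B'_n$. Moreover, all blocks have fixed lengths (5 or 1 for $B_1$, and 2 for the others), so both prefixes have the same length, and they must differ at some position $p$ within that common length.

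Now compare lexicographically. $\tiebreakingSequence(\aJourney)=\langle k\rangle\circ\tiebreakingSequencePartial(\aJourney)$ and $\tiebreakingSequence(\aJourney')=\langle k\rangle\circ\tiebreakingSequencePartial(\aJourney')$ agree in the first entry, so their order is decided at the first differing position of $\tiebreakingSequencePartial$, which is $p$. The subjourney sequences $\tiebreakingSequence(\subjourney{1}{n})=\langle n\rangle\circ B_1\circ\dots\circ B_n$ and the primed analogue also share the first entry $n$, and they are decided at the same position $p$ with the same two values. Hence both comparisons yield the same outcome, which gives the equivalence.

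The only potential obstacle is the injectivity of the blocks. This is mild: it rests on IDs being bijective and on trips of a line being totally ordered. Apart from that, the argument is pure bookkeeping of fixed block lengths.
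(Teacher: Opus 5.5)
Your proposal is correct and follows the same route as the paper, which only remarks that the claim is immediately apparent from the recursive construction of the tiebreaking sequence. Unrolling it into fixed-length blocks, together with the injectivity of the block encoding, supplies exactly the bookkeeping that remark leaves implicit; that injectivity is the same uniqueness of tiebreaking sequences the paper relies on when it calls $\prec_\query$ a total ordering.
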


    In particular, this means that the relative order of two journeys does not change when we add the same proper suffix to both.
    We say that the ordering is \emph{stable} under the addition of proper suffixes.
    For prefixes, the situation is slightly more complicated.
    If the added prefix is a proper journey, then the initial footpath becomes an intermediate transfer.
    Because footpaths and transfers are handled with different tiebreaking rules, it cannot be guaranteed that the relative order stays the same.
    However, we can show that the ordering is stable under the addition of prefixes that end in the middle of a trip.
    For this, we introduce notation for the concatenation of two journeys.

    \begin{definition}[Journey Concatenation]
    Let~$\aJourney_1 = \langle \tripSegment{\aTrip_a}{i_a}{j_a}, \dots, \tripSegment{\aTrip_b}{i_b}{x}\rangle$ and~$\aJourney_2 =  \langle \tripSegment{\aTrip_b}{x}{j_b},\allowbreak \dots, \tripSegment{\aTrip_c}{i_c}{j_c}\rangle$ be two journeys that coincide in the event~$\aTrip_b[x]$.
    We define the concatenated journey~$\aJourney_1 \circ \aJourney_2 =  \langle \tripSegment{\aTrip_a}{i_a}{j_a}, \dots, \tripSegment{\aTrip_b}{i_b}{j_b}, \dots \tripSegment{\aTrip_c}{i_c}{j_c}\rangle$.
    Note that this journey uses~$\absoluteVal{\aJourney_1} + \absoluteVal{\aJourney_2} - 1$ trips.
    We make the following observation:
    If~$\aJourney_1 \in \journeys(\query,\transfers)$ for some query~$\query$ and~$\aJourney_2 \in \journeys(\query',\transfers)$ for the event-based query~$\query'=(\aTrip_b[x])$, then it follows that~$\aJourney_1 \circ \aJourney_2 \in \journeys(\query,\transfers)$ because all of its transfers are in~$\aJourney_1$ or~$\aJourney_2$.
    \end{definition}
        
    \begin{lemma}
        \label{lem:stability-prefix}
        Let~$\aJourney \in \journeys(\query,\transfers)$ be a journey for a query~$\query$, let~$\aJourney_s$ be a suffix of~$\aJourney$ that starts at some stop event~$\stopEvent$, and let~$\aJourney_p$ be the corresponding prefix such that~$\aJourney = \aJourney_p \circ \aJourney_s$.
        Let~$\aJourney'_s \in \journeys(\query',\transfers)$ for the event-based query~$\query'=(\stopEvent)$, and let~$\aJourney' = \aJourney_p \circ \aJourney'_s$.
        Then we have~$\aJourney \prec_\query \aJourney'$ iff~$\aJourney_s \prec_{\query'} \aJourney'_s$.
    \end{lemma}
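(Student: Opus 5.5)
We compare the tiebreaking sequences $\tiebreakingSequence(\aJourney)$ and $\tiebreakingSequence(\aJourney')$ entry by entry, using their recursive definition. Write $\aJourney_p = \langle \tripSegment{\aTrip_1}{i_1}{j_1},\dots,\tripSegment{\aTrip_m}{i_m}{x}\rangle$ with $\stopEvent=\aTrip_m[x]$. Then $\aJourney$ and $\aJourney'$ share the first $m-1$ trip segments completely, and their $m$-th segments begin at the same event $\aTrip_m[i_m]$ and differ at most in the exit index. Let $\aJourney_s$ and $\aJourney'_s$ have $k_s$ and $k'_s$ trips, so $\absoluteVal{\aJourney} = m+k_s-1$ and $\absoluteVal{\aJourney'} = m+k'_s-1$.

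First, we handle the leading entry. The tiebreaking sequence starts with the number of trips. If $k_s\neq k'_s$, then $\absoluteVal{\aJourney}$ and $\absoluteVal{\aJourney'}$ differ in the same direction as $\absoluteVal{\aJourney_s}$ and $\absoluteVal{\aJourney'_s}$, and both comparisons are decided by the first entry. We may therefore assume $k_s=k'_s$.

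Next, we unfold the recursion. By induction on the number of trips, $\tiebreakingSequencePartial(\aJourney)$ is the concatenation of three parts: the entries belonging to the first $m-1$ segments together with the first four entries of the $m$-th segment (footpath, line, entry index, first departure) or, if $m>1$, the transfer ID into $\aTrip_m$; then the exit index $j_m$ of the $m$-th segment; then, for each subsequent segment $n$, the pair $\langle \transferID(\cdot), j_n\rangle$. In the same way, $\tiebreakingSequencePartial(\aJourney_s)$ is $\langle j'\rangle$ (the exit index from $\aTrip_m$ in $\aJourney_s$) followed by exactly the same transfer/exit pairs. The event-based base case $\langle j_1\rangle$ was defined precisely so that this alignment holds.

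So $\tiebreakingSequencePartial(\aJourney)=P\circ\tiebreakingSequencePartial(\aJourney_s)$ and $\tiebreakingSequencePartial(\aJourney')=P\circ\tiebreakingSequencePartial(\aJourney'_s)$ for a common prefix $P$ determined by $\aJourney_p$ minus its last exit index. Lexicographic comparison is invariant under prepending a common prefix, and the leading trip counts agree. Hence $\aJourney\prec_\query\aJourney'$ iff $\aJourney_s\prec_{\query'}\aJourney'_s$.

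The main care point is this bookkeeping step: showing that the exit index $j_m$ in the concatenated journey is the first entry of the suffix's sequence, and that the entry of the $m$-th segment comes from $\aJourney_p$ and not from $\aJourney_s$. I would state it as a small auxiliary claim, proved by induction on $k_s$ from the recursive definition, covering both the stop-based and the event-based versions of $\query$. The case $m=1$ (prefix within the first trip) needs a separate but identical check of the base case.
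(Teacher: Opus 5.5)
Your proposal is correct and follows essentially the same route as the paper. Both write $\tiebreakingSequence(\aJourney)=\langle\absoluteVal{\aJourney_p}+\absoluteVal{\aJourney_s}-1\rangle\circ\tiebreakingSequencePartial_p\circ\tiebreakingSequencePartial(\aJourney_s)$, where $\tiebreakingSequencePartial_p$ is $\tiebreakingSequencePartial(\aJourney_p)$ without its final exit index, and both conclude that shifting the leading trip count by a common amount and inserting a shared infix leaves the lexicographic order unchanged; your explicit case split on the trip counts and your alignment check via the event-based base case $\langle j_1\rangle$ spell out what the paper leaves implicit.
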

    \begin{proof}
        Let~$\tiebreakingSequencePartial_p$ denote the sequence that consists of all but the final entry of~$\tiebreakingSequencePartial(\aJourney_p)$, which is the exit index of the final trip segment.
        Then we have~$\tiebreakingSequence(\aJourney)=\langle\absoluteVal{\aJourney_p} + \absoluteVal{\aJourney_s} - 1\rangle \circ \tiebreakingSequencePartial_p \circ \tiebreakingSequencePartial(\aJourney_s)$ and~$\tiebreakingSequence(\aJourney')=\langle\absoluteVal{\aJourney_p} + \absoluteVal{\aJourney'_s} - 1\rangle \circ \tiebreakingSequencePartial_p \circ \tiebreakingSequencePartial(\aJourney'_s)$.
        Compared to the tiebreaking sequences of~$\aJourney_s$ and~$\aJourney'_s$, the first entries are shifted by the same value~$\absoluteVal{\aJourney_p} - 1$ and the shared infix~$\tiebreakingSequencePartial_p$ is inserted afterwards.
        Neither change has an effect on the relative order of the sequences.
    \end{proof}

    \subsection{Correctness of TB}
    In this section, we precisely characterize the representative that TB finds for each cost vector in the Pareto front: it is the~$\prec_\query$-minimal one.
    We can then use this characterization, together with the stability properties established for~$\prec_\query$, to prove the subjourney closure property.

    We begin by characterizing the journeys that are discarded by TB due to reached index pruning.
    Let~$\aJourney$ be a journey ending with a stop event~$\aTrip[i]$. We say that a journey~$\aJourney'$ \emph{event-dominates} $\aJourney$ if~$\aJourney' \prec_\query \aJourney$ and~$\aJourney'$ ends in some~$\aTrip'[i]$ with~$\aTrip' \preceq \aTrip$.
    A journey~$\aJourney$ is \emph{pruned} if a prefix of~$\aJourney$ is event-dominated by a non-pruned journey.
    We say that~$\aJourney$ is \emph{stop-dominated} by~$\aJourney'$ if they end at the same stop, $\aJourney'$ does not arrive later and~$\aJourney' \prec_\query \aJourney$.
    If~$\aJourney$ is event-dominated by $\aJourney'$, it is also stop-dominated (but not necessarily vice versa).

    \begin{observation}
        \label{obs:tb-non-pruned}
        For a query~$\query$, (Event-)TB without target pruning finds exactly the non-pruned journeys in~$\journeys(\query,\transfers)$.
    \end{observation}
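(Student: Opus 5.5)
We prove both inclusions separately, by induction on the number of trips~$n$ (i.e., on the round of (Event-)TB), and within a round by induction along $\prec_\query$. We use \Cref{lem:tb-order}: journeys with $n$ trips are found in increasing $\prec_\query$ order, and all journeys with fewer trips are found before them. The invariant we maintain is the following. At the moment TB is about to scan the stop event $\aTrip[\ell]$ inside a trip segment in round~$n$, the reached index $\reachedIndex(\aTrip')$ of every trip $\aTrip'$ equals the minimum, over all already found (i.e., $\prec_\query$-smaller) journeys ending in some $\aTrip''[i]$ with $\aTrip'' \preceq \aTrip'$, of that index~$i$. In other words, the reached index records exactly which events have been event-dominated so far. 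This invariant follows by inspecting \Cref{alg:tb:enq}. Enqueuing $\tripSegment{\tripB}{j+1}{\reachedIndex(\tripB)-1}$ makes all events $\tripB[i]$ with $j<i<\reachedIndex(\tripB)$ visited. The loop then lowers $\reachedIndex(\tripB')$ to $j+1$ for all $\tripB' \succeq \tripB$, and this is precisely the event-domination relation applied to every found journey ending at $\tripB[i]$, $i>j$.

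\emph{Non-pruned implies found.} Let $\aJourney\in\journeys(\query,\transfers)$ be non-pruned, and assume all non-pruned journeys that are $\prec_\query$-smaller are found. Its prefix $\aJourney[1,n-1]$ is non-pruned, because a prefix of a prefix is a prefix. By \Cref{obs:prefix-order} that prefix is $\prec_\query$-smaller, so by induction it is found, and by \Cref{obs:prefix-found} its last stop event is visited. Hence, in round $n-1$, the transfer to $\aTrip_n[i_n]$ is relaxed. For $n=1$ we use instead the footpath/line step, where the earliest reachable trip is chosen in accordance with the $\departureTime(\aTrip_1[1])$ entry. Suppose the segment $\tripSegment{\aTrip_n}{i_n+1}{j_n}$ were not (fully) enqueued. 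Then at relaxation time $\reachedIndex(\aTrip_n) \leq j_n$, so by the invariant some found journey $\aJourney'\prec_\query\aJourney[\cdot]$ ends at $\aTrip'[i]$ with $\aTrip'\preceq\aTrip_n$ and $i\le j_n$. Its relevant prefix event-dominates the prefix of $\aJourney$ ending at $\aTrip_n[i]$. Found journeys are non-pruned by the other direction, applied inductively, so $\aJourney$ would be pruned, a contradiction.

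\emph{Found implies non-pruned.} Suppose a found $\aJourney$ had a prefix $\aJourney_p$, ending at $\aTrip[i]$, that is event-dominated by a non-pruned $\aJourney'$. By the first direction $\aJourney'$ is found, and since $\aJourney'\prec_\query\aJourney_p$ it is found before $\aJourney_p$ is visited. By the invariant, $\reachedIndex(\aTrip)\le i$ before $\aTrip[i]$ could be visited via $\aJourney_p$. Hence the enqueue that would produce $\aJourney_p$ either returns immediately or cuts its segment before index $i$. This contradicts \Cref{obs:prefix-found}, which requires $\aTrip[i]$ to be visited via $\aJourney_p$.

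The main obstacle is ordering the induction correctly. Pruning is defined recursively ("event-dominated by a \emph{non-pruned} journey"), and both directions must be proved simultaneously by a single induction over $\prec_\query$. We must also match the granularity: the algorithm records domination at trip segments, while the definition records it per stop event and per prefix. We therefore carry out the induction over pairs (journey, final event) ordered by $\prec_\query$, and we check carefully the boundary cases: the event $\aTrip[i_n]$ itself, which is entered but not visited, and the first round.
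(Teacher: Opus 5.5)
Your plan is a single well-founded induction along $\prec_\query$ that matches the reached-index test in \texttt{Enqueue} against event-domination. That is the natural way to formalize what the paper only asserts as an observation, and your ``found implies non-pruned'' direction goes through. The ``non-pruned implies found'' direction, however, fails at the step where you conclude that $\aJourney$ is pruned.

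Your invariant only records $\reachedIndex(\aTrip_n)$ as the \emph{minimum} index $i$ of a found journey ending at some $\aTrip'[i]$ with $\aTrip'\preceq\aTrip_n$. So $\reachedIndex(\aTrip_n)\le j_n$ gives you a found $\aJourney'$ ending at $\aTrip'[i]$ with $i\le j_n$. Event-domination, however, needs a journey ending at the \emph{same} index as some prefix of $\aJourney$, and $\aJourney$ has a prefix ending at $\aTrip_n[i]$ only if $i_n<i\le j_n$. Consider the case $\reachedIndex(\aTrip_n)\le i_n$, where \texttt{Enqueue} returns at once, e.g.\ because an earlier trip of the line was boarded at a smaller index. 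Then the minimum says nothing about whether any $\aTrip''[k]$ with $\aTrip''\preceq\aTrip_n$ and $i_n<k\le j_n$ was visited. The segment realizing the minimum may itself have been cut short by a still earlier trip.

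What is missing is the \emph{coverage} property, which is the paper's actual definition of the reached index: for every $k\ge\reachedIndex(\aTrip')$, some $\aTrip''[k]$ with $\aTrip''\preceq\aTrip'$ has been visited. \texttt{Enqueue} preserves it: the new segment covers $[j+1,\reachedIndex(\aTrip_b)-1]$, and everything beyond is covered by induction. With coverage, $\reachedIndex(\aTrip_n)\le j_n$ yields a found journey ending at $\aTrip''[j_n]$, and this journey event-dominates $\aJourney$ itself.

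Your one-line treatment of $n=1$ skips exactly this argument. If the earliest reachable trip $\aTrip^*$ chosen by TB differs from $\aTrip_1$, the journey boarding $\aTrip^*$ event-dominates $\aJourney$ but may itself be pruned. You must then fall back on coverage at $\aTrip^*$.

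Two further repairs are needed.

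First, you state the invariant at the moment $\aTrip[\ell]$ is scanned and identify ``already found'' with ``$\prec_\query$-smaller''. That identification is false at scan time: during round $n$, all $n$-trip journeys were already enqueued in round $n-1$, including $\prec_\query$-larger ones. The invariant must be stated at the moment \texttt{Enqueue} is called for the candidate segment, which is how you actually use it. At that moment the candidate $\aJourney$ is not yet known to be found, so \Cref{lem:tb-order} does not apply directly. You need to argue, using that $\aJourney[1,n-1]$ is found, that relaxations occur in lexicographic order of (parent journey, transfer ID). Then everything found before the call is $\prec_\query$ every prefix of $\aJourney$ ending in its last trip.

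Second, $\aJourney[1,n-1]\prec_\query\aJourney$ follows from the first entry of the tiebreaking sequence. \Cref{obs:prefix-order} does not give it, since that observation only compares journeys with equal numbers of trips.
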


    This allows us to show that TB finds~$\prec_\query$-minimal journeys among the non-pruned ones.

    \begin{lemma}
        \label{lem:tb-journey}
        For a query~$\query$, the representative returned by (Event-)TB for the cost vector~$(\arrivalTime,k)$ is $\prec_\query$-minimal among all non-pruned representatives in~$\journeys(\query,\transfers)$ that are not stop-dominated by a non-pruned journey.
    \end{lemma}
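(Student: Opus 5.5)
We combine \Cref{obs:tb-non-pruned} with \Cref{lem:tb-order} and track the cost-vector bookkeeping in~$\labels$. By \Cref{obs:tb-non-pruned}, the journeys that (Event-)TB without target pruning finds are exactly the non-pruned journeys of~$\journeys(\query,\transfers)$. By \Cref{lem:tb-order}, these are found in increasing order of~$\prec_\query$. The proof therefore has two parts: understand how target pruning and the update rule for~$\labels$ (line~\ref{alg:tb:improv}) select among the found journeys, and show that target pruning never discards a journey that could become the returned representative.

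First, we fix a cost vector~$(\arrivalTime,k)$ of the Pareto front and let~$\aJourney^*$ be the $\prec_\query$-minimal non-pruned representative that is not stop-dominated by a non-pruned journey. We show that TB visits the final event of~$\aJourney^*$ in round~$k$, \ie target pruning does not cut it. Target pruning removes an event~$\aTrip[i]$ in round~$k$ only if~$\arrivalTime(\aTrip[i]) \geq \minTime$. This requires an already found journey with at most~$k$ trips and arrival at most~$\arrivalTime(\aTrip[i]) \le \arrivalTime$. Since~$(\arrivalTime,k)$ is Pareto-optimal, such a journey must itself have cost~$(\arrivalTime,k)$ and reach~$\targetStop$ with the same arrival. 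It must also have been found in round~$k$, hence earlier in~$\prec_\query$ (\Cref{lem:tb-order}), and not at a stop-dominated position. Prefixes of~$\aJourney^*$ have strictly earlier arrival at intermediate events, or fewer trips, so cutting them would require a journey dominating~$(\arrivalTime,k)$. I would argue this by induction over rounds, using that~$\minTime$ only decreases below~$\arrivalTime$ in round~$k$ once a journey with that cost exists.

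Second, among the journeys that reach~$\targetStop$ with cost~$(\arrivalTime,k)$, the first-loop test uses strict inequality~$<\minTime$. Hence only the first such journey in scan order is recorded, and by \Cref{lem:tb-order} this is the~$\prec_\query$-minimal one. We must still check that it coincides with~$\aJourney^*$, \ie that the recorded journey is not stop-dominated by a non-pruned journey. Suppose the recorded journey were stop-dominated by some non-pruned~$\aJourney'$ with~$\aJourney' \prec_\query$ and no later arrival. Then either~$\aJourney'$ was found earlier with the same cost and would have been recorded instead, or~$\aJourney'$ has fewer trips and equal arrival, which contradicts Pareto-optimality of~$(\arrivalTime,k)$. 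Conversely, any non-pruned representative that is $\prec_\query$-smaller than~$\aJourney^*$ is stop-dominated by assumption, so it either is not recorded or the recorded one precedes it.

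The main obstacle is the first step, where target pruning interacts with the cut-off of trip segments in the second loop. There, the reduction of~$k$ by target pruning also truncates transfers out of later events, so it influences which journeys exist in later rounds. I would handle this by proving that target pruning only removes journeys whose every extension arrives at~$\targetStop$ no earlier than~$\minTime$ with more trips, and is hence dominated. This reduces the claim to the setting without target pruning, where \Cref{obs:tb-non-pruned} applies directly. For Event-TB, no target pruning occurs, so only the second step is needed.
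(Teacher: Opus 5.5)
Your proposal is correct and follows essentially the paper's argument. \Cref{obs:tb-non-pruned} together with \Cref{lem:tb-order} handles TB without target pruning, and target pruning is shown to discard only candidates that are stop-dominated by an earlier-found journey; your appeal to Pareto-optimality of~$(\arrivalTime,k)$ is a more explicit form of the paper's comparison of~$\minTime$ with~$\arrivalTime$.

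One correction and one addition. The correction: the extensions of a cut journey need not have more trips than the journey that set~$\minTime$, since they can tie with it within a round. What you obtain is therefore stop-domination rather than domination, which is exactly what the statement needs.

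The addition concerns your worry about reached indices, which the paper's proof does not address. A segment can only be extended because a cut segment's descendants no longer lowered a reached index. It then gains only events~$\aTrip[\ell]$ that those descendants would have covered as~$\aTrip'[\ell]$ with~$\aTrip'\preceq\aTrip$. Such events therefore arrive no earlier than~$\minTime$ and are cut as well.
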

    \begin{proof}
        We show the claim for a stop-based query~$\query=(\sourceStop,\targetStop,\departureTime)$; the proof for event-based queries is analogous.
        The returned journey~$\aJourney$ is the first found $\sourceStop$-$\targetStop$-journey with arrival time $\arrivalTime$, and there are no found $\sourceStop$-$\targetStop$-journeys with an earlier arrival time in round~$k$.
        Assume momentarily that target pruning is disabled.
        Then it follows from~\Cref{lem:tb-order} and~\Cref{obs:tb-non-pruned} that~$\aJourney$ is $\prec_\query$-minimal among all non-pruned representatives for~$(\arrivalTime,k)$ in~$\journeys(\query,\transfers)$.
        For every other non-pruned representative~$\aJourney'$, we either have~$\arrivalTime(\aJourney')>\arrivalTime$, or~$\arrivalTime(\aJourney')=\arrivalTime$ and~$\aJourney \prec_\query \aJourney'$.
        Hence, $\aJourney'$ is stop-dominated by the non-pruned journey~$\aJourney$.

        Now consider a representative~$\aJourney'$ that is not found due to target pruning.
        Then we have~$\arrivalTime(\aJourney')\geq\minTime$ at the moment of pruning.
        If~$\aJourney$ has not been found yet, then we have~$\arrivalTime(\aJourney')\geq\minTime>\arrivalTime$.
        Otherwise, we have~$\arrivalTime(\aJourney')\geq\arrivalTime$ and~$\aJourney \prec_\query \aJourney'$.
        In both cases, $\aJourney'$ is stop-dominated by $\aJourney$.
    \end{proof}
    
    To show that the returned representative is in fact $\prec_\query$-minimal among \emph{all} representatives, including the pruned ones, we need to analyze how the pruning rules in the transfer generation step and the query algorithm interact with each other.
        
    \begin{lemma}[Transfer Generation Correctness]
    \label{lem:transfer-generation}
    Let~$\aJourney = \langle \tripSegment{\aTrip_1}{i_1}{j_1},\dots,\tripSegment{\aTrip_k}{i_k}{j_k}\rangle$ be a journey such that~$\aTransfer_x=\transfer{\aTrip_x[j_x]}{\aTrip_{x+1}[i_{x+1}]}$ is the first transfer in~$\aJourney$ that is not in~$\transfers$.
    Then there is a journey~$\aJourney'=\langle \tripSegment{\aTrip_1}{i_1}{j_1},\dots,\tripSegment{\aTrip_x}{i_x}{j'_x},\tripSegment{\aTrip'_{x+1}}{i'_{x+1}}{j'_{x+1}},\dots,\tripSegment{\aTrip'_\ell}{i'_\ell}{j'_\ell}\rangle \in \journeys(\query,\transfers)$ with~$\ell \leq k$ that arrives at~$\targetStop$ not later than~$\aJourney$.
    \end{lemma}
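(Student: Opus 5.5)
We proceed by well-founded induction. We repeatedly repair the first missing transfer~$\aTransfer_x$ and argue that each repair makes progress with respect to a suitable measure. The prefix up to trip~$\aTrip_x$ is always kept; only its exit index~$j_x$ may change. Concretely, we order the journeys with the prefix $\langle \tripSegment{\aTrip_1}{i_1}{j_1},\dots,\tripSegment{\aTrip_{x-1}}{i_{x-1}}{j_{x-1}}\rangle$ fixed and the trip~$\aTrip_x$ fixed, using the following lexicographic measure:
\begin{enumerate}
\item the number of trips, to be decreased (or kept);
\item the position of the first transfer not in~$\transfers$, to be increased;
\item the exit index on the trip preceding that transfer, to be increased;
\item the trip boarded at that transfer, to be made earlier with respect to~$\prec$.
\end{enumerate}
All components are bounded, since trips and indices are finite. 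Hence it suffices to show the following: if a journey that arrives no later than~$\aJourney$ still contains a missing transfer, we can transform it into one that arrives no later and is strictly smaller in this measure. Throughout, we use the transfer-feasibility inequality together with the triangle inequality and transitive closure of~$\footpaths$. These guarantee that boarding an earlier trip of the same line, or arriving earlier at a stop, preserves the validity of all later transfers and of the final footpath.

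We then split into cases according to why~$\aTransfer_x=\transfer{\aTrip_x[j_x]}{\aTrip_{x+1}[i_{x+1}]}$ is absent from~$\transfers$.

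\emph{Case 1: never generated, because both trips lie on the same line with $\aTrip_x\preceq\aTrip_{x+1}$ and $j_x\le i_{x+1}$.} We stay seated in~$\aTrip_x$ up to~$j_{x+1}$ and merge the two segments. Since $\aTrip_x\preceq\aTrip_{x+1}$, the arrival times are no later, and the number of trips drops.

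\emph{Case 2: never generated, because of the earliest-trip requirement or the Lehoux--Loiodice rule.} In both variants there exists a transfer $\transfer{\aTrip_x[j']}{\tripC[\ell]}$ with $j'\ge j_x$, $\ell\le i_{x+1}$ and $\tripC\preceq\aTrip_{x+1}$ of the same line. (For the earliest-trip requirement, take $j'=j_x$, $\ell=i_{x+1}$ and $\tripC$ the earliest reachable trip.) We replace the segment on~$\aTrip_{x+1}$ by $\tripSegment{\tripC}{\ell}{j_{x+1}}$. Because $\tripC\preceq\aTrip_{x+1}$, the arrival at $\tripC[j_{x+1}]$ is no later, so the next transfer (now to the same event $\aTrip_{x+2}[i_{x+2}]$) remains feasible. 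The measure decreases: either~$j'>j_x$, or the boarded trip is strictly earlier. Applied transitively, the generation rules terminate in a transfer that was actually generated.

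\emph{Case 3: removed by the U-turn rule.} The replacing transfer $\transfer{\aTrip_x[j_x-1]}{\aTrip_{x+1}[i_{x+1}+1]}$ lies in~$\transfers$ by the strengthened form of the rule. We reroute through it, which moves the first missing transfer strictly later.

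\emph{Case 4: removed by the latest-exit rule.} There is a journey $\langle\tripSegment{\aTrip_x}{i_x}{j'},\tripSegment{\tripC}{i'}{l'},\footpath\rangle$ with $j'>j_x$ that reaches~$\aStop(\aTrip_{x+1}[l])$ no later than~$\aTrip_{x+1}[l]$. We splice it into~$\aJourney$:
\begin{itemize}
\item If~$l$ is the exit index~$j_{x+1}$ and~$x+1<k$, we compose the footpath~$\footpath$ with the outgoing footpath of the next transfer via the triangle inequality. This yields a valid transfer from~$\tripC[l']$ to~$\aTrip_{x+2}[i_{x+2}]$.
\item If~$x+1=k$, we compose~$\footpath$ with the final footpath instead.
\end{itemize}
The number of trips is unchanged, and the exit index on~$\aTrip_x$ strictly increases. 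The new transfer out of~$\aTrip_x$ may itself be missing, but this is exactly what the measure accounts for.

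The main obstacle is Case~4, for two reasons. First, the rule compares arrival at a stop rather than at an event, so the splice needs the latest-exit journey to be applicable at the point $l$ where $\aJourney$ leaves~$\aTrip_{x+1}$. If the rule is stated for an intermediate~$l$, we should first apply it with~$l=j_{x+1}$, or argue that the pruning was witnessed for every~$l$. Second, we must ensure that the newly introduced transfers do not undo progress elsewhere in the journey. This is why the measure looks only at the first missing transfer and at the exit index preceding it. Once these points are settled, the induction yields~$\aJourney'\in\journeys(\query,\transfers)$ with~$\ell\le k$ trips that arrives no later than~$\aJourney$ and shares its prefix up to~$\aTrip_x$.
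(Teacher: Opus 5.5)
\textbf{There is a genuine gap in the progress measure.} Your case split and splicing steps are sound. Your use of the triangle inequality to compose the latest-exit footpath with the next transfer or the final footpath is actually more careful than the paper's splice, which silently assumes the two journeys end at the same stop. The problem is the well-founded induction that ties the cases together: it assumes your witnesses lie in the \emph{final} transfer set, and they need not.

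In Case~3, the rule only guarantees that $\transfer{\aTrip_x[j_x-1]}{\aTrip_{x+1}[i_{x+1}+1]}$ is present at the moment the U-turn rule is applied. The latest-exit rule runs afterwards and may delete it. The same applies to the Lehoux--Loiodice witness in Case~2, which is merely generated. If the U-turn witness is later deleted, the first missing transfer stays at position~$x$ while the exit index on~$\aTrip_x$ drops from~$j_x$ to~$j_x-1$, so your measure strictly \emph{increases}.

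Case~2 has a second hole. The Lehoux--Loiodice witness may have $j'=j_x$ and $\tripC=\aTrip_{x+1}$, differing only by $\ell<i_{x+1}$. Then none of your four components changes. Because the rules chain (a Lehoux--Loiodice witness removed by the U-turn rule, a U-turn witness removed by the latest-exit rule), the exit index can move in both directions. Nothing in your argument excludes cycling.

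\textbf{The missing idea} is to measure progress by the order in which the pipeline discards transfers, not by exit indices and boarded trips. Every discarded transfer has a witness that is present in the current set when the discarding happens. Hence the witness is either in the final $\transfers$ or discarded at a later stage. Within the U-turn pass, chains are finite because the exit index strictly decreases. Latest-exit witnesses are transfers kept by the last sweep, so they are final.

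With this, the lexicographic measure (number of trips, position of the first missing transfer, stage at which that transfer was discarded) decreases in every case, and your components (3) and (4) become unnecessary. This is exactly what the paper's proof encodes. It first proves the two-trip statement as an invariant preserved by each stage (generation, U-turn, latest-exit), and only then repairs the journey position by position.

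Your remaining worry about the latest-exit rule is settled by reading the rule as intended. A transfer is pruned only if a dominating alternative exists for every exit index~$l$ on the boarded trip, and in particular for $l=j_{x+1}$.
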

    \begin{proof}
        We first show that each step of the transfer generation algorithm upholds the claim for~$k=2$ as an invariant.
        Let~$\aJourney=\langle\tripSegment{\aTrip_1}{i_1}{j_1},\tripSegment{\aTrip_2}{i_2}{j_2}\rangle$ be a journey such that~$\aTransfer=\transfer{\aTrip_1[j_1]}{\aTrip_2[j_2]}\notin\transfers$:
        Then there is a journey~$\aJourney'\in \journeys(\query,\transfers)$ with~$\aJourney'=\langle\tripSegment{\aTrip_1}{i_1}{j'_1},\tripSegment{\aTrip'_2}{i'_2}{j'_2}\rangle$ or~$\aJourney'=\langle\tripSegment{\aTrip_1}{i_1}{j'_1}\rangle$ that arrives at~$\targetStop$ not later than~$\aJourney$.
    
        If~$\aTransfer$ is not generated, this is for one of two reasons:
        (1) Another transfer~$\transfer{\aTrip_1[j'_1]}{\aTrip'_2[i'_2]}$ with~$j'_1 \geq j_1$, $i'_2 \leq i_2$ and~$\aTrip_2 \succeq \aTrip'_2$ is generated.
        Then the claim holds with~$\aJourney'=\langle\tripSegment{\aTrip_1}{i_1}{j'_1},\tripSegment{\aTrip'_2}{i'_2}{j_2}\rangle$.
        (2) We have~$\aTrip_1 \preceq \aTrip_2$ and~$j_1 \leq i_2$.
        Then the claim holds with~$\aJourney'=\langle\tripSegment{\aTrip_1}{i_1}{j_2}\rangle$.
        If~$\aTransfer$ is removed due to the U-turn rule, then we have~$\aStop(\aTrip_1[j_1-1])=\aStop(\aTrip_2[i_2+1])$ and~$\transfer{\aTrip_1[j_1-1]}{\aTrip_2[i_2+1]}\in\transfers$ and the claim holds with~$\aJourney'=\langle\tripSegment{\aTrip_1}{i_1}{j_1-1},\tripSegment{\aTrip_2}{i_2+1}{j_2}\rangle$.
        For the latest-exit rule, it is easy to verify that it upholds the invariant by design.
    
        We now prove the claim for general~$k$ by induction over~$x$.
        The first~$x-1$ intermediate transfers in~$\aJourney$ are in~$\transfers$.
        We show that there is a journey~$\aJourney'$ with~$\ell < k$ trips that arrives at~$\targetStop$ not later than~$\aJourney$ and for which the first~$x$ intermediate transfers are in~$\transfers$.
        Consider the subjourney~$\aJourney_s=\langle\tripSegment{\aTrip_x}{i_x}{j_x},\tripSegment{\aTrip_{x+1}}{i_{x+1}}{j_{x+1}}\rangle$.
        By the $k=2$ case, there is a journey~$\aJourney'_s\in\journeys(\query,\transfers)$ with at most two trips that arrives at~$\targetStop$ not later than~$\aJourney_s$.
        We consider the case that~$\aJourney'_s$ uses exactly two trips, i.e., $\aJourney_s'=\langle\tripSegment{\aTrip_x}{i_x}{j'_x},\tripSegment{\aTrip'_{x+1}}{i'_{x+1}}{j'_{x+1}}\rangle$; the other case is analogous.
        Consider the journey
        \[\aJourney'=\langle \tripSegment{\aTrip_1}{i_1}{j_1},\dots,\tripSegment{\aTrip_x}{i_x}{j'_x},\tripSegment{\aTrip'_{x+1}}{i'_{x+1}}{j'_{x+1}},\tripSegment{\aTrip_{x+2}}{i_{x+2}}{j_{x+2}}\dots,\tripSegment{\aTrip_\ell}{i_\ell}{j_\ell}\rangle\]
        that replaces~$\aJourney_s$ with~$\aJourney'_s$ in~$\aJourney$.
        This journey uses the same transfers as~$\aJourney$ with two exceptions: the transfer~$\aTransfer_x$ is replaced by~$\aTransfer'_x=\transfer{\aTrip_x[j'_x]}{\aTrip'_{x+1}[i'_{x+1}]}$ and~$\aTransfer_{x+1}=\transfer{\aTrip_{x+1}[j_{x+1}]}{\aTrip_{x+2}[i_{x+2}]}$ is replaced by~$\aTransfer'_{x+1}=\transfer{\aTrip'_{x+1}[j'_{x+1}]}{\aTrip_{x+2}[i_{x+2}]}$.
        We observe that~$\aTransfer'_x \in \transfers$ and that~$\aTransfer'_{x+1}$ is feasible because~$\aTransfer_{x+1}$ is feasible and we have~$\aStop(\aTrip_{x+1}[j_{x+1}])=\aStop(\aTrip'_{x+1}[j'_{x+1}])$ and~$\arrivalTime(\aTrip'_{x+1}[j'_{x+1}])\leq\arrivalTime(\aTrip_{x+1}[j_{x+1}])$.
        Hence, $\aJourney'$ is feasible, arrives at~$\targetStop$ not later than~$\aJourney$, and the first~$x$ intermediate transfers are in~$\transfers$.
    \end{proof}

    It is not immediately clear that the latest-exit rule is compatible with reached index pruning during the query:
    The idea behind reached index pruning is that if stop events~$\tripA[j]$ and~$\tripA'[j]$ with~$\tripA \prec \tripA'$ are both reachable, then it is always preferable to use~$\tripA[j]$.
    However, consider a journey that uses a transfer~$\transfer{\tripA[j]}{\tripB[i]}$.
    The latest-exit rule may discard this transfer if there is a dominating alternative for every possible target stop, but it may retain the transfer~$\transfer{\tripA'[j]}{\tripB[i]}$.
    This is only correct if the dominating alternative is itself not pruned and can therefore be found by TB; see~\Cref{fig:conflict}.
    We show that this is the case.

    \begin{figure}[t]
        \centering
        \begin{tikzpicture}
	\begin{scope}[yscale=1.2]
		\node (ta0) at (0.00, 0.00) {};%
		\node (ta1) at (3.00, 0.00) {};%
        \node (ta2) at (6.00, 0.00) {};%
        \node (tb0) at (0.00, -1.50) {};%
        \node (tb1) at (3.00, -1.50) {};%
        \node (tb2) at (6.00, -1.50) {};%
		\node (tc0) at  (9.00, 0.00) {};%
		\node (tc1) at  (12.00, 0.00) {};%
        \node (td0) at  (9.00, -1.50) {};%
        \node (td2) at  (12.00, -1.50) {};%
		\node (source_label) at (0.00, 0.50) {};%
		\node (p1) at (3.00, 0.50) {};%
		\node (p2) at (6.00, 0.50) {};%
		\node (p3) at (9.00, 0.50) {};%
		\node (p4) at (9.00, -2.00) {};%
		\node (target_label) at (12.00, 0.50) {};%

        \begin{pgfonlayer}{background}
		\node [fit=(ta0)(tb0)(source_label),stop] {};%
		\node [fit=(ta1)(tb1)(p1),stop] {};%
        \node [fit=(ta2)(tb2)(p2),stop] {};%
        \node [fit=(tc0)(p3),stop] {};%
        \node [fit=(td0)(p4),stop] {};%
		\node [fit=(tc1)(td2)(target_label),stop] {};%
		\end{pgfonlayer}
		
		\node [align=left,text=KITblue] at ( 0.75, -0.30) {\small{$\tripA$}};%
		\node [align=left,text=KITblue] at ( 0.75, -1.80) {\small{$\tripA'$}};%
		\node [align=left,text=KITgreen] at ( 9.75, -0.30) {\small{$\tripB$}};%
		\node [align=left,text=KITred] at ( 9.75, -1.80) {\small{$\tripC$}};%
        
        \node [align=left,text=KITblue] at (1.50, 0.20) {\small{$5\rightarrow10$}};%
		\node [align=left,text=KITblue] at (1.50, -1.30) {\small{$0\rightarrow5$}};%
        \node [align=left,text=KITblue] at (4.50, 0.20) {\small{$10\rightarrow15$}};%
		\node [align=left,text=KITblue] at (4.50, -1.30) {\small{$5\rightarrow10$}};%
        \node [align=left,text=KITgreen] at (10.50, 0.20) {\small{$15\rightarrow20$}};%
        \node [align=left,text=KITred] at (10.50, -1.30) {\small{$11\rightarrow20$}};%
        
		\node (ta0_v) at (ta0) [vertex,draw=KITblue,fill=KITblue!15] {\gs};%
		\node (tb0_v) at (tb0) [vertex,draw=KITblue,fill=KITblue!15] {\gs};%
		\node (ta1_v) at (ta1) [vertex,draw=KITblue,fill=KITblue!15] {\gs};%
		\node (tb1_v) at (tb1) [vertex,draw=KITblue,fill=KITblue!15] {\gs};%
        \node (ta2_v) at (ta2) [vertex,draw=KITblue,fill=KITblue!15] {\gs};%
		\node (tb2_v) at (tb2) [vertex,draw=KITblue,fill=KITblue!15] {\gs};%
		\node (tc0_v) at (tc0)  [vertex,draw=KITgreen,fill=KITgreen!15] {\gs};%
		\node (tc1_v) at (tc1)  [vertex,draw=KITgreen,fill=KITgreen!15] {\gs};%
        \node (td0_v) at (td0)  [vertex,draw=KITred,fill=KITred!15] {\gs};%
        \node (td2_v) at (td2)  [vertex,draw=KITred,fill=KITred!15] {\gs};%
        
        \begin{pgfonlayer}{background}
        \draw [KITblue, route] (ta0) -- (ta1_v) -- (ta2_v);
        \draw [KITblue, route] (tb0) -- (tb1_v) -- (tb2_v);
        \draw [KITgreen, route] (tc0) -- (tc1_v);
        \draw [KITred, route] (td0) -- (td2_v);
                
        \draw [directedEdge]  (ta1) to[bend left=40] (tc0_v);
        \draw [directedEdge]  (tb2) -- (td0_v);
        \draw [directedEdge,dashed,KITred]  (tb1) to[bend left=12] (tc0_v);
        \end{pgfonlayer}
		
		\node at (ta0) [text=KITblue] {\small{$0$}};%
		\node at (tb0) [text=KITblue] {\small{$0$}};%
		\node at (ta1) [text=KITblue] {\small{$1$}};%
		\node at (tb1) [text=KITblue] {\small{$1$}};%
        \node at (ta2) [text=KITblue] {\small{$2$}};%
		\node at (tb2) [text=KITblue] {\small{$2$}};%
		\node at (tc0)  [text=KITgreen] {\small{$0$}};%
		\node at (tc1)  [text=KITgreen] {\small{$1$}};%
        \node at (td0)  [text=KITred] {\small{$0$}};%
        \node at (td2)  [text=KITred] {\small{$1$}};%
		\node at (source_label) [text=nodeColor!100] {\small{$\sourceStop$}};%
        \node at (p1) [text=nodeColor!100] {\small{$\aStop_1$}};%
		\node at (p2) [text=nodeColor!100] {\small{$\aStop_2$}};%
		\node at (p3) [text=nodeColor!100] {\small{$\aStop_3$}};%
		\node at (p4) [text=nodeColor!100] {\small{$\aStop_4$}};%
		\node at (target_label) [text=nodeColor!100] {\small{$\targetStop$}};%
	\end{scope}
\end{tikzpicture}
        \caption{
            An example showing the potential conflict between the latest-exit rule and reached index pruning.
            Gray boxes represent stops.
            Nodes within the boxes represent stop events and are labeled with their indices along the respective trip.
            Colored edges represent trips, which are labeled with the departure and arrival times of the respective stop events.
            Gray edges represent transfers.
            The red dashed edge represents the possible transfer~$\transfer{\tripA'[1]}{\tripB[0]}$, which is discarded due to latest-exit pruning.
            The reason for this is the transfer~$\transfer{\tripA'[2]}{\tripC[0]}$, which allows the stop~$\targetStop$ to be reached with the same arrival time.
            On the other hand, the transfer~$\transfer{\tripA[2]}{\tripC[0]}$ is not feasible, and there is no footpath from $\aStop_1$ to~$\aStop_4$ or from~$\aStop_2$ to~$\aStop_3$.
            During a query from~$\sourceStop$ to~$\targetStop$ with departure time~$0$, reached index pruning ensures that~$\tripA'$ is scanned but~$\tripA$ is not.
            Hence, the transfer~$\transfer{\tripA[1]}{\tripB[0]}$ is not relaxed.
            To resolve the conflict, it must be ensured that the query algorithm relaxes the transfer~$\transfer{\tripA'[2]}{\tripC[0]}$.
        }
        \label{fig:conflict}
    \end{figure}
    
    \begin{lemma}[Reached Index Pruning Correctness]
    \label{lem:line-pruning}
    For each pruned optimal journey~$\aJourney \in \journeys(\query,\transfers)$, there exists a non-pruned journey~$\aJourney' \in \journeys(\query,\transfers)$ that stop-dominates~$\aJourney$.
    \end{lemma}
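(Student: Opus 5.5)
We argue by well-founded induction along $\prec_\query$, which is a total order on the finite set~$\journeys(\query,\transfers)$. Fix a pruned optimal journey~$\aJourney$ and assume the claim for all pruned optimal journeys that precede it in~$\prec_\query$. Since~$\aJourney$ is pruned, there is a prefix~$\aJourney_p$ of~$\aJourney$, ending in some event~$\aTrip[i]$, that is event-dominated by a non-pruned journey~$\aJourney^*$. Thus~$\aJourney^* \prec_\query \aJourney_p$, and~$\aJourney^*$ ends in~$\aTrip'[i]$ with~$\aTrip' \preceq \aTrip$. Choose the shortest such prefix. Write~$\aJourney = \aJourney_p \circ \aJourney_s$, where~$\aJourney_s$ is the event-based suffix starting at~$\aTrip[i]$.

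The goal is to transplant~$\aJourney_s$ onto~$\aJourney^*$. Since~$\aTrip' \preceq \aTrip$ and both trips belong to the same line, every event~$\aTrip'[x]$ with~$x\ge i$ arrives no later than~$\aTrip[x]$ at the same stop. Consequently, the first transfer~$\transfer{\aTrip[j]}{\aTrip_2[i_2]}$ of~$\aJourney_s$ has a feasible counterpart~$\transfer{\aTrip'[j]}{\aTrip_2[i_2]}$. That counterpart need not lie in~$\transfers$, and this is exactly the conflict shown in~\Cref{fig:conflict}. We therefore form the feasible journey~$\hat\aJourney = \aJourney^* \circ \langle \tripSegment{\aTrip'}{i}{j}, \dots\rangle$. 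It arrives at~$\targetStop$ no later than~$\aJourney$ and uses the same number of trips. All transfers of~$\aJourney^*$ lie in~$\transfers$, and only the junction transfer may be missing.

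If~$\hat\aJourney$ uses only transfers in~$\transfers$, we compare it with~$\aJourney$. By~\Cref{lem:stability-prefix} and~\Cref{obs:prefix-order}, $\hat\aJourney \prec_\query \aJourney$: the two journeys diverge at or before~$\aJourney^*$ versus~$\aJourney_p$, and~$\aJourney^* \prec_\query \aJourney_p$. If the junction transfer is missing, we apply~\Cref{lem:transfer-generation} to~$\hat\aJourney$ at that first missing transfer. This gives a journey~$\aJourney'$ that keeps the prefix of~$\aJourney^*$ up to~$\aTrip'$ (possibly exiting~$\aTrip'$ at a different index). It also uses no more trips, arrives no later, and lies in~$\journeys(\query,\transfers)$ after iterating the lemma on later missing transfers. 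Because~$\aJourney'$ agrees with~$\aJourney^*$ on every proper prefix before the last trip segment of~$\aJourney^*$, \Cref{obs:prefix-order} again yields~$\aJourney' \prec_\query \aJourney$. This requires care when the first difference occurs within the trip segment on~$\aTrip'$, where the entries~$j$ and~$i$ enter the tiebreaking sequence. Since~$\aJourney$ is optimal and~$\aJourney'$ is no worse in both criteria, $\aJourney'$ is also optimal and arrives at the same time with the same number of trips. It therefore stop-dominates~$\aJourney$.

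If~$\aJourney'$ is non-pruned, we are done. Otherwise~$\aJourney'$ is a pruned optimal journey with~$\aJourney' \prec_\query \aJourney$. By the induction hypothesis, some non-pruned~$\aJourney''$ stop-dominates~$\aJourney'$. Stop-domination is transitive (same stop, arrival no later, $\prec_\query$ transitive), so~$\aJourney''$ stop-dominates~$\aJourney$.

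The main obstacle is guaranteeing~$\aJourney' \prec_\query \aJourney$ after the repair by~\Cref{lem:transfer-generation}. The repaired journey may leave~$\aTrip'$ at a different index, may replace later trips, and may lose optimality in the trip count, in which case it dominates~$\aJourney$ outright. For the order, I would rely on the fact that the repaired journey shares the prefix~$\aJourney^*$ up to the boarding of~$\aTrip'$, together with~$\aJourney^* \prec_\query \aJourney_p$. The exception is the case where~$\aJourney^*$ and~$\aJourney_p$ differ only in the last trip. There I would compare the trips directly via~$\departureTime(\aTrip_1[1])$ or via transfer IDs, using~$\aTrip'\preceq\aTrip$ and the consecutive ID assignment.
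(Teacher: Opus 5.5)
Your proposal is correct and follows essentially the same route as the paper. Like the paper, you splice the event-dominating non-pruned prefix onto the rest of $\aJourney$, split on whether the junction transfer lies in $\transfers$ (repairing it via \Cref{lem:transfer-generation} if not), inherit $\prec_\query$-precedence from the shared prefix up to the boarding of $\aTrip'$, and conclude by descent along $\prec_\query$ using transitivity of stop-domination. The one imprecision is your ``exception'' case: the comparison does not follow from $\aTrip'\preceq\aTrip$ or the consecutive ID assignment (transfer IDs need not respect $\prec$), but directly from $\aJourney^*\prec_\query\aJourney_p$, since both end at index $i$ and so first differ in the boarding entries, which the repaired journey inherits from $\aJourney^*$ --- exactly as the paper argues.
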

    \begin{proof}
        We show the existence of a journey~$\aJourney' \in \journeys(\query,\transfers)$ that stop-dominates~$\aJourney$.
        If~$\aJourney'$ is also pruned, then we can apply the same argument iteratively because stop domination is transitive.
        Because~$\aJourney' \prec_\query \aJourney$ and because the number of journeys in~$\journeys(\query,\transfers)$ with at most~$|\aJourney|$ trips is finite, it follows that there is a choice of~$\aJourney'$ such that~$\aJourney'$ is non-pruned.
    
        Because~$\aJourney$ is pruned, there is a prefix~$\aJourney_p$ of~$\aJourney$ that is event-dominated by a non-pruned journey~$\aJourney'_p \in \journeys(\query,\transfers)$.
        Let~$\tripSegment{\aTrip_x}{i_x}{j_x}$ denote the final trip segment of~$\aJourney_p$.
        Then the final stop event of~$\aJourney'_p$ is~$\aTrip'_x[j_x]$ with~$\aTrip'_x \preceq \aTrip_x$.
        Because the transfer~$\aTransfer_x = \transfer{\aTrip_x[j_x]}{\aTrip_{x+1}[j_{x+1}]}$ is feasible, the transfer~$\aTransfer'_x = \transfer{\aTrip'_x[j_x]}{\aTrip_{x+1}[j_{x+1}]}$ is feasible as well.
        Hence, if we replace~$\aJourney_p$ with~$\aJourney'_p$ in~$\aJourney$, we obtain a feasible journey~$\aJourney_c$ with~$|\aJourney|$ trips that arrives at~$\targetStop$ not later than~$\aJourney$.
        We distinguish between two cases:
        \begin{itemize}
            \item If $\aTransfer'_x \in \transfers$, then we have~$\aJourney_c \in \journeys(\query,\transfers)$.
            Furthermore, we have~$\aJourney_p = \aJourney[1,x]$ and~$\aJourney'_p = \aJourney'[1,x]$, so it follows from~\Cref{obs:prefix-order} and~$\aJourney'_p \prec_\query \aJourney_p$ that~$\aJourney_c \prec_\query \aJourney$.
            Hence, $\aJourney'=\aJourney_c$ is the desired journey.
            \item If $\aTransfer'_x \notin \transfers$, then we have~$\aTransfer'_x \neq \aTransfer_x$ and therefore~$\aTrip'_x \prec \aTrip_x$.
            By~\Cref{lem:transfer-generation}, there is a journey~$\aJourney_r \in \journeys(\query,\transfers)$ with at most~$|\aJourney|$ trips that is identical to~$\aJourney'_p$ up to~$\aTrip'[i_x]$ and arrives at~$\targetStop$ not later than~$\aJourney_c$ and therefore also not later than~$\aJourney'$.
            We show that~$\aJourney_r \prec_\query \aJourney$ and therefore~$\aJourney'=\aJourney_r$ is the desired journey.
            Note that~$\aJourney_r[1,x-1] = \aJourney'_p[1,x-1]$ and~$\aJourney[1,x-1] = \aJourney_p[1,x-1]$.
            Thus, if~$\aJourney'_p[1,x-1] \neq \aJourney_p[1,x-1]$, then it follows from~\Cref{obs:prefix-order} and~$\aJourney'_p \prec_\query \aJourney_p$ that~$\aJourney_c \prec_\query \aJourney$.
            Otherwise, the relative order of the journeys depends only on trip segment~$x$, which is~$\tripSegment{\aTrip'_x}{i_x}{j'_x}$ for~$\aJourney_r$, $\tripSegment{\aTrip'_x}{i_x}{j_x}$ for~$\aJourney'_p$, and~$\tripSegment{\aTrip_x}{i_x}{j_x}$ for~$\aJourney$ and~$\aJourney_p$.
            It follows from~$\aJourney'_p \prec_\query \aJourney_p$ that trip segments starting with~$\aTrip'_x[i_x]$ come before those starting with~$\aTrip_x[i_x]$ in the order, so we have~$\aJourney_r \prec_\query \aJourney$.\qedhere
        \end{itemize}
    \end{proof}
    
    \begin{theorem}[TB Correctness]
        \label{th:tb-correctness}
        For a query~$\query$, the representative returned by (Event-)TB for the cost vector~$(\arrivalTime,k)$ is the $\prec_\query$-minimal representative.
    \end{theorem}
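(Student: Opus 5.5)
We prove the claim by combining \Cref{lem:tb-journey} with \Cref{lem:line-pruning}. Fix a cost vector~$(\arrivalTime,k)$ in the Pareto front and let~$\aJourney^*$ be the $\prec_\query$-minimal representative for it among \emph{all} representatives in~$\journeys(\query,\transfers)$, including pruned ones. Let~$\aJourney$ be the representative returned by (Event-)TB. By \Cref{lem:tb-journey}, $\aJourney$ is $\prec_\query$-minimal among the non-pruned representatives that are not stop-dominated by a non-pruned journey. It therefore suffices to show two things: (i)~$\aJourney^*$ is not pruned, and (ii)~$\aJourney^*$ is not stop-dominated by any non-pruned journey. Then~$\aJourney^*$ lies in the candidate set of \Cref{lem:tb-journey}, which forces~$\aJourney^*=\aJourney$, because~$\aJourney^*$ is minimal in a superset of that set.

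For (i), suppose~$\aJourney^*$ is pruned. Since its cost vector is Pareto-optimal, $\aJourney^*$ is an optimal journey, so \Cref{lem:line-pruning} provides a non-pruned~$\aJourney'$ that stop-dominates~$\aJourney^*$. Thus~$\aJourney'$ ends at~$\targetStop$, arrives no later, and satisfies~$\aJourney'\prec_\query\aJourney^*$. The first entry of the tiebreaking sequence is the number of trips, so~$\absoluteVal{\aJourney'}\le k$. If~$\absoluteVal{\aJourney'}<k$, or if~$\aJourney'$ arrives strictly earlier, then~$\aJourney'$ dominates~$\aJourney^*$ (its cost vector is no larger in either criterion and strictly smaller in one), contradicting Pareto-optimality. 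Otherwise~$\aJourney'$ has exactly the cost vector~$(\arrivalTime,k)$, so it is itself a representative with~$\aJourney'\prec_\query\aJourney^*$, contradicting the minimality of~$\aJourney^*$.

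Claim (ii) follows by the same argument, using the given non-pruned stop-dominating journey in place of the one supplied by \Cref{lem:line-pruning}. For stop-based queries, the final footpath to~$\targetStop$ must be handled: "ending at the same stop" should be read as comparing arrival times at~$\targetStop$, matching how the Pareto front is extracted. For event-based queries, the argument applies verbatim with the event tiebreaking sequence.

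The main obstacle I expect is making sure the definitions line up. One concern is whether \Cref{lem:line-pruning} applies to optimal journeys for every target at once, as in a one-to-all query. The other is that \Cref{lem:tb-journey} must compare against representatives of the same cost vector, so that the case analysis on the number of trips and the arrival time exhausts all possibilities. The rest is bookkeeping with the lexicographic order.
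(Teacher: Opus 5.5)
Your proof is correct and uses the same ingredients as the paper, \Cref{lem:tb-journey} combined with \Cref{lem:line-pruning}. The paper starts from the returned journey $\aJourney$ and rules out a pruned journey that stop-dominates it, using transitivity of stop-domination; you start from the $\prec_\query$-minimal representative and show it lies in the candidate set of \Cref{lem:tb-journey}, which is only a difference in bookkeeping (your cost-vector case analysis also makes explicit that the optimality hypothesis of \Cref{lem:line-pruning} holds, which the paper leaves implicit).
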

    \begin{proof}
        Let~$\aJourney$ be the returned representative.
        By~\Cref{lem:tb-journey}, it suffices to show that no pruned journey stop-dominates~$\aJourney$.
        If a pruned journey~$\aJourney'$ stop-dominates~$\aJourney$, then by~\Cref{lem:line-pruning}, there is a non-pruned journey~$\aJourney_u$ that stop-dominates~$\aJourney'$.
        Because stop domination is transitive, $\aJourney_u$ also stop-dominates~$\aJourney$, which contradicts~\Cref{lem:tb-journey}.
    \end{proof}

    \subsection{Subjourney Closure Property}
    Finally, we put the pieces together and prove the subjourney closure property.

    \subjourneyClosure*
    \begin{proof}
        Let~$\query'=(\stopEvent,\targetStop)$ be the event-to-stop query and let~$\aJourney_s$ be the $\stopEvent$-$\targetStop$-subjourney of~$\aJourney$.
        It follows from the optimality of~$\aJourney$ that~$\aJourney_s$ is optimal.
        We show that~$\aJourney_s$ is the~$\prec_{\query'}$-minimal representative for its cost vector.
        Then it follows by~\Cref{th:tb-correctness} that Event-TB returns it and it follows by~\Cref{obs:prefix-found} that the prefix~$\eventSubjourney{\sourceStopEvent}{\targetStopEvent}$ is found.
        Let~$\aJourney'_s$ be another~$\stopEvent$-$\targetStop$-journey.
        Let~$\aJourney_p$ be the journey such that~$\aJourney = \aJourney_p \circ \aJourney_s$ and let~$\aJourney' = \aJourney_p \circ \aJourney'_s$.
        By~\Cref{th:tb-correctness}, we have~$\aJourney \prec_\query \aJourney'$.
        Then it follows by~\Cref{lem:stability-prefix} that~$\aJourney_s \prec_{\query'} \aJourney'_s$.
    \end{proof}

    \newpage
    \section{Omitted Experiments}
    \label{app:experiments}

    This section presents additional experiments and details on the experimental setup that were omitted from~\Cref{ch:experiments}.

    \subsection{Implementation Details}
    \label{app:experiments:implementation}
    Our TB implementation closely follows that of Witt~\cite{Wit15} but omits some low-level performance optimizations.
    In particular, we do not use SIMD intrinsics for the sake of portability.
    Furthermore, we ensure that all data types use enough bits to store any realistic network.
    For most network elements (\eg stops, stop events), we use $\numprint{32}$-bit IDs.
    Because all of our datasets have fewer than $2^{24}$ trips and fewer than $\numprint{256}$ stops per line, we assign a $\numprint{32}$-bit ID to each IBE, with the stop index in the lower 8 bits and the trip ID in the remaining $\numprint{24}$ bits.
	Cell IDs use~$\numprint{16}$ bits, and the rank of a transfer uses~$\numprint{8}$ bits.
    To represent arrival and departure times, we use $\numprint{32}$ bits, which are sufficient to represent a service period of two days at a time resolution of seconds.

    Witt's implementation makes stronger assumptions about the network's size in order to pack bits as tightly as possible.
    As a consequence, it cannot load our Europe instance.
    Moreover, it uses a coarser time resolution in which one time unit represents~$5$ seconds, which allows it to use only $\numprint{16}$ bits to represent arrival and departure times.
    As a result of these differences, Witt's implementation is faster than ours by a factor of~$\numprint{2}$--$\numprint{3}$.
    We note that similar low-level optimizations could be applied to T-REX as well, which would likely benefit to a similar degree.

    \subsection{Visualization}
    \begin{figure}[ht]
		\centering
        \subfloat[Europe\label{fig:europe}]{\includegraphics[width=0.35\textwidth]{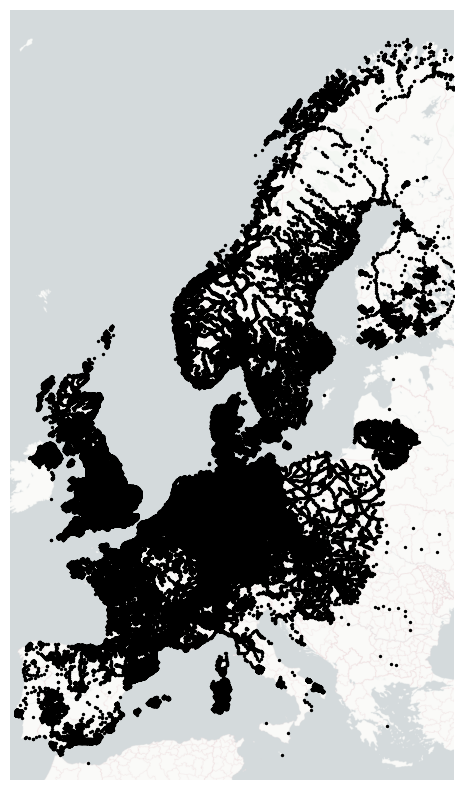}}%
		\hfill
        \subfloat[Germany\label{fig:germany}]{\includegraphics[width=0.35\textwidth]{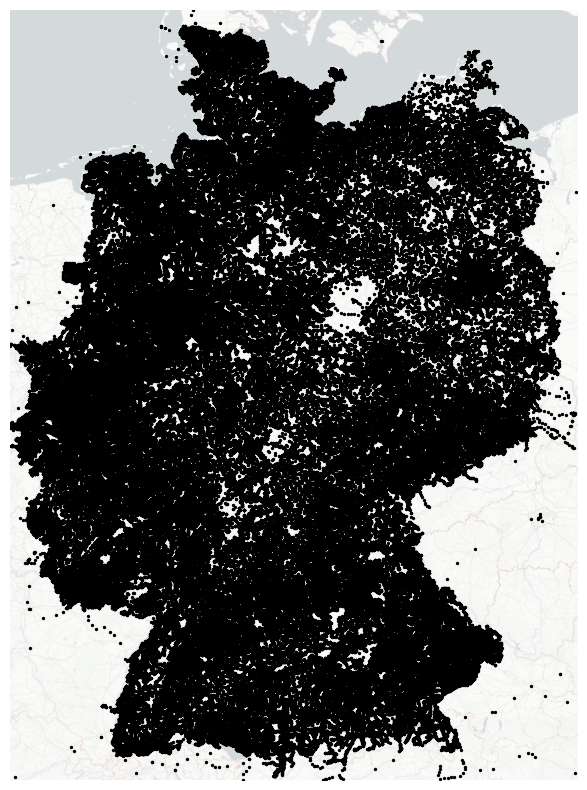}}%
		\hfill
        \subfloat[Great Britain\label{fig:gb}]{\includegraphics[width=0.35\textwidth]{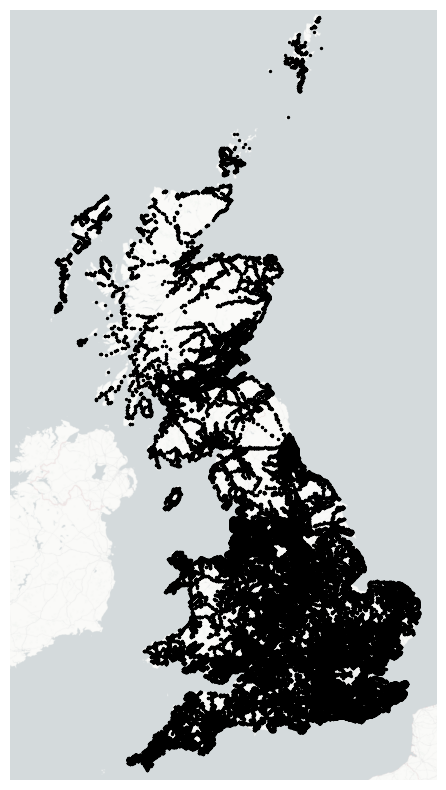}}%
		\hfill
        \subfloat[Switzerland\label{fig:swiss}]{\includegraphics[width=0.35\textwidth]{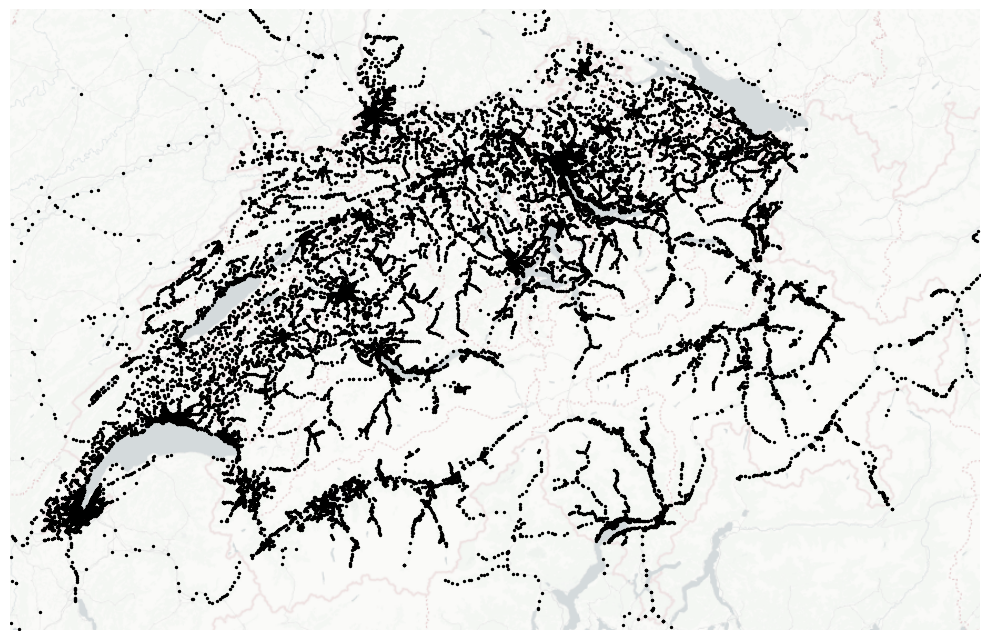}}%
		\caption{Visualization of the large datasets: Europe, Germany, Great Britain and Switzerland. For visualization purposes only, each network is shown restricted to a bounding box that centers on the core service area and excludes distant outliers. Basemap tiles by CARTO; data \textcopyright OpenStreetMap contributors.}
		\label{fig:countries_continents}
	\end{figure}

    \begin{figure}[ht]
		\centering
        \subfloat[Paris\label{fig:paris}]{\includegraphics[width=0.48\textwidth]{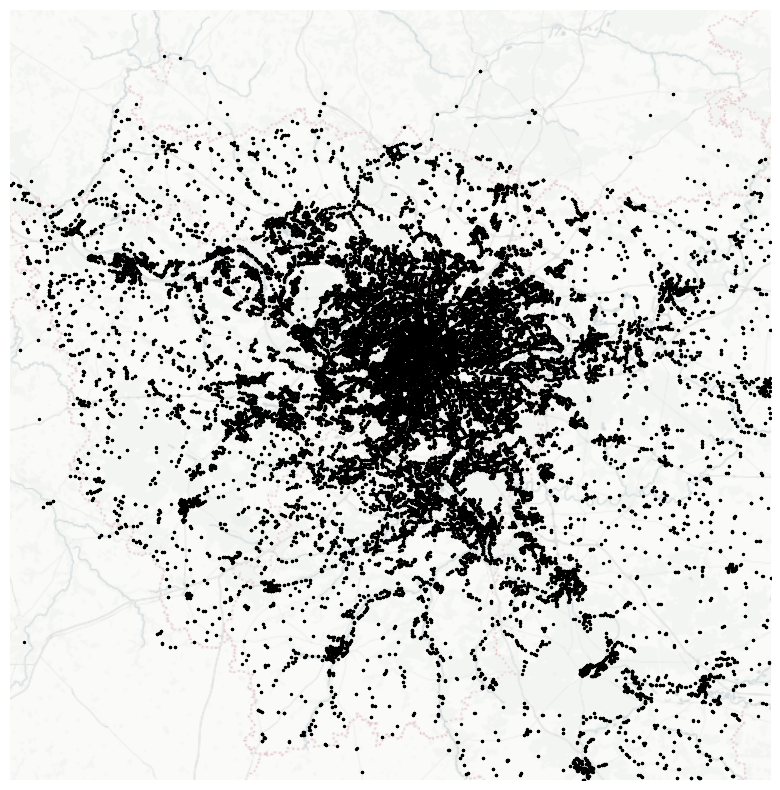}}%
		\hfill
        \subfloat[Berlin\label{fig:berlin}]{\includegraphics[width=0.48\textwidth]{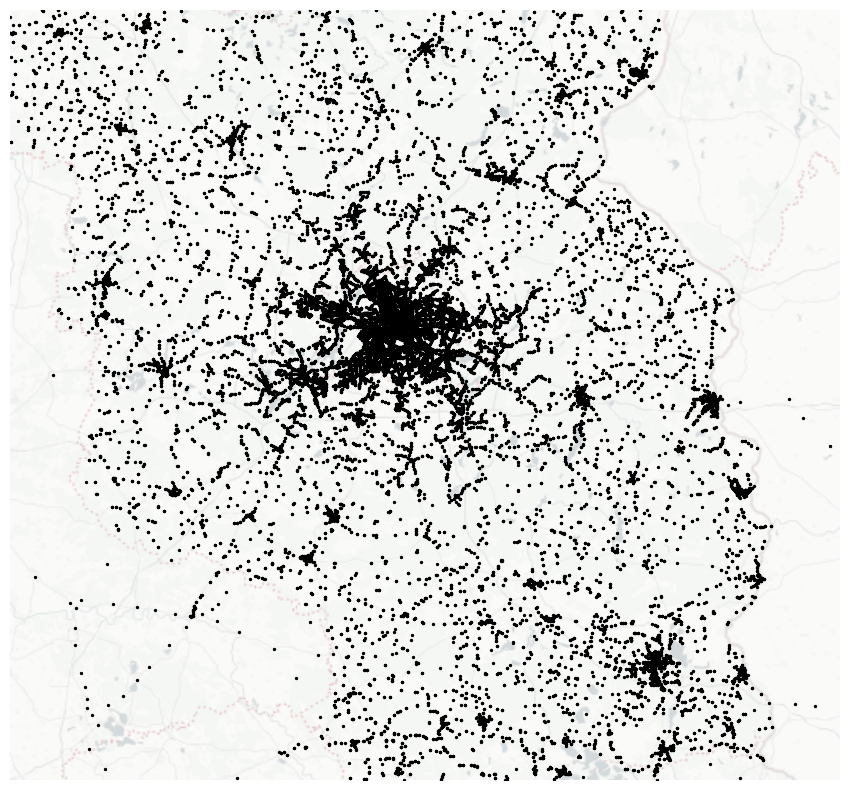}}%
		\caption{Visualization of the metropolitan datasets: Paris and Berlin (\cf Figure~\ref{fig:countries_continents}). Basemap tiles by CARTO; data \textcopyright OpenStreetMap contributors.}
		\label{fig:cities}
	\end{figure}
	
	The stops of each dataset are visualized in Figures~\ref{fig:countries_continents} and \ref{fig:cities}.
    Figure~\ref{fig:searchspace} compares the search space of TB and T-REX for an exemplary query on the Europe network.
    The search space of TB is spread out fairly evenly and includes much of the local transport in the visited metropolitan regions.
    By contrast, T-REX has a much sparser search space and mostly considers long-distance trains.
    The exceptions are the regions near the source and target, which is expected, and a dense cluster located roughly in southwestern Germany.
    As shown in~\Cref{fig:partition.plots}, this cluster is close to the cut on the second-highest level in the multilevel partition.
    Near a cut, transfers from and to regional transport are more likely to be required to traverse the cell and therefore tend to have higher ranks.
    This explains why the search space of T-REX is concentrated in this area.
    
    \begin{figure}[ht]
		\centering
        \includegraphics[width=0.65\textwidth]{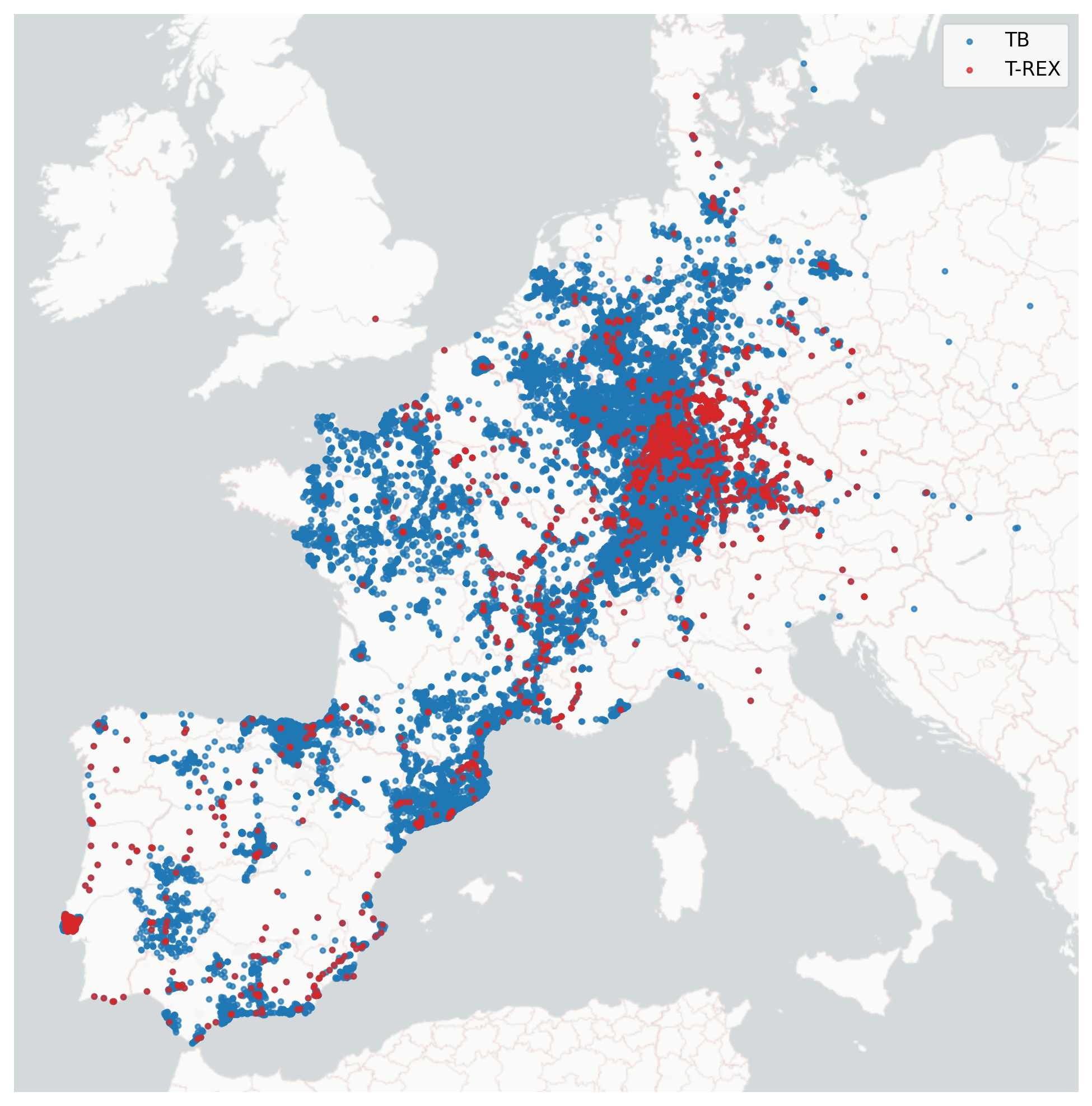}%
		\caption{Search space of TB and T-REX for a query from the ``Lapa'' bus station in Lisbon to the subway station ``Eberswalder Straße'' in Berlin with a departure time of \printMinuteSeconds{8}{0}. Shown are the stops from which outgoing transfers are relaxed during the query. Basemap tiles by CARTO; data \textcopyright OpenStreetMap contributors.}
		\label{fig:searchspace}
	\end{figure}
    \FloatBarrier

    \subsection{Additional Networks}
    \label{app:experiments:gb-berlin}
    We report experiments on two additional networks: Great Britain and Berlin.
    As with the networks of Switzerland and Paris, these were extracted from official GTFS feeds~\cite{gtfsHomeGeneral}.
    Statistics for all six networks are listed in~\Cref{tab:datasets:all}.

    \begin{table}[ht]
        \caption{ 
            An overview of the networks used in our experiments.
            Also shown are the sizes of the TB transfer set~$\transfers$ and the compact layout graph~$\layoutGraph$.
            GB denotes Great Britain.
        }
		\label{tab:datasets:all}
		\centering
		\begin{tabular*}{\textwidth}{@{\,}l@{\extracolsep{\fill}}r@{\extracolsep{\fill}}r@{\extracolsep{\fill}}r@{\extracolsep{\fill}}r@{\extracolsep{\fill}}r@{\extracolsep{\fill}}r@{\,}}
			\toprule                      & Europe                     & Germany                    & GB             & Switzerland               & Paris                      & Berlin                    \\
			\midrule Stops                & $\numprint{1346013}$         & $\numprint{435550}$          & $\numprint{260150}$         & $\numprint{29045}$          & $\numprint{41757}$           & $\numprint{25843}$          \\
			Stop events                   & $\numprint{107252199}$       & $\numprint{30680868}$        & $\numprint{19692037}$      & $\numprint{5032795}$        & $\numprint{4636238}$         & $\numprint{3318251}$        \\
			Lines                         & $\numprint{384075}$          & $\numprint{202238}$          & $\numprint{36071}$          & $\numprint{15967}$          & $\numprint{9558}$           & $\numprint{10891}$          \\
			Trips                         & $\numprint{4848876}$         & $\numprint{1547126}$         & $\numprint{506704}$         & $\numprint{319159}$         & $\numprint{215526}$          & $\numprint{146897}$         \\
			Footpaths                     & $\numprint{1752418}$         & $\numprint{1116976}$         & $\numprint{345594}$         & $\numprint{22186}$          & $\numprint{445912}$          & $\numprint{29428}$          \\
            \noalign{\vskip 5pt}
			TB $\absoluteVal{\transfers}$ & $\numprint{269766664}$       & $\numprint{59181359}$        & $\numprint{31443687}$       & $\numprint{8075627}$        & $\numprint{23284123}$        & $\numprint{6104156}$        \\
			$\absoluteVal{\layoutVertices}$      & $\numprint{778419}$          & $\numprint{308004}$          & $\numprint{142221}$         & $\numprint{24257}$          & $\numprint{13765}$           & $\numprint{14157}$          \\
			$\absoluteVal{\layoutEdges}$  & $\numprint{2298358}$         & $\numprint{1038032}$         & $\numprint{393256}$         & $\numprint{60154}$          & $\numprint{45882}$           & $\numprint{42422}$          \\
			\bottomrule
		\end{tabular*}
	\end{table}

    \Cref{fig:rank:extra} shows the rank distributions for the Great Britain and Berlin networks (\cf \Cref{fig:rank}).
    Great Britain exhibits a similar distribution to Europe.
    Berlin exhibits a flatter distribution, which is similar to Paris but not as pronounced.

    \begin{figure}[t]
		\centering
		\begin{tikzpicture}

\begin{groupplot}[
    group style={
        group size=2 by 1, 
        horizontal sep=2cm, 
        vertical sep=2cm},
    width=6cm,
    height=5cm,
    ybar=0pt,
    /pgf/bar width=7pt,
    ymin=0,
    ymax=100,
    xlabel={Rank},
    ylabel={Transfers (\%)},
    ymajorgrids,
    grid style={dashed,gray!60},
    every axis/.append style={
        tick label style={font=\footnotesize},
        label style={font=\small},
    }
]

\nextgroupplot[title={Great Britain}]
\addplot+[draw=black, fill=blue!40]
table[x=rank,y=percent,col sep=comma] {plot_scripts/data/gb.csv};

\nextgroupplot[title={Berlin}]
\addplot+[draw=black, fill=blue!40]
table[x=rank,y=percent,col sep=comma] {plot_scripts/data/berlin.csv};

\end{groupplot}

\end{tikzpicture}
        \caption{
            Distribution of the ranks among the transfer set for the Great Britain and Berlin networks.
		}
		\label{fig:rank:extra}
	\end{figure}

    \begin{table}[t]
		\caption{
    		Preprocessing times on the Great Britain and Berlin networks for T-REX and ACSA by phase, measured in seconds.
		}
		\label{tab:prepro:extra}
		\centering
		\begin{tabular*}{\textwidth}{@{\,}l@{\extracolsep{\fill}}r@{\extracolsep{\fill}}r@{\extracolsep{\fill}}r@{\extracolsep{\fill}}r@{\extracolsep{\fill}}r@{\extracolsep{\fill}}r@{\extracolsep{\fill}}@{\,}}
			\toprule
            \multirow{2}{*}{Network} & \multirow{2}{*}{Partitioning} & \multicolumn{3}{c}{T-REX} & \multicolumn{2}{c}{ACSA} \\
            \cmidrule(){3-5} \cmidrule(){6-7}
            & & Transfers & Customization & Total & Customization & Total \\
			\midrule
            Great Britain & 8.01 & 2.38 & 3.55 & 13.94 & 8.41 & 16.42\\
            Berlin & 0.99 & 0.36 & 0.75 & 2.10 & 2.78 & 3.77\\
			\bottomrule
		\end{tabular*}
	\end{table}

    \Cref{tab:prepro:extra} and \Cref{tab:uniformrandomquery:extra} compare the memory consumption and performance of T-REX to other algorithms, respectively (cf.~\Cref{tab:prepro} and~\Cref{tab:uniformrandomquery:country}).
    The achieved speedup over TB is $\numprint{7.7}$ for Great Britain and~$\numprint{2.5}$ for Berlin.
    Compared to the other metropolitan network, Paris, the speedup on Berlin is slightly better.
    This is likely due to the high footpath density in the Paris network.
    As footpaths are contracted, the compact layout graph becomes denser and smaller, making it more difficult to find good partitions.
    
    \begin{table}[ht]
		\caption{
            Performance of algorithms on the Great Britain and Berlin networks, averaged over $\numprint{10000}$ random queries. 
        }
		\label{tab:uniformrandomquery:extra}
		\centering
		\begin{tabular*}{\textwidth}{@{\,}l@{\extracolsep{\fill}}l@{\extracolsep{\fill}}c@{\extracolsep{\fill}}r@{\extracolsep{\fill}}r@{\extracolsep{\fill}}r@{\,}}
			\toprule
            \multirow{2}{*}{Network} & \multirow{2}{*}{Algorithm} & \multirow{2}{*}{Pareto} & Query & Preprocessing & \multirow{2}{*}{$k$}\\
            & & & $\left[\si{\mu\second}\right]$ & $\left[ \mathrm{hh}{:}\mathrm{mm}{:}\mathrm{ss}\right]$ & \\
			\midrule 
			\multirow{7}{*}{Great Britain} & CSA       & $\circ$   & $\numprint{43189}$               & --                                                                        & --               \\
			& ACSA      & $\circ$   & $\numprint{9388}$               & \printTime{0}{0}{16}                                                      & $\numprint{1024}$    \\
			& RAPTOR    & $\bullet$ & $\numprint{65776}$           & --                                                                        & --               \\
			& TB        & $\bullet$ & $\numprint{18051}$           & \printTime{0}{0}{2}                                                      & --               \\
			& T-REX     & $\bullet$ & $\numprint{1192}$            & \printTime{0}{0}{14}                                                      & $\numprint{2048}$  \\
			& FLASH-TB (moderate) & $\bullet$ & $\numprint{2315}$            & \printTime{11}{42}{4}                                                      & $\numprint{8}$  \\
			& FLASH-TB (fastest) & $\bullet$ & $\numprint{59}$            & \printTime{10}{03}{14}                                                      & $\numprint{16384}$  \\
            \noalign{\vskip 8pt}
			\multirow{7}{*}{Berlin} & CSA       & $\circ$   & $\numprint{2415}$            & --                                                                        & --                \\
			& ACSA      & $\circ$   & $\numprint{2493}$            & \printTime{0}{0}{4}                                                      & $\numprint{128}$   \\
			& RAPTOR    & $\bullet$ & $\numprint{7259}$            & --                                                                        & --                \\
			& TB        & $\bullet$ & $\numprint{2931}$            & < \printTime{0}{0}{1}                                                      & --                \\
			& T-REX     & $\bullet$ & $\numprint{786}$            & \printTime{0}{0}{2}                                                      & $\numprint{512}$   \\
			& FLASH-TB (moderate) & $\bullet$ & $\numprint{599}$              & \printTime{0}{15}{59}                                                    & $\numprint{8}$ \\
			& FLASH-TB (fastest) & $\bullet$ & $\numprint{33}$              & \printTime{0}{15}{20}                                                    & $\numprint{16384}$ \\
			\bottomrule
		\end{tabular*}
	\end{table}
    \FloatBarrier

    \subsection{Partitioning}
	\label{app:experiments:partition}

    For the multilevel partitioning of the compact layout graph, we tested two open-source partitioners: KaHIP~\cite{San13,githubGitHubKaHIPKaHIP}, which is used by ACSA and FLASH-TB, and the more recent Mt-KaHyPar~\cite{Got22,Got23,Got24,githubGitHubKahyparmtkahypar}.
    Although Mt-KaHyPar was designed primarily for hypergraph partitioning, it also supports ordinary graphs.
    In preliminary experiments, we also evaluated two partitioning approaches that use the geographical coordinates of the stops: a $2$d-tree and an approach based on hierarchical $k$-means clustering, similar to the one used by Scalable TP~\cite{Bas16}.
    However, these performed much worse than the black-box partitioners KaHIP and Mt-KaHyPar.

    For KaHIP, we evaluated the configurations \texttt{strong} and \texttt{ssocial}.
    The latter was designed for graphs with an inhomogeneous degree distribution and was observed to yield better results for FLASH-TB.
    For Mt-KaHyPar, we used the preset \texttt{highest\_quality} and \texttt{flow\_cutter} as the initial partitioner.
    KaHIP was run sequentially as it does not support parallelization for multilevel partitioning.
    We parallelized Mt-KaHyPar with~$6$ cores, as we observed that adding further cores degraded the partition quality.
		
	As recursive bipartitioning is performed top-down, imbalance can accumulate across levels.
	To ensure that the final $2^\numLevels$-partition still respects the global imbalance bound~$\varepsilon$, Mt-KaHyPar adaptively reduces the allowed imbalance for the higher levels~\cite{Sch16}.
    We implemented a modified version, denoted as Mt-KaHyPar$^*$~\cite{githubGitHubPatrickSteilmtkahypar}, that disables this behavior.
    Instead, the given imbalance~$\varepsilon$ is used on every level except the highest level, where the imbalance constraint is omitted entirely.
    Because the heuristics employed by Mt-KaHyPar are geared towards finding balanced partitions, we observed that it still produces a fairly balanced top-level partition without any constraint.
    Allowing extreme imbalances at the lower levels can negatively affect both the customization time, as the larger cells are more expensive to process, and the query time, as more source-target pairs have a low LCL.
    However, we found that these effects were outweighed by the smaller cut size.

    \Cref{tab:partitioner} compares the different partitioners on the Europe network with the configuration that yielded the best results overall ($14$ levels, $\varepsilon=50$\,\%).
    \Cref{fig:partition:ibes} additionally compares the number of IBEs on each level.
    The number of IBEs on the lowest level is similar for all configurations except Mt-KaHyPar$^*$, which reduces the total number of IBEs by a factor~$2$ and shows improvements across all levels.
    Because it does not attempt to keep the overall $2^\numLevels$-partition balanced, it has more flexibility to produce a highly granular subdivision in sparse regions without being forced to also partition metropolitan clusters that do not have small cuts.

    \begin{table}[ht]
    	\caption{Results for different partitioners on the Europe network with $14$ levels and $50$\,\% imbalance. Mt-KaHyPar is parallelized with~$6$ cores. Query times are T-REX with overlays, averaged over $\numprint{10000}$ random queries.}
		\label{tab:partitioner}
		\centering
		\begin{tabular*}{\textwidth}{@{\,}l@{\extracolsep{\fill}}r@{\extracolsep{\fill}}r@{\extracolsep{\fill}}r@{\extracolsep{\fill}}r@{\extracolsep{\fill}}r@{\,}}
			\toprule
            \multirow{2}{*}{Partitioner} & Partitioning & IBEs & IBEs & Customization & Query\\
            & $\left[\mathrm{mm}{:}\mathrm{ss}\right]$ & (total) & (top-level) & $\left[\mathrm{mm}{:}\mathrm{ss}\right]$ & $\left[\si{\mu\second}\right]$\\
			\midrule 
            KaHIP \texttt{strong} & \printMinuteSeconds{20}{34} & \numprint{8838870} & \numprint{399427} & \printMinuteSeconds{5}{53} & \numprint{44727} \\
            KaHIP \texttt{ssocial} & \printMinuteSeconds{13}{00} & \numprint{8774535} & \numprint{66794} & \printMinuteSeconds{01}{50} & \numprint{21744} \\
            Mt-KaHyPar & \printMinuteSeconds{01}{07} & \numprint{9293770} & \numprint{2793} & \printMinuteSeconds{01}{11} & \numprint{11028} \\
            Mt-KaHyPar$^*$ & \printMinuteSeconds{00}{54} & $\numprint{4648113}$ & \numprint{701} & \printMinuteSeconds{0}{54} & $\numprint{8725}$ \\
			\bottomrule
		\end{tabular*}
    \end{table}
    
    \begin{figure}[H]
        \centering
        \begin{tikzpicture}
\begin{axis}[
    ybar,
    bar width=3pt,
    width=\textwidth,
    height=7cm,
    legend style={at={(0.5,-0.2)}, anchor=north, legend columns=4, column sep=0.2cm},
    legend cell align=left,
    ylabel={IBEs},
    xlabel={Level},
    symbolic x coords={0,1,2,3,4,5,6,7,8,9,10,11,12,13},
    xtick=data,
    ymode=log,
    ymin=1,
    ymax=2e7,
    grid=major,
    log basis y={10},
    ytick={1e0,1e1,1e2,1e3,1e4,1e5,1e6,1e7},
]

\addplot coordinates {
(0,8838870) (1,7119616) (2,5259610)
(3,3627055) (4,2346351) (5,1579029) (6,1018132)
(7,697988) (8,589273) (9,512141) (10,467077) (11,434382)
(12,419755) (13,399427)
};
\addlegendentry{KaHIP \texttt{strong}}

\addplot coordinates {
(0,8774535) (1,7107747) (2,5241244) (3,3579632) (4,2258963) (5,1364422) (6,736335) (7,393080) (8,242654) (9,175043) (10,144667) (11,118546) (12,77646) (13,66794)
};
\addlegendentry{KaHIP \texttt{ssocial}}

\addplot coordinates {
(0,9275607) (1,6804734) (2,4701457) (3,3006206) (4,1809233) (5,1053819) (6,556637) (7,282978) (8,146398) (9,81601) (10,44817) (11,22711) (12,9892) (13,2793)
};
\addlegendentry{Mt-KaHyPar}

\addplot coordinates {
(0,4470928) (1,3055459) (2,2012568) (3,1294270) (4,792838) (5,457654) (6,250998) (7,132590) (8,74821) (9,44368) (10,26844) (11,15155) (12,4219) (13,701)
};
\addlegendentry{Mt-KaHyPar$^*$}

\end{axis}
\end{tikzpicture}
        \caption{
            Number of IBEs per level for different partitioners, measured on the Europe network with $14$ levels and $50$\,\% imbalance.
		}
		\label{fig:partition:ibes}
    \end{figure}

    On the higher levels, the differences between KaHIP and Mt-KaHyPar are drastic.
    KaHIP \texttt{strong} produces a very large top-level cut and only achieves a twofold reduction in the number of IBEs between level~$7$ and level~$13$.
    The \texttt{ssocial} configuration indeed finds significantly smaller cuts on the higher levels, but the reduction in the number of IBEs still stagnates.
    By contrast, Mt-KaHyPar achieves a strong reduction with each subsequent level.
    As a result, the number of IBEs for the top-level cut is smaller by more than a factor of~$100$ compared to KaHIP \texttt{strong}.
    \Cref{fig:partition.plots} visualizes the two topmost levels for the different partitioners.
    We observe that the cells computed by KaHIP are highly fragmented, whereas Mt-KaHyPar produces geographically coherent cells. 
    In particular, KaHIP often separates metropolitan areas (e.g., London) from their surroundings and instead groups them with far-away regions.

    \begin{figure}[ht]
		\centering
        \includegraphics[width=\textwidth]{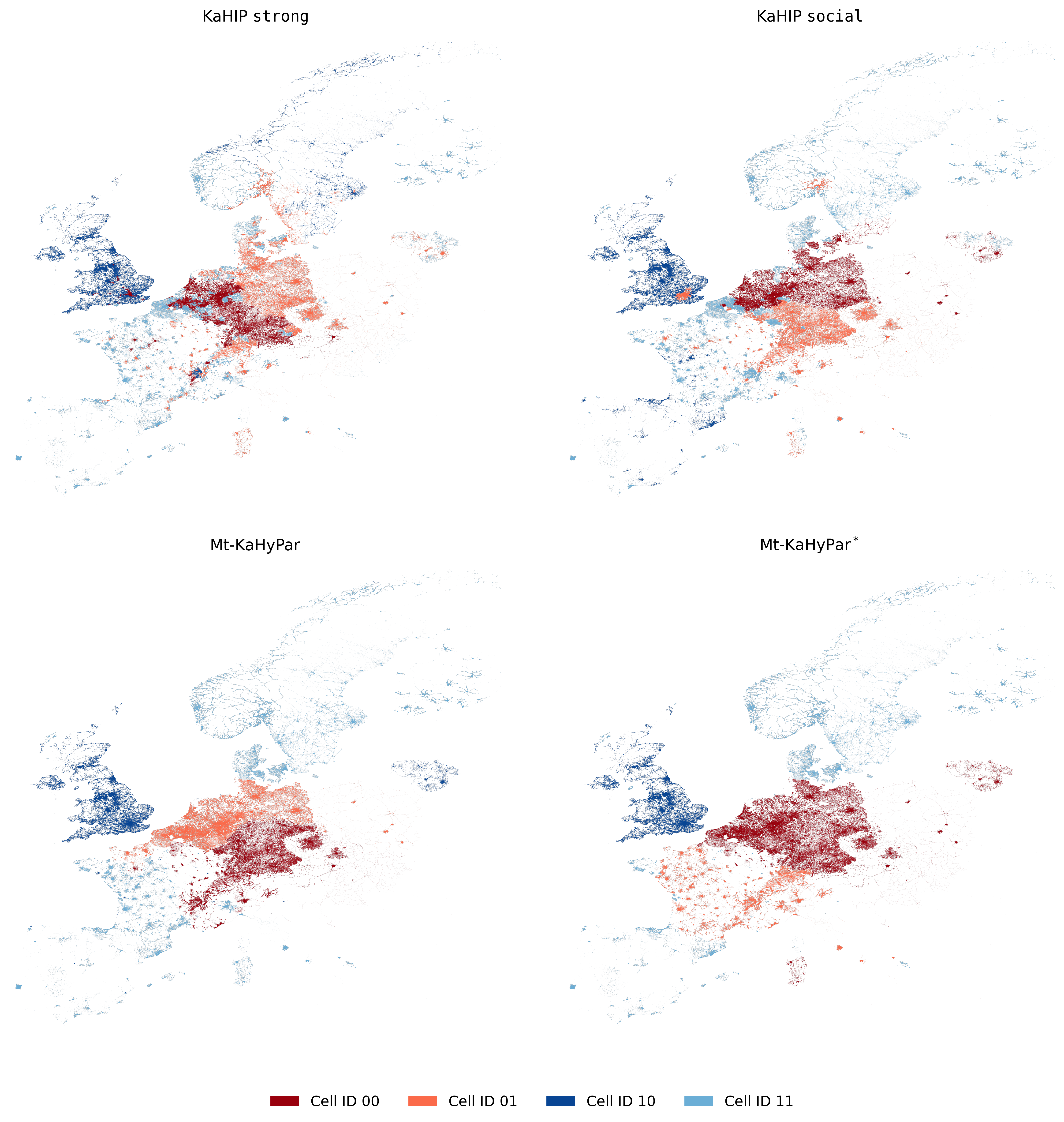}
        \caption{The first two levels of the different partitions for Europe ($14$ levels, $50\,\%$ imbalance). The cells with IDs $00$ and $01$ (shaded in red) form top-level cell~$0$, whereas the cells with IDs $10$ and $11$ (shaded in blue) form top-level cell~$1$.}
        \label{fig:partition.plots}
	\end{figure}
        
    The reductions in the number of IBEs also correspond to a reduction in the transfer ranks (cf.~\Cref{fig:partition:transfers}) and to faster customization and query times.
    Compared to KaHIP \texttt{strong}, Mt-KaHyPar$^*$ speeds up the customization by a factor of~\numprint{5.4} and the query by a factor of~\numprint{5.1}.
    In particular, Mt-KaHyPar$^*$ yields faster customization and query times than Mt-KaHyPar, which indicates that allowing high imbalances pays off overall.
    The partitioning time is also drastically reduced by switching from KaHIP to Mt-KaHyPar.
    In all subsequent experiments, we only use Mt-KaHyPar$^*$.
    For each network and each choice for the number of levels~$\numLevels$, we evaluate four imbalance values ($\numprint{25}\,\%$, $\numprint{50}\,\%$, $\numprint{75}\,\%$ and $\numprint{100}\,\%$) and report the configuration with the lowest query times (cf.~\Cref{tab:varinglevel} and Appendix~\ref{app:experiments:metrics}).

    \begin{figure}[H]
        \centering
        \begin{tikzpicture}
\begin{axis}[
    ybar,
    bar width=3pt,
    width=\textwidth,
    height=7cm,
    legend style={at={(0.5,-0.2)}, anchor=north, legend columns=4, column sep=0.2cm},
    legend cell align=left,
    ylabel={$\absoluteVal{\transfers_l}$},
    xlabel={Level},
    symbolic x coords={1,2,3,4,5,6,7,8,9,10,11,12,13,14},
    xtick=data,
    ymode=log,
    ymin=1e4,
    ymax=3e8,
    grid=major,
    log basis y={10},
    ytick={1e4,1e5,1e6,1e7,1e8},
]

\addplot coordinates {
(1,197068348) (2,179466845) (3,153566789)
(4,122278270) (5,90325129) (6,67350985) (7,46281816)
(8,32006734) (9,27800260) (10,24063078) (11,21810528)
(12,20049736) (13,19265280) (14,18080370)
};
\addlegendentry{KaHIP \texttt{strong}}

\addplot coordinates {
(1,196678398) (2,178758082) (3,152876068)
(4,121530899) (5,87544543) (6,60544890) (7,36535821)
(8,20654333) (9,13272422) (10,9918411) (11,8259700)
(12,7155119) (13,4902287) (14,4268050)
};
\addlegendentry{KaHIP \texttt{ssocial}}

\addplot coordinates {
(1,196183171) (2,173033913) (3,143728000)
(4,109915406) (5,77810955) (6,51549978) (7,31568197)
(8,17366545) (9,10730751) (10,5934536) (11,3202666)
(12,1971555) (13,693353) (14,42011)
};
\addlegendentry{Mt-KaHyPar}

\addplot coordinates {
(1,116738265) (2,92309919) (3,70963005)
(4,51013611) (5,34594757) (6,22855984) (7,14859785)
(8,8423220) (9,5070643) (10,3092791) (11,2137041)
(12,1026292) (13,212016) (14,15558)
};
\addlegendentry{Mt-KaHyPar$^*$}

\end{axis}
\end{tikzpicture}
        \caption{
            Size of the transfer overlay per level for different partitioners, measured on the Europe network with $14$ levels and $50$\,\% imbalance.
		}
		\label{fig:partition:transfers}
    \end{figure}
    \FloatBarrier

    \subsection{Memory Consumption}
    \label{app:experiments:memory}
    The memory consumption of the basic and overlay variants of T-REX is reported in~\Cref{tab:memory}.
    In addition to the TB data structures, the basic variant stores two bytes per stop and per stop event for the cell IDs and one byte per transfer for the rank.
    This amounts to a total memory overhead of~$5$--$9$\,\% over TB.
    The overlay variant is significantly more expensive; it requires~$3$--$4$ times as much space as TB and~$12$--$18$ times as much as the timetable itself.
    This is mainly due to the $\successor{\aTrip[i]}{\aLevel}$ data structure and the transfer overlays, which exist once for each level.
    Because the number of transfers in the overlay decreases with each level, the overall space required for the transfers themselves is significantly smaller than~$\numLevels \absoluteVal{\transfers}$.
    However, because the overlays are stored in an adjacency array format, there is a fixed overhead of~$4\absoluteVal{\stopEvents}$ bytes per level to store the pointer to the outgoing transfers of each stop event.
    
    We note that, compared to algorithms such as RAPTOR, TB already prioritizes query speed over memory overhead.
    In particular, the precomputed transfer set requires more space than the input network.
    For T-REX, we stayed within this design paradigm and tuned the algorithm for maximum query speed.
    However, several steps could be taken to reduce the memory consumption of the overlay variant at the expense of some query speed:
    As \Cref{tab:varinglevel} shows, the number of levels can be reduced substantially without a large degradation in the query performance.
    It would also be possible to designate some levels as \emph{phantom levels}, similarly to CRP~\cite{Del17}:
    These levels are computed during the customization but then discarded.
    During a query, the rank information is instead taken from the next-lowest level that is still stored.
    Finally, one could assign a second, smaller set of consecutive stop event IDs for the higher levels.
    This would make it possible to shrink the adjacent array representations because events without outgoing transfers could be skipped.
    
    \begin{table}[ht]
    	\caption{Memory consumption in $\si{\mega\byte}$ for the timetables and the TB transfer set, as well the total memory consumption for TB and T-REX.}
		\label{tab:memory}
		\centering
		\begin{tabular*}{\textwidth}{@{\,}l@{\extracolsep{\fill}}r@{\extracolsep{\fill}}r@{\extracolsep{\fill}}r@{\extracolsep{\fill}}r@{\extracolsep{\fill}}r@{\,}}
			\toprule
            \multirow{2}{*}{Network} & \multirow{2}{*}{Timetable} & \multirow{2}{*}{Transfers} & \multirow{2}{*}{TB} & \multicolumn{2}{c}{T-REX} \\
            \cmidrule(){5-6}
            & & & & Without overlays & With overlays\\
			\midrule 
            Europe & \numprint{1813} & \numprint{3496} & \numprint{8256} & \numprint{8773} & \numprint{28459} \\
            Germany & \numprint{541} & \numprint{794} & \numprint{2190} & \numprint{2304} & \numprint{7974} \\
            Great Britain & \numprint{329} & \numprint{435} & \numprint{1303} & \numprint{1363} & \numprint{4401} \\
            Switzerland & \numprint{83} & \numprint{112} & \numprint{331} & \numprint{347} & \numprint{1026} \\
            Paris & \numprint{79} & \numprint{284} & \numprint{492} & \numprint{537} & \numprint{1378} \\
            Berlin & \numprint{55} & \numprint{83} & \numprint{228} & \numprint{240} & \numprint{683} \\
			\bottomrule
		\end{tabular*}
    \end{table}

    \subsection{Performance by Number of Levels}
    \label{app:experiments:metrics}
    The performance of T-REX and ACSA depending on the number of levels is reported for Germany in~\Cref{tab:varinglevel:eu}, Great Britain and Switzerland in~\Cref{tab:varinglevel:gbswiss}, and Paris and Berlin in~\Cref{tab:varinglevel:pabe} (cf.~\Cref{tab:varinglevel} for Europe).
    Depending on the network, the query performance starts to stagnate between six to nine levels.
    On the metropolitan networks, ACSA does not achieve any speedup due to the overhead required for merging and sorting the relevant connections, and this effect is largely independent of the number of levels.
    By contrast, T-REX achieves a modest but clear speedup and benefits from additional levels.
    
    \begin{table}[ht]
        \caption{
            Detailed performance measurements for T-REX (with overlays) and ACSA depending on the number of levels, averaged over $\numprint{10000}$ random queries on the Germany network.
            For level~$0$, we report the metrics for TB and CSA.
            Query times are shown as $\left[\si{\mu\second}\right]$, customization as $\left[ \mathrm{mm}{:}\mathrm{ss}\right]$.
            For each level, we report the imbalance (measured in $\%$) that led to the best query performance.
        }
        \label{tab:varinglevel:eu}
        \centering
        \begin{tabular*}{\textwidth}{@{\,}r@{\extracolsep{\fill}}r@{\extracolsep{\fill}}r@{\extracolsep{\fill}}r@{\extracolsep{\fill}}r@{\extracolsep{\fill}}r@{\extracolsep{\fill}}r@{\extracolsep{\fill}}r@{\extracolsep{\fill}}r@{\extracolsep{\fill}}r@{\,}}
            \toprule
            \multirow{2}{*}{Levels} & \multicolumn{5}{c}{T-REX} & \multicolumn{3}{c}{ACSA} \\
            \cmidrule(){2-6} \cmidrule(){7-9}
            & Scan. trips & Rel. transfers & Query & Custo. & Imbal. & Query & Custo. & Imbal. \\
            \midrule
            $\numprint{0}$ & $\numprint{316797}$ & $\numprint{4675832}$ & $\numprint{65060}$ & -- & -- & $\numprint{82261}$ & -- & -- \\
            $\numprint{1}$ & $\numprint{245541}$ & $\numprint{3619390}$ & $\numprint{55604}$ & \printMinuteSeconds{0}{5} & $\numprint{25}$ & $\numprint{139463}$ & \printMinuteSeconds{0}{41} & $\numprint{50}$ \\
            $\numprint{2}$ & $\numprint{177605}$ & $\numprint{2623455}$ & $\numprint{38855}$ & \printMinuteSeconds{0}{5} & $\numprint{50}$ & $\numprint{122523}$ & \printMinuteSeconds{0}{28} & $\numprint{50}$ \\
            $\numprint{3}$ & $\numprint{117957}$ & $\numprint{1694979}$ & $\numprint{24641}$ & \printMinuteSeconds{0}{5} & $\numprint{25}$ & $\numprint{91759}$ & \printMinuteSeconds{0}{25} & $\numprint{25}$ \\
            $\numprint{4}$ & $\numprint{86371}$ & $\numprint{1181547}$ & $\numprint{17664}$ & \printMinuteSeconds{0}{5} & $\numprint{100}$ & $\numprint{71654}$ & \printMinuteSeconds{0}{18} & $\numprint{100}$ \\
            $\numprint{5}$ & $\numprint{51628}$ & $\numprint{607465}$ & $\numprint{9703}$ & \highlightCell \printMinuteSeconds{0}{4} & $\numprint{25}$ & $\numprint{50294}$ & \printMinuteSeconds{0}{11} & $\numprint{25}$ \\
            $\numprint{6}$ & $\numprint{43456}$ & $\numprint{438698}$ & $\numprint{7751}$ & \highlightCell \printMinuteSeconds{0}{4} & $\numprint{25}$ & $\numprint{38472}$ & \highlightCell \printMinuteSeconds{0}{9} & $\numprint{25}$ \\
            $\numprint{7}$ & $\numprint{36916}$ & $\numprint{313479}$ & $\numprint{6295}$ & \highlightCell \printMinuteSeconds{0}{4} & $\numprint{25}$ & $\numprint{30074}$ & \highlightCell \printMinuteSeconds{0}{9} & $\numprint{25}$ \\
            $\numprint{8}$ & $\numprint{30537}$ & $\numprint{219917}$ & $\numprint{4928}$ & \highlightCell \printMinuteSeconds{0}{4} & $\numprint{25}$ & $\numprint{20810}$ & \highlightCell \printMinuteSeconds{0}{9} & $\numprint{25}$ \\
            $\numprint{9}$ & $\numprint{29397}$ & $\numprint{201902}$ & $\numprint{4673}$ & \printMinuteSeconds{0}{5} & $\numprint{50}$ & $\numprint{19695}$ & \printMinuteSeconds{0}{10} & $\numprint{50}$ \\
            $\numprint{10}$ & $\numprint{29739}$ & $\numprint{178544}$ & $\numprint{4698}$ & \printMinuteSeconds{0}{6} & $\numprint{25}$ & \highlightCell $\numprint{19206}$ & \printMinuteSeconds{0}{13} & $\numprint{25}$ \\
            $\numprint{11}$ & $\numprint{29034}$ & $\numprint{167440}$ & $\numprint{4595}$ & \printMinuteSeconds{0}{7} & $\numprint{25}$ & $\numprint{19804}$ & \printMinuteSeconds{0}{17} & $\numprint{25}$ \\
            $\numprint{12}$ & $\numprint{28741}$ & $\numprint{173235}$ & $\numprint{4561}$ & \printMinuteSeconds{0}{7} & $\numprint{75}$ & $\numprint{20350}$ & \printMinuteSeconds{0}{26} & $\numprint{75}$ \\
            $\numprint{13}$ & $\numprint{28151}$ & $\numprint{167009}$ & $\numprint{4450}$ & \printMinuteSeconds{0}{9} & $\numprint{50}$ & $\numprint{19540}$ & \printMinuteSeconds{0}{44} & $\numprint{50}$ \\
            $\numprint{14}$ & \highlightCell $\numprint{27998}$ & \highlightCell $\numprint{160570}$ & \highlightCell $\numprint{4442}$ & \printMinuteSeconds{0}{11} & $\numprint{50}$ & $\numprint{20369}$ & \printMinuteSeconds{1}{50} & $\numprint{50}$ \\
            \bottomrule
        \end{tabular*}
    \end{table}
    
    \begin{table}[ht]
        \caption{
            Detailed performance measurements for T-REX (with overlays) and ACSA depending on the number of levels, averaged over $\numprint{10000}$ random queries on the Great Britain network (above) and the Switzerland network (below).
            For level~$0$, we report the metrics for TB and CSA.
            Query times are shown as $\left[\si{\mu\second}\right]$, customization as $\left[ \mathrm{mm}{:}\mathrm{ss}\right]$.
            For each level, we report the imbalance (measured in $\%$) that led to the best query performance.
        }
        \label{tab:varinglevel:gbswiss}
        \centering
        \begin{tabular*}{\textwidth}{@{\,}r@{\extracolsep{\fill}}r@{\extracolsep{\fill}}r@{\extracolsep{\fill}}r@{\extracolsep{\fill}}r@{\extracolsep{\fill}}r@{\extracolsep{\fill}}r@{\extracolsep{\fill}}r@{\extracolsep{\fill}}r@{\extracolsep{\fill}}r@{\,}}
            \toprule
            \multirow{2}{*}{Levels} & \multicolumn{5}{c}{T-REX} & \multicolumn{3}{c}{ACSA} \\
            \cmidrule(){2-6} \cmidrule(){7-9}
            & Scan. trips & Rel. transfers & Query & Custo. & Imbal. & Query & Custo. & Imbal. \\
            \midrule
            $\numprint{0}$ & $\numprint{79054}$ & $\numprint{1388371}$ & $\numprint{18051}$ & -- & -- & $\numprint{43137}$ & -- & --  \\
            $\numprint{5}$ & $\numprint{12848}$ & $\numprint{155005}$ & $\numprint{2253}$ & \printMinuteSeconds{0}{1} & $\numprint{25}$ & $\numprint{26560}$ & \highlightCell \printMinuteSeconds{0}{5} & $\numprint{25}$ \\
            $\numprint{6}$ & $\numprint{10334}$ & $\numprint{108105}$ & $\numprint{1712}$ & \printMinuteSeconds{0}{1} & $\numprint{75}$  & $\numprint{19077}$ & \highlightCell \printMinuteSeconds{0}{5} & $\numprint{75}$ \\
            $\numprint{7}$ & $\numprint{9072}$ & $\numprint{64136}$ & $\numprint{1345}$ & \printMinuteSeconds{0}{1} & $\numprint{25}$  & $\numprint{13190}$ & \highlightCell \printMinuteSeconds{0}{5} & $\numprint{25}$ \\
            $\numprint{8}$ & $\numprint{8712}$ & $\numprint{50687}$ & $\numprint{1240}$ & \printMinuteSeconds{0}{1} & $\numprint{25}$  & $\numprint{10616}$ & \printMinuteSeconds{0}{6} & $\numprint{25}$ \\
            $\numprint{9}$ & $\numprint{8756}$ & $\numprint{49866}$ & $\numprint{1260}$ & \printMinuteSeconds{0}{2} & $\numprint{50}$  & $\numprint{9641}$ & \printMinuteSeconds{0}{7} & $\numprint{25}$  \\
            $\numprint{10}$ & $\numprint{8681}$ & $\numprint{42698}$ & $\numprint{1200}$ & \printMinuteSeconds{0}{3} & $\numprint{25}$  & \highlightCell $\numprint{9388}$ & \printMinuteSeconds{0}{8} & $\numprint{25}$  \\
            $\numprint{11}$ & \highlightCell $\numprint{8606}$ & \highlightCell $\numprint{40730}$ & \highlightCell $\numprint{1192}$ & \printMinuteSeconds{0}{4} & $\numprint{25}$  & $\numprint{9647}$ & \printMinuteSeconds{0}{9} & $\numprint{25}$ \\
            \noalign{\vskip 7pt}
            $\numprint{0}$ & $\numprint{32577}$ & $\numprint{309660}$ & $\numprint{4510}$ & -- & --  & $\numprint{3931}$ & -- & --  \\
            $\numprint{4}$ & $\numprint{9064}$ & $\numprint{66733}$ & $\numprint{1170}$ & \highlightCell < \printMinuteSeconds{0}{1} & $\numprint{25}$ & $\numprint{3710}$ & \highlightCell \printMinuteSeconds{0}{1} & $\numprint{50}$ \\
            $\numprint{5}$  & $\numprint{7938}$ & $\numprint{47363}$ & $\numprint{924}$ & \highlightCell < \printMinuteSeconds{0}{1} & $\numprint{25}$ & $\numprint{3039}$ & \highlightCell \printMinuteSeconds{0}{1} & $\numprint{25}$ \\
            $\numprint{6}$ & $\numprint{6985}$ & $\numprint{35989}$ & $\numprint{772}$ & \highlightCell < \printMinuteSeconds{0}{1} & $\numprint{25}$ & $\numprint{2412}$ & \printMinuteSeconds{0}{2} & $\numprint{25}$ \\
            $\numprint{7}$ & $\numprint{6710}$ & $\numprint{31571}$ & $\numprint{728}$ & \highlightCell < \printMinuteSeconds{0}{1} & $\numprint{25}$ & $\numprint{2048}$ & \printMinuteSeconds{0}{2} & $\numprint{25}$ \\
            $\numprint{8}$ & $\numprint{6824}$ & $\numprint{35134}$ & $\numprint{735}$ & \highlightCell < \printMinuteSeconds{0}{1} & $\numprint{50}$ & $\numprint{1965}$ & \printMinuteSeconds{0}{2} & $\numprint{25}$ \\
            $\numprint{9}$ & $\numprint{6688}$ & \highlightCell $\numprint{29671}$ & \highlightCell $\numprint{710}$ & \printMinuteSeconds{0}{1} & $\numprint{25}$  & \highlightCell $\numprint{1901}$ & \printMinuteSeconds{0}{3} & $\numprint{25}$ \\
            $\numprint{10}$ & $\numprint{6732}$ & $\numprint{33474}$ & $\numprint{746}$ & \printMinuteSeconds{0}{1} & $\numprint{25}$ & $\numprint{2009}$  & \printMinuteSeconds{0}{3} & $\numprint{25}$  \\
            $\numprint{11}$ & \highlightCell $\numprint{6470}$ & $\numprint{31171}$ & $\numprint{717}$ & \printMinuteSeconds{0}{1} & $\numprint{75}$  & $\numprint{2010}$ & \printMinuteSeconds{0}{3} & $\numprint{25}$  \\
            \bottomrule
        \end{tabular*}
    \end{table}
    
    \begin{table}[ht]
        \caption{
            Detailed performance measurements for T-REX (with overlays) and ACSA depending on the number of levels, averaged over $\numprint{10000}$ random queries on the Paris network (above) and the Berlin network (below).
            For level~$0$, we report the metrics for TB and CSA.
            Query times are shown as $\left[\si{\mu\second}\right]$, customization as $\left[ \mathrm{mm}{:}\mathrm{ss}\right]$.
            For each level, we report the imbalance (measured in $\%$) that led to the best query performance.
        }
        \label{tab:varinglevel:pabe}
        \centering
        \begin{tabular*}{\textwidth}{@{\,}r@{\extracolsep{\fill}}r@{\extracolsep{\fill}}r@{\extracolsep{\fill}}r@{\extracolsep{\fill}}r@{\extracolsep{\fill}}r@{\extracolsep{\fill}}r@{\extracolsep{\fill}}r@{\extracolsep{\fill}}r@{\extracolsep{\fill}}r@{\,}}
            \toprule
            \multirow{2}{*}{Levels} & \multicolumn{5}{c}{T-REX} & \multicolumn{3}{c}{ACSA} \\
            \cmidrule(){2-6} \cmidrule(){7-9}
            & Scan. trips & Rel. transfers & Query & Custo. & Imbal. & Query & Custo. & Imbal. \\
            \midrule
            0 & $\numprint{20574}$ & $\numprint{422123}$ & $\numprint{3526}$ & -- & -- & \highlightCell $\numprint{3280}$ & -- & -- \\
            5 & $\numprint{13304}$ & $\numprint{125124}$ & $\numprint{1826}$ & \highlightCell \printMinuteSeconds{0}{3} & $\numprint{25}$ & $\numprint{6581}$ & \highlightCell \printMinuteSeconds{0}{13} & $\numprint{25}$ \\
            6 & $\numprint{12665}$ & $\numprint{104719}$ & $\numprint{1693}$ & \printMinuteSeconds{0}{4} & $\numprint{25}$ & $\numprint{6315}$ & \printMinuteSeconds{0}{16} & $\numprint{25}$ \\
            7 & $\numprint{13168}$ & $\numprint{104414}$ & $\numprint{1747}$ & \printMinuteSeconds{0}{5} & $\numprint{50}$ & $\numprint{6312}$ & \printMinuteSeconds{0}{21} & $\numprint{25}$ \\
            8 & \highlightCell $\numprint{12123}$ & \highlightCell $\numprint{100890}$ & \highlightCell $\numprint{1641}$ & \printMinuteSeconds{0}{5} & $\numprint{25}$ & $\numprint{6569}$ & \printMinuteSeconds{0}{22} & $\numprint{25}$ \\
            \noalign{\vskip 7pt}
            0 & $\numprint{21302}$ & $\numprint{256373}$ & $\numprint{2931}$ & -- & -- & \highlightCell $\numprint{2415}$ & -- & -- \\
            5 & $\numprint{8255}$ & $\numprint{45089}$ & $\numprint{929}$ & \highlightCell < \printMinuteSeconds{0}{1} & $\numprint{25}$ & $\numprint{2866}$ & \highlightCell \printMinuteSeconds{0}{3} & $\numprint{25}$ \\
            6 & $\numprint{7576}$ & $\numprint{39025}$ & $\numprint{837}$ & \highlightCell < \printMinuteSeconds{0}{1} & $\numprint{50}$ & $\numprint{2620}$ & \highlightCell \printMinuteSeconds{0}{3} & $\numprint{25}$ \\
            7 & $\numprint{7831}$ & $\numprint{35087}$ & $\numprint{833}$ & \printMinuteSeconds{0}{1} & $\numprint{25}$ & $\numprint{2493}$ & \highlightCell \printMinuteSeconds{0}{3} & $\numprint{25}$ \\
            8 & \highlightCell $\numprint{7212}$ & \highlightCell $\numprint{30040}$ & \highlightCell $\numprint{786}$ & \printMinuteSeconds{0}{1} & $\numprint{25}$ & $\numprint{2547}$ & \highlightCell \printMinuteSeconds{0}{3} & $\numprint{25}$ \\
            \bottomrule
        \end{tabular*}
    \end{table}
    \FloatBarrier

    \subsection{Comparison to FLASH-TB}
    \label{app:experiments:flash-tb}
    \Cref{tab:flashtbvstrex} compares the query performance and preprocessing time of T-REX to the FLASH-TB configuration with moderate space reported in \Cref{tab:uniformrandomquery:country}.
    This configuration uses $k=8$ cells and therefore requires one byte per transfer to store the flags, which matches the one byte required for the rank in T-REX without overlays.
    On the smaller networks of Switzerland and Paris, FLASH-TB is faster than T-REX by around a factor of~$2$ and has a smaller search space.
    This indicates that the directional pruning effect achieved via Arc-Flags is stronger than the hierarchical LCL pruning.
    On Germany, T-REX has a smaller search space and is faster by a factor of~$\numprint{1.8}$.
    This is likely due to the \textit{coning} effect~\cite{Bau09}, which is inherent to unidirectional Arc-Flags:
    Once the search reaches the target cell, all transfers are flagged.
    With~$k=8$ cells, this means that the search is not sped up at all in a significant fraction of the network.

    \begin{table}[H]
		\caption{
            Performance of T-REX and FLASH-TB with $k=8$ cells, averaged over $\numprint{10000}$ random queries.
        }
		\label{tab:flashtbvstrex}
		\centering
		\begin{tabular*}{\textwidth}{@{\,}l@{\extracolsep{\fill}}l@{\extracolsep{\fill}}r@{\extracolsep{\fill}}r@{\extracolsep{\fill}}r@{\extracolsep{\fill}}r@{\,}}
			\toprule
            \multirow{2}{*}{Network} & \multirow{2}{*}{Algorithm} & Scanned & Relaxed & Query & Preprocessing\\
            & & trips & transfers & $\left[\si{\mu\second}\right]$ & $\left[ \mathrm{hh}{:}\mathrm{mm}{:}\mathrm{ss}\right]$\\
			\midrule
			\multirow{3}{*}{Germany} & T-REX (overlays)       & $\numprint{27998}$ &  $\numprint{160570}$ & $\numprint{4442}$  & \printTime{0}{0}{53}  \\
            & T-REX (no overlays) & $\numprint{16386}$ & $\numprint{518617}$ & $\numprint{7822}$ & \printTime{0}{0}{53}\\
            & FLASH-TB (moderate) & $\numprint{32084}$ & $\numprint{395224}$ & $\numprint{7871}$ & \printTime{29}{39}{30} \\
            \noalign{\vskip 6pt}
			\multirow{3}{*}{Switzerland} & T-REX (overlays)        & $\numprint{6688}$ &  $\numprint{29671}$ & $\numprint{710}$ & \printTime{0}{0}{5}  \\
            & T-REX (no overlays) & $\numprint{4174}$ & $\numprint{61659}$ & $\numprint{921}$ & \printTime{0}{0}{5}\\
            & FLASH-TB (moderate) & $\numprint{3025}$ & $\numprint{22397}$ & $\numprint{454}$ & \printTime{0}{23}{53} \\
            \noalign{\vskip 6pt}
            \multirow{3}{*}{Paris} & T-REX (overlays)       & $\numprint{12123}$ &  $\numprint{100890}$ & $\numprint{1641}$ & \printTime{0}{0}{11} \\
            & T-REX (no overlays) &  $\numprint{6858}$ & $\numprint{188500}$ & $\numprint{1877}$ & \printTime{0}{0}{11} \\
            & FLASH-TB (moderate) & $\numprint{3253}$ & $\numprint{76122}$  & $\numprint{899}$ & \printTime{0}{29}{25} \\
			\bottomrule
		\end{tabular*}
	\end{table}
    \FloatBarrier

    \subsection{Geo-Rank Queries}
    \label{app:experiments:georank}
    \begin{figure}[t]
		\centering
        \begin{tikzpicture}

\begin{axis}[
    width=\textwidth,
    height=8cm,
    ymode=log,
    grid=major,
    grid style={dashed,gray!60},
    xlabel={Geo rank},
    ylabel={Query [$\mu s$]},
    boxplot/box extend=0.25,
    boxplot/draw direction=y,
    xtick={1,...,17},
    xticklabels={4,5,6,7,8,9,10,11,12,13,14,15,16,17,18,19,20},
    legend style={
        at={(0.02,0.98)},
        anchor=north west,
        draw=black
    },
]

\foreach \col [count=\i from 1] in {4,5,6,7,8,9,10,11,12,13,14,15,16,17,18,19,20}{
\addplot+[
    boxplot,
    boxplot/draw position=\i-0.22,
    boxplot/every median/.style={thick,black!50},
    fill=black!20,
    draw=black!40,
    mark options={black!20},
    forget plot
]
table[y=\col, col sep=comma, header=true]
{plot_scripts/data/georank_europe/tb.europe.csv};
}

\foreach \col [count=\i from 1] in {4,5,6,7,8,9,10,11,12,13,14,15,16,17,18,19,20}{
\addplot+[
    boxplot,
    boxplot/draw position=\i+0.22,
    boxplot/every median/.style={thick,blue},
    fill=blue!30,
    draw=blue!50,
    mark options={blue!50},
    forget plot
]
table[y=\col, col sep=comma, header=true]
{plot_scripts/data/georank_europe/trex.14.0.5.csv};
}

\addlegendimage{area legend,fill=black!20,draw=black!20}
\addlegendentry{TB}

\addlegendimage{area legend,fill=blue!50,draw=blue!50}
\addlegendentry{T-REX}

\end{axis}

\end{tikzpicture}
		\caption{ 
        Query times of TB and T-REX (with overlays) for geo-rank queries on the Europe instance (with 14 levels and $50\,\%$ imbalance).
        For $\numprint{1000}$ random source stops and departure times, queries with rank~$i$ are to the $2^{i}$-th closest stop.
        Boxes denote the interquartile range, with the mean marked by a line.
        The whiskers extend to the $\numprint{1.5}\times$ value of the first and third quartile, respectively.
        Note that the y-axis is in logarithmic scale.
        }
		\label{fig:geo-ran:europe}
	\end{figure}

    To analyze how the query time is affected by the distance between the source and target stop, we evaluate \emph{geo-rank queries}.
    Here, $\numprint{1000}$ source stops and departure times are chosen uniformly at random, as before.
    For a given source stop~$\sourceStop$, we select different target stops of varying geographical distance to~$\sourceStop$.
    To do so, we sort all stops according to their geographical distance from~$\sourceStop$.
    For each \emph{geo-rank} $r \in \{0, \dots, \lfloor\log|\stops|\rfloor \}$, we choose the $2^{r}$-th closest stop to~$\sourceStop$ as the target stop.

	Figure~\ref{fig:geo-ran:europe} plots the performance on the Europe dataset.
	T-REX outperforms TB for all ranks except for the lowest rank~$r=4$, where both algorithms require just under $\numprint{1}\,\si{\milli\second}$.
    This is to be excepted, as these queries are likely to stay within the same lowest-level cells, and therefore T-REX cannot prune the search space.
    The speedup of T-REX over TB increases with the geo-rank.
	Even outlier queries are processed more efficiently by T-REX, requiring only $\numprint{55}$--$\numprint{60}\,\si{\milli\second}$; an order of magnitude faster than TB.
    \FloatBarrier

    \subsection{Profile Queries}
    \label{app:experiments:profile}
    Table~\ref{tab:uniformprofile} reports measurements for $\numprint{1000}$ random profile queries, with the departure time range chosen as the entire first day of the service period.
	The speedup of T-REX over TB is similar to that observed for fixed departure time queries.
	It is particularly high on Europe, with a factor of $\numprint{13}$ and an average runtime of just under $\numprint{100}\,\si{\milli\second}$.

	\begin{table}[t]
		\caption{
    		Performance of TB and T-REX, averaged over $\numprint{1000}$ random $24$-hour profile queries.
		}
		\label{tab:uniformprofile}
		\centering
		\begin{tabular*}{\textwidth}{@{\,}l@{\extracolsep{\fill}}l@{\extracolsep{\fill}}c@{\extracolsep{\fill}}r@{\extracolsep{\fill}}r@{\extracolsep{\fill}}r@{\extracolsep{\fill}}r@{\,}}
			\toprule
            \multirow{2}{*}{Network} & \multirow{2}{*}{Algorithm} & \multirow{2}{*}{Overlays} & Scanned & Relaxed & \multirow{2}{*}{Journeys} & Query \\
            & & & trips & transfers & & $\left[\si{\mu\second}\right]$ \\
			\midrule
			\multirow{2}{*}{Europe} & TB & $\circ$ & $\numprint{3544302}$ & $\numprint{69398487}$ & $\numprint{12.6}$ & $\numprint{1155254}$  \\
			& T-REX & $\circ$& $\numprint{151768}$ & $\numprint{7747088}$ & $\numprint{12.6}$ & $\numprint{139934}$  \\
            & T-REX & $\bullet$& $\numprint{196201}$ & $\numprint{1724965}$ & $\numprint{12.6}$ & $\numprint{56941}$ \\
            \noalign{\vskip 6pt}
			\multirow{2}{*}{Germany} & TB & $\circ$& $\numprint{1242125}$ & $\numprint{17510793}$ & $\numprint{10.8}$ & $\numprint{303435}$ \\
			& T-REX & $\circ$& $\numprint{95820}$ & $\numprint{3062589}$ & $\numprint{10.8}$ & $\numprint{58044}$ \\
            & T-REX & $\bullet$& $\numprint{110149}$ & $\numprint{637628}$ & $\numprint{10.8}$ & $\numprint{25035}$ \\
            \noalign{\vskip 6pt}
			\multirow{2}{*}{Great Britain} & TB & $\circ$& $\numprint{358507}$ & $\numprint{6579673}$ & $\numprint{11.0}$ & $\numprint{91124}$ \\
			& T-REX & $\circ$& $\numprint{32484}$ & $\numprint{1171340}$ & $\numprint{11.0}$ & $\numprint{20495}$ \\
            & T-REX & $\bullet$& $\numprint{43196}$ & $\numprint{261925}$ & $\numprint{11.0}$ & $\numprint{8277}$ \\
            \noalign{\vskip 6pt}
			\multirow{2}{*}{Switzerland} & TB & $\circ$& $\numprint{216323}$ & $\numprint{2085029}$ & $\numprint{24.2}$ & $\numprint{34851}$ \\
			& T-REX & $\circ$& $\numprint{36019}$ & $\numprint{506693}$  & $\numprint{24.2}$ & $\numprint{9955}$ \\
            & T-REX & $\bullet$& $\numprint{43082}$ & $\numprint{197340}$ & $\numprint{24.2}$ & $\numprint{6050}$ \\
            \noalign{\vskip 6pt}
			\multirow{2}{*}{Paris} & TB  & $\circ$ & $\numprint{181380}$ & $\numprint{4772478}$ & $\numprint{32.6}$ & $\numprint{37119}$\\
			& T-REX & $\circ$& $\numprint{95532}$ & $\numprint{3456960}$ & $\numprint{32.6}$ & $\numprint{34656}$ \\
            & T-REX & $\bullet$& $\numprint{119527}$ & $\numprint{1798219}$ & $\numprint{32.6}$ & $\numprint{20701}$ \\
            \noalign{\vskip 6pt}
			\multirow{2}{*}{Berlin} & TB & $\circ$& $\numprint{160394}$ & $\numprint{1848622}$ & $\numprint{29.4}$ & $\numprint{24985}$  \\
			& T-REX & $\circ$& $\numprint{45912}$ & $\numprint{876923}$ & $\numprint{29.4}$ & $\numprint{14750}$ \\
            & T-REX & $\bullet$& $\numprint{60938}$ & $\numprint{326928}$ & $\numprint{29.4}$ & $\numprint{8091}$ \\
			\bottomrule
		\end{tabular*}
	\end{table}

\end{document}